\newtheorem{theorem}{Theorem}
\newtheorem{lemma}[theorem]{Lemma}
\newtheorem{corollary}[theorem]{Corollary}
\newcommand{\cancel}[1] {}
\newcommand{\eps}{\varepsilon}
\newcommand{\real}{\mathbb{R}}
\DeclareMathOperator{\conv}{\mathrm{conv}}
\DeclareMathOperator{\seg}{\mathrm{SEG}}
\title{Simplification of Trajectory Streams\thanks{This research is supported by the Research Grants Council, Hong Kong, China (project no.~16208923).}}
\author{Siu-Wing Cheng\thanks{Department of Computer Science and Engineering, HKUST, Clear Water Bay, Hong Kong. Email: scheng@cse.ust.hk. ORCID: https://orcid.org/0000-0002-3557-9935}
\and 
Haoqiang Huang\thanks{Department of Computer Science and Engineering, HKUST, Clear Water Bay, Hong Kong. Email: haoqiang.huang@connect.ust.hk.  ORCID: https://orcid.org/0000-0003-1497-6226}
\and
Le Jiang\thanks{USTC, Hefei, China.  Email: lejiang@mail.ustc.edu.cn.  ORCID: https://orcid.org/0009-0009-8379-0266}}
\begin{document}
	
\maketitle
	
\begin{abstract}
While there are software systems that simplify trajectory streams on the fly, few curve simplification algorithms with quality guarantees fit the streaming requirements.  We present streaming algorithms for two such problems under the Fr\'{e}chet distance $d_F$ in $\mathbb{R}^d$ for some constant $d \geq 2$.  

Consider a polygonal curve $\tau$ in $\mathbb{R}^d$ in a stream.  We present a streaming algorithm that, for any $\varepsilon\in (0,1)$ and $\delta > 0$, produces a curve $\sigma$ such that $d_F(\sigma,\tau[v_1,v_i])\le (1+\varepsilon)\delta$ and $|\sigma|\le 2\,\mathrm{opt}-2$, where $\tau[v_1,v_i]$ is the prefix in the stream so far, and
$\mathrm{opt} = \min\{|\sigma'|: d_F(\sigma',\tau[v_1,v_i])\le \delta\}$.  Let $\alpha = 2(d-1){\lfloor d/2 \rfloor}^2 + d$.  The working storage is $O(\varepsilon^{-\alpha})$.  Each vertex is processed in $O(\varepsilon^{-\alpha}\log\frac{1}{\eps})$ time for $d \in \{2,3\}$ and $O(\varepsilon^{-\alpha})$ time for $d \geq 4$ .  Thus, the whole $\tau$ can be simplified in $O(\eps^{-\alpha}|\tau|\log\frac{1}{\eps})$ time.  Ignoring polynomial factors in $1/\eps$, this running time is a factor $|\tau|$ faster than the best static algorithm that offers the same guarantees.

We present another streaming algorithm that, for any integer $k \geq 2$ and any $\varepsilon \in (0,\frac{1}{17})$, maintains a curve $\sigma$ such that $|\sigma| \leq 2k-2$ and $d_F(\sigma,\tau[v_1,v_i])\le (1+\varepsilon) \cdot \min\{d_F(\sigma',\tau[v_1,v_i]): |\sigma'| \leq k\}$, where $\tau[v_1,v_i]$ is the prefix in the stream so far.  The working storage is $O((k\varepsilon^{-1}+\varepsilon^{-(\alpha+1)})\log \frac{1}{\varepsilon})$.  Each vertex is processed in $O(k\varepsilon^{-(\alpha+1)}\log^2\frac{1}{\varepsilon})$ time for $d \in \{2,3\}$ and $O(k\varepsilon^{-(\alpha+1)}\log\frac{1}{\eps})$ time for $d \geq 4$.
\end{abstract}

\section{Introduction} 

The pervasive use of GPS sensors has enabled tracking of moving objects.  For example, a car fleet can use real-time information about its vehicles to deploy them in response to dynamic demands.   The sensor periodically samples the 
location of the moving object and sends it as a vertex to the remote cloud server for storage and processing.  The remote server interprets the sequence of vertices received as a polygonal curve (trajectory).

For a massive stream, it has been reported~\cite{CWT2006,CXF2012,LMZWH2017,LZSSKJ2015} that sending all vertices uses too much network bandwidth and storage at the server, and it may aggravate issues like out-of-order and duplicate data points.  Software systems have been built for the sensor side to simplify trajectory streams on the fly (e.g.~\cite{LJMZH2019,LMJHW2021,LMZWH2017,WLCZ2021,ZDY2018}).  A local buffer is used.  Every incoming vertex in the stream triggers a new round of computation that uses only the incoming vertex and the data in the local buffer.  At the end of a round, these systems may determine the next vertex and send it to the cloud server, but they may also send nothing.  
%The next vertex sent to the cloud server may not be the new vertex collected.  
%In some works~\cite{LJMZH2019,LMZWH2017}, they do not store any point in the stream in the local buffer.  

It is important to keep the local buffer small~\cite{LJMZH2019,LMJHW2021,LMZWH2017,WLCZ2021,ZDY2018}.  It means a small working storage and less computation in processing the next vertex in the stream, which enables real-time response and requires less computing power.  In theoretical computer science, streaming algorithms have three goals~\cite{M2005}.  Let $n$ be the number of items in the stream so far.  First, the working storage should be $o(n)$ if not $\mathrm{polylog}(n)$.  Second, an item in the stream should be processed in $\mathrm{polylog}(n)$ time or at worst $o(n)$ time because items are arriving continuously.  Third, although the best solution is unattainable as the entire input is not processed as a whole, the solution should be provably satisfactory.  Let $\tau$ be the curve in the stream.  The abovementioned software systems only deal with some local error measures between the simplified curve $\sigma$ and $\tau$. 
 We want to provide guarantees on the size of $\sigma$ and the global similarity between $\sigma$ and $\tau$.

A popular similarity measure is the \emph{Fr\'echet distance}~\cite{Godau1991ANM}. Let $\rho_\tau: [0,1]\rightarrow \real^d$ be a \emph{parameterization} of $\tau$ such that, as $t$ increases from 0 to 1, $\rho_\tau(t)$ moves from the beginning of $\tau$ to its end without backtracking.  It is possible that $\rho_\tau(t) = \rho_\tau(t')$ for two distinct $t, t' \in [0,1]$.   We can similarly define a parameterization $\rho_\sigma$ for $\sigma$. These parameterizations induce a \emph{matching} $\mathcal{M}$ between $\rho_\sigma(t)$ and $\rho_\tau(t)$ for all $t \in [0,1]$.  A point may have multiple matching partners.  The Fr\'echet distance is $d_F(\sigma,\tau) = \inf_{\mathcal{M}} \max_{t \in [0,1]} d(\rho_\sigma(t),\rho_\tau(t))$. We call a matching that realizes the Fr\'echet distance a \emph{Fr\'echet matching}.
	
We study two problems.  The \emph{$\delta$-simplification problem} is to compute, for a given $\delta > 0$, a curve $\sigma$ of the minimum size (number of vertices) such that $d_F(\sigma,\tau) \leq \delta$.   The \emph{$k$-simplification problem} is to compute, for a given integer $k \geq 2$, a curve $\sigma$ of size $k$ that minimizes $d_F(\sigma,\tau)$. 

Given a polygonal curve $\tau = (v_1,v_2,\ldots)$, $|\tau|$ denotes its number of vertices, and for any $i < j$, $\tau[v_i, v_j]$ denotes the subcurve $(v_i,v_{i+1},\ldots,v_j)$.  For a subset $P \subset \real^d$, $d(x, P) = \inf_{p\in P} d(x, p)$.  Given two points $x, y \in \real^d$, $xy$ denotes the oriented line segment from $x$ to $y$.

\subsection{Related works} 

\noindent {\bf Static $\pmb{\delta}$-simplification.} Let $n = |\tau|$.  In $\real$, if the vertices of $\sigma$ are restricted to lie on $\tau$ (anywhere), then $|\sigma|$ can be minimized in $O(n)$ time~\cite{van2019global}; if the vertices of $\sigma$ must be a subset of those of $\tau$, it takes $O(n)$ time to compute $\sigma$ such that $|\sigma|$ is at most two more than the minimum~\cite{driemel2016clustering}.  In $\real$, there is a data structure~\cite{VO2023} that, for any given $\delta$, reports a curve $\sigma$ with the minimum $|\sigma|$ such that the vertices of $\sigma$ lie on $\tau$ (anywhere) and $d_F(\sigma,\tau) \leq \delta$.  The query time is $O(|\sigma|)$, and the preprocessing time is $O(n)$.

In $\real^2$, a curve $\sigma$ with minimum $|\sigma|$ such that $d_F(\sigma,\tau) \leq \delta$ can be computed in $O(n^2\log^2 n)$  time~\cite{guibas1993approximating}. In $\real^d$ for $d \geq 3$, if the vertices of $\sigma$ must be a subset of those of $\tau$, then $|\sigma|$ can be minimized in $O(n^3)$ time~\cite{bringmann2019polyline,van2019global,van2018optimal}. If there is no restriction on the vertices of $\sigma$, it has been recently shown that $|\sigma|$ can be minimized in $O\bigl((|\sigma|n)^{O(d|\sigma|)}\bigr)$ time~\cite{cheng2023solving}.
	
Faster algorithms have been developed by allowing inexact output.  Let $\kappa(\tau, r)$ be the minimum simplified curve size for an error $r$.  In $\real^2$, it was shown~\cite{agarwal2005near} that for any $\delta > 0$, a curve $\sigma$ can computed in $O(n\log n)$ time such that $d_F(\sigma,\tau)\le \delta$ and $|\sigma|\le \kappa(\tau, \delta/2)$.  It is possible that $\kappa(\tau, \delta/2) \gg \kappa(\tau, \delta)$.  A better control of the curve size has been obtained later.  In $\real^d$, it was shown~\cite{van2019global} that for any $\eps \in (0,1)$ and $\delta > 0$, a curve $\sigma$ can be constructed in $O(\eps^{2-2d}n^2\log n\log\log n\bigr)$ time such that $d_F(\sigma, \tau)\le (1+\varepsilon)\delta$ and $|\sigma|\le 2\kappa(\tau, \delta)-2$.\footnote{In~\cite{van2019global}, the first and last vertices of $\sigma$ are restricted to be the same as those of $\tau$.}  Later, it was shown~\cite{cheng2023curve} that for any $\alpha,\varepsilon\in (0,1)$ and $\delta > 0$, a curve $\sigma$ can be computed in $\tilde{O}\bigl(n^{O(1/\alpha)}\cdot(d/(\alpha\varepsilon))^{O(d/\alpha)}\bigr)$ time such that $d_F(\sigma, \tau)\le (1+\varepsilon)\delta$ and $|\sigma|\le (1+\alpha)\cdot\kappa(\tau, \delta)$.  

The algorithms above for $d \geq 2$ do not fit the streaming setting~\cite{agarwal2005near,bringmann2019polyline,cheng2023curve,cheng2023solving,guibas1993approximating,van2019global,van2018optimal}. In~\cite{agarwal2005near}, the simplified curve $\sigma$ is a subsequence of the vertices in $\tau$, and the vertices of $\sigma$ are picked from $\tau$ in a traversal of $\tau$.  Given the last pick $v_i$, for any $j > i$, whether $v_j$ should be included in $\sigma$ is determined by examining the subcurve $\tau[v_i,v_j]$.  This requires an $O(n)$ working storage which is too large.  If the algorithms in~\cite{bringmann2019polyline,cheng2023curve,cheng2023solving,guibas1993approximating,van2019global,van2018optimal} are used in the streaming setting, the processing time per vertex would be
$O(n)$ or more which is too high.

\vspace{5pt}

\noindent {\bf Static $\pmb{k}$-simplification.}  In $\real$, for any given $k$, the data structure for $\delta$-simplification in~\cite{VO2023} can be queried in $O(k)$ time to report a curve $\sigma$ such that $|\sigma|=k$, the vertices of $\sigma$ lie on $\tau$ (anywhere), and $d_F(\sigma,\tau)$ is minimized.  In $\real^d$, if the vertices of $\sigma$ must be a subset of those of $\tau$, it was shown~\cite{Godau1991ANM} that a solution $\sigma$ can be computed in $O(n^4\log n)$ time such that $d_F(\sigma,\tau) \le 7 \cdot \min\bigl\{d_F(\sigma',\tau) : |\sigma'| \leq k,\,\text{no restriction on the vertices of $\sigma'$}\bigr\}$.  The factor 7 can be reduced to 4 by a better analysis~\cite{agarwal2005near}.  Nevertheless, if this algorithm is used
%Even if one assumes that this algorithm works 
in the streaming setting, the vertex processing time would be $O(n^3\log n)$ or more, which is way too high.

\vspace{5pt}

\noindent {\bf Streaming line simplification.}  A \emph{line simplification} $\sigma$ of a stream $\tau$ is a subsequence of the vertices such that the first and last vertices of $\sigma$ are the first and last vertices in the stream so far.  In $\real^2$, for any integer $k \geq 2$ and any $\eps \in (0,1)$, it was shown~\cite{abam2007streaming,abamDCG} how to maintain $\sigma$ such that $|\sigma| \leq 2k$ and $d_F(\sigma,\tau) \leq (4\sqrt{2}+\varepsilon)\cdot\mathrm{opt}_k$, where $\mathrm{opt}_k = \min \{d_F(\sigma',\tau): \text{line simplification $\sigma'$ of $\tau$}, \, |\sigma'| \leq k\}$. The working storage is $O(k^2 + k\eps^{-1/2})$.  Each vertex is processed in $O(k\log\frac{1}{\eps})$ amortized time.  There are also streaming software systems for this problem that minimize some local error measures (e.g.~\cite{GFX2024,MHPLPR2011,MOHLR2014,PPS2006}); however, they do not provide any guarantee on the global similarity between $\tau$ and the output curve.

\vspace{5pt}

\noindent {\bf Discrete Fr\'{e}chet distance.}  The \emph{discrete} Fr\'{e}chet distance $d_{\text{dF}}(\sigma,\tau)$ is obtained with the restriction that the parameterizations $\rho_\sigma$ and $\rho_\tau$ must match the vertices of $\sigma$ to those of $\tau$ and vice versa. Note that $d_{\text{dF}}(\sigma,\tau) \geq d_F(\sigma,\tau)$ and $d_{\text{dF}}(\sigma,\tau) \gg d_F(\sigma,\tau)$ in the worst case. 

There are several results in $\real^3$~\cite{bereg2008simplifying}.  A curve $\sigma$ with the minimum $|\sigma|$ such that  $d_{\text{dF}}(\sigma,\tau) \leq \delta$ can be computed in $O(n\log n)$ time; if the vertices of $\sigma$ must be a subset of those of $\tau$, the computation time increases to $O(n^2)$.  On the other hand, if $k$ is given instead of $\delta$, a curve $\sigma$ such that $|\sigma|=k$ and $d_{\text{dF}}(\sigma,\tau)$ is minimized can be computed in $O(kn\log n\log (n/k))$ time; if the vertices of $\sigma$ must a subset of those of $\tau$, the computation time increases to $O(n^3)$.

In $\real^d$, a streaming $k$-simplification algorithm was proposed in~\cite{driemel2019sublinear} with guarantees on $d_{\text{dF}}(\sigma,\tau)$, where $\tau$ is the curve seen in the stream so far.  It was shown that for any integer $k \geq 2$ and any $\eps \in (0,1)$, a curve $\sigma$ can be computed such that $|\sigma| \leq k$ and $d_{\text{dF}}(\sigma,\tau) \leq 8 \cdot \min\{d_{\text{dF}}(\sigma',\tau) : |\sigma'| \leq k\}$.  The working storage is $O(kd)$.  

Improved results have been obtained subsequently~\cite{FF2023}.  For the streaming $\delta$-simplification problem, there is a streaming algorithm in $\real^d$ that, for any $\gamma > 1$, computes a curve $\sigma$ such that $d_{\text{dF}}(\sigma,\tau) \leq \delta$ and $|\sigma| \leq \kappa(\tau,\delta/\gamma)$.  This algorithm relies on a streaming algorithm for maintaining a $\gamma$-approximate minimum closing ball (MEB) of points in $\real^d$; the processing time per vertex is asymptotically the same as the $\gamma$-approximate MEB streaming algorithm.  There are two streaming $k$-simplification algorithms in~\cite{FF2023}.  The first one uses $O(\frac{1}{\eps}kd\log\frac{1}{\eps}) + O(\eps)^{-(d+1)/2}\log^2 \frac{1}{\eps}$ working storage and maintains a curve $\sigma$ such that $|\sigma| \leq k$ and $d_{\text{dF}}(\sigma,\tau) \leq (1+\eps) \cdot \min\{d_{\text{dF}}(\sigma',\tau) : |\sigma'| \leq k\}$.  The second algorithm uses $O(\frac{1}{\varepsilon}kd\log \frac{1}{\varepsilon})$ working storage and maintains a curve $\sigma$ such that $|\sigma| \leq k$ and $d_{\text{dF}}(\sigma,\tau) \leq (1.22+\eps) \cdot \min\{d_{\text{dF}}(\sigma',\tau) : |\sigma'| \leq k\}$.

\subsection{Our results} 

We present streaming algorithms for the $\delta$-simplification and $k$-simplification problems in $\real^d$ for $d \geq 2$ under the Fr\'{e}chet distance.  Let $\tau = (v_1,v_2,\ldots)$ be the input stream.

\vspace{5pt}

\noindent {\bf Streaming $\delta$-simplification.}   Our algorithm outputs vertices occasionally and maintains some vertices in the working storage.  The simplified curve $\sigma$ for the prefix $\tau[v_1,v_i]$ in the stream so far consists of vertices that have been output and vertices in the working storage.  Vertices that have been output cannot be modified, so they form a prefix of all simplified curves to be produced in the future for the rest of the stream.

For any $\eps \in (0,1)$ and any $\delta > 0$, our algorithm produces a curve $\sigma$ for the prefix $\tau[v_1,v_i]$ in the stream so far such that $d_F(\sigma,\tau[v_1,v_i])\le (1+\eps)\delta$ and $|\sigma|\le 2\kappa(\tau[v_1,v_i], \delta)-2$. 
Let $\alpha = 2(d-1){\lfloor d/2 \rfloor}^2 + d$.
%, and let $\beta = 2(d-1){\lfloor d/2 \rfloor}^3 + d$.
The working storage is $O(\varepsilon^{-\alpha})$.  Each vertex in the stream is processed in $O(\eps^{-\alpha}\log\frac{1}{\eps})$ time for $d \in \{2,3\}$ and $O(\eps^{-\alpha})$ time for $d \geq 4$.  If our algorithm is used in the static case, the running time on a curve of size $n$ is $O(\eps^{-\alpha}n\log\frac{1}{\eps})$. Ignoring polynomial factors in $1/\eps$, 
our time bound is a factor $n$ smaller than the $O(\eps^{2-2d}n^2\log n \log\log n)$ running time of the best static algorithm that achieves the same error and size bounds~\cite{van2019global}. 

Intuitively, our streaming algorithm finds a line segment that stabs as many balls as possible that are centered at consecutive vertices of $\tau$ with radii $(1+O(\eps))\delta$. If such a line segment ceases to exist upon the arrival of a new vertex, we start a new line segment.  Concatenating these segments forms the simplified curve.  For efficiency, we approximate each ball by covering it with grid cells of $O(\eps\delta)$ width.  In~\cite{van2019global}, vertex balls and their covers by grid cells of $O(\eps\delta)$ width are also used.  Connections between all pairs of balls are computed to create a graph such that the simplified curve corresponds to a shortest path in the graph.  In contrast, we maintain some geometric structures so that we can read off the next line segment from them.  We prove an $O(\eps^{-\alpha})$ bound on the total size of these structures, which yields the desired working storage and processing time per vertex.

\vspace{4pt}

\noindent {\bf Streaming $\pmb{k}$-simplification.}  A memory budget may cap the size of the simplified curve, motivating the streaming $k$-simplification problem.  For example, in some wildlife tracking applications, the location-acquisition device is not readily accessible after deployment, and data is only offloaded from it after an extended period~\cite{LZSSKJ2015}. 

Our algorithm maintains the simplified curve $\sigma$ in the working storage for the prefix $\tau[v_1,v_i]$ in the stream so far.
For any $k \geq 2$ and any $\varepsilon \in(0,\frac{1}{17})$, our algorithm guarantees that $|\sigma| \leq 2k-2$ and $d_F(\sigma,\tau[v_1,v_i])\le (1+\varepsilon) \cdot \min\{d_F(\sigma',\tau[v_1,v_i]) : |\sigma'| \leq k\}$.  
Recall that $\alpha = 2(d-1){\lfloor d/2 \rfloor}^2 + d$.  
The working storage is  $O((k\varepsilon^{-1}+\varepsilon^{-(\alpha+1)})\log \frac{1}{\varepsilon})$.  Each vertex in the stream is processed in $O(k\varepsilon^{-(\alpha+1)}\log^2\frac{1}{\varepsilon})$ time for $d \in \{2,3\}$ and  $O(k\varepsilon^{-(\alpha+1)}\log\frac{1}{\varepsilon})$ time for $d \geq 4$.  We first simplify as in the case of $\delta$-simplification.  When the simplified curve reaches a size of $2k-1$, the key idea is to run our $\delta$-simplification algorithm with a suitable error tolerance to bring its size down to $2k-2$. 

There is only one prior streaming $k$-simplification algorithm~\cite{abam2007streaming,abamDCG}. It works in $\real^2$ and requires the vertices of the simplified curve $\sigma$ to be a subset of those of $\tau$.  It uses $O(k^2 + k\eps^{-1/2})$ working storage, processes each vertex in $O(k\log\frac{1}{\eps})$ amortized time, and guarantees that $|\sigma| \leq 2k$ and $d_F(\sigma,\tau) \leq (4\sqrt{2}+\eps)\cdot \mathrm{optimum}$ under the requirement that the vertices of $\sigma$ must be a subset of those of $\tau$.  For $d=2$, our algorithm uses $O((k\eps^{-1} + \eps^{-5})\log\frac{1}{\eps})$ working storage, processes each vertex in $O(k\eps^{-5}\log^2\frac{1}{\eps})$ worst-case time, and guarantees that $|\sigma| \leq 2k-2$ and $d_F(\sigma,\tau) \leq (1+\eps)\cdot \mathrm{optimum}$.  Ignoring the restriction on the vertices of $\sigma$, we offer a better approximation ratio for $d_F(\sigma,\tau)$.  Although our vertex processing time bound is larger, it is worst-case instead of amortized.  The comparison between working storage depends on the relative magnitudes of $k$ and $1/\eps$.

\section{Streaming $\pmb{\delta}$-simplification}

Given a subset $X \subseteq \real^d$, we use $\conv(X)$ to denote the convex hull of $X$.  For any point $x \in \real^d$, let $B_x$ denote the $d$-ball centered at $x$ with radius $\delta$. 
Consider the infinite $d$-dimenisonal grid with $x$ as a grid vertex and side length $\eps\delta/(2\sqrt{d})$.  Let $G_x$ be the subset of grid cells that intersect the ball centered at $x$ with radius $(1+\eps/2)\delta$.  We say that an oriented line segment $s$ \emph{stabs the objects $O_1,\ldots, O_m$ in order} if there exist points $x_i \in s \cap O_i$ for $i \in [m]$ such that $x_1,\ldots,x_m$ appear in this order along $s$~\cite{guibas1993approximating}.

\subsection{Algorithm}

The high-level idea is to find the longest sequence $v_1,v_2,\ldots,v_i$ of vertices such that there is a segment $s_1$ that stabs $B_{v_a}$ for $a \in [i]$ in order.  Then, restart to find the longest sequence $v_{i+1},\ldots,v_j$ such that there is a segment $s_2$ that stabs $B_{v_a}$ for $a \in [i+1,j]$ in order.  Repeating this way gives a sequence of segments $s_1, s_2, \ldots$.  Concatenating these segments gives the simplified curve.  The problem is the large working storage: a long vertex sequence may be needed to determine a line segment.  We use several ideas to overcome this problem.  

First, we approximate $B_{v_a}$ by $\conv(G_{v_a})$ and find a line segment that stabs the longest sequence $\conv(G_{v_1}),\ldots, \conv(G_{v_i})$ in order.  We will show that it suffices for this line segment to start from the set $P$ of grid points in $G_{v_{1}}$.  So $|P| = O(\eps^{-d})$.

Second, as the vertices arrive in the stream, for each point $p \in P$, we construct a structure $S_a[p]$ inductively as follows:
\begin{quote}
\begin{itemize}
\item Set $S_1[p] = \{p\}$ for all $p \in P$.  We write $S_1[p] = p$ for convenience.

\vspace{2pt}

\item For all $a \geq 1$, set $S_{a+1}[p] = \conv(G_{v_{a+1}}) \cap F(S_a[p],p)$ for all $p \in P$, where $F(S_a[p],p) = \{y \in \real^d :  py \cap S_a[p] \not= \emptyset\}$.

\end{itemize}
\end{quote}
If $p \in S_a[p]$, then $F(S_a[p],p) = \real^d$ and hence $S_{a+1}[p] = \conv(G_{v_{a+1}})$.  If $S_a[p] = \emptyset$, then $F(S_a[p],p) = \emptyset$ and hence $S_{a+1}[p] = \emptyset$.   Otherwise, by viewing $p$ as a light source, $F(S_a[p],p)$ is the unbounded convex polytope consisting of $S_a[p]$ and the non-illuminated subset of $\real^d$.  

We will show that for $a \geq 2$, $S_a[p]$ is equal to $\{x \in \conv(G_{v_a}) : \text{$px$ stabs $\conv(G_{v_1})$}$, $\ldots$, $\conv(G_{v_{a}}) \text{ in order}\}$.  We will also show that for each $p \in P$, $S_a[p]$ and $F(S_a[p],p)$ have $\mathrm{poly}(1/\eps)$ complexities, and they can be constructed in $\mathrm{poly}(1/\eps)$ time.  The array $S_a$ is deleted after computing $S_{a+1}$, which keeps the working storage small.

We pause when we encounter a vertex $v_{i+1}$ such that $S_{i+1}[p] = \emptyset$ for all $p \in P$.  We choose any vertex $q$ of any non-empty $S_i[p]$ (which must exist) and output the segment $pq$.  Since $pq$ stabs $\conv(G_{v_1}), \ldots, \conv(G_{v_i})$ in order, we can show that $d_F(pq,\tau[v_1,v_i]) \leq (1+\eps)\delta$.  All is well except that $S_i$ has been deleted after computing $S_{i+1}$.  A fix is that after processing a vertex $v_a$, we store a segment $pq$ in a buffer $\tilde{\sigma}$ for an arbitrary vertex $q$ of an arbitrary non-empty $S_a[p]$.  We output the content of $\tilde{\sigma}$ after discovering that $S_{i+1}[p] = \emptyset$ for all $p \in P$. 

Afterward, we reset $P$ as the set of grid points in $G_{v_{i+1}}$ and restart the above processing from $v_{i+1}$.  That is, reset $S_{i+1}[p] = p$ for all $p \in P$.
One of the points in $P$ will be the start of the next segment that will be output in the future.  The start of this next segment may be far from $\conv(G_{v_i})$, which contains the end of $\tilde{\sigma}$ that was just output.  For $a = i+1, i+2, \ldots$, we set $S_{a+1}[p] = \conv(G_{v_{a+1}}) \cap F(S_a[p],p)$ for all $p \in P$ until $S_a[p]$ becomes empty for all $p \in P$ again.  In summary, the simplified curve $\sigma$ is the concatenation of the output segments $s_1, s_2, \ldots$, i.e., the end of $s_j$ is connected to the start of $s_{j+1}$.  Figure~\ref{fg:simplify} illustrates a few steps of this algorithm.  The procedure {\sc Simplify}$(\eps,\delta)$ in Algorithm~\ref{alg:simplify} gives the details.

\begin{figure}[h]
    \centerline{\includegraphics[scale=0.7]{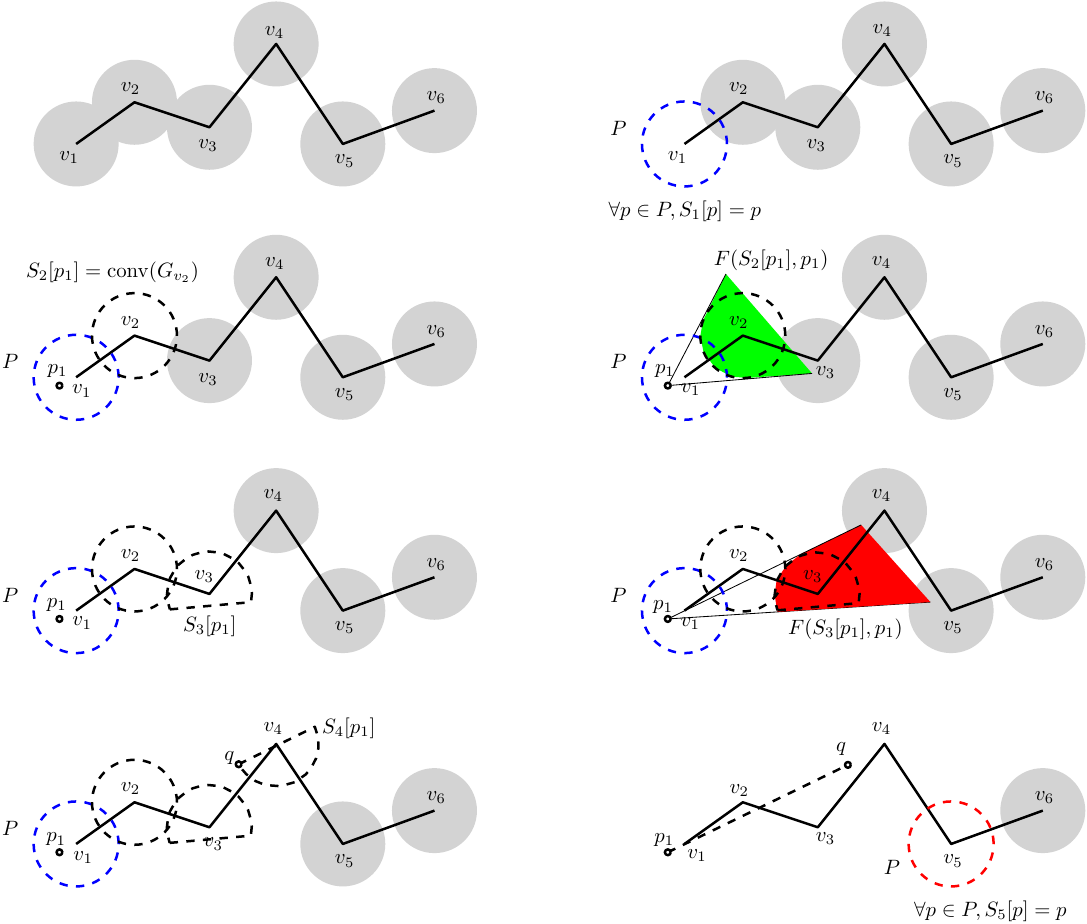}}
    \caption{The shaded balls denote the $\conv(G_{v_a})$'s.  The vertices arrive in order (from left to right and top to bottom).  The first $P$ consists of the grid points inside the blue dashed ball.  Let $p_1$ be a grid point in $P$.  Note that $F(S_1[p_1],p_1) = \real^d$.  So $S_2[p_1] = \conv(G_{v_2})$ and $F(S_2[p_1],p_1)$ is the green unbounded convex region that is bounded by the tangents from $p_1$ to $S_2[p_1]$.  $S_3[p_1] = \conv(G_{v_3}) \cap F(S_2[p_1],p_1)$ is denoted by the dashed clipped ball around $v_3$.  Similarly, $F(S_3[p_1],p_1)$ is the red unbounded convex region that is bounded by the tangents from $p_1$ to $S_3[p_1]$.  $S_4[p_1]$ is denoted by the dashed clipped ball around $v_4$. In this example, $S_5[p] = \emptyset$ for all $p \in P$, so we output $\tilde{\sigma}$ which may be equal to $(p_1,q)$ for a vertex $q$ of $S_4[p_1]$.  We reset $P$ as the set of grid points in $G_{v_5}$ (inside the red dashed ball) and reset $S_5[p] = p$ for all $p \in P$.}
    \label{fg:simplify}
\end{figure}

\begin{algorithm}[h]
\caption{{\sc Simplify}}
\label{alg:simplify}
{\bf Input:} A stream $\tau = (v_1,v_2,\ldots)$, $\eps$, and $\delta$.

{\bf Output:} The output curve $\sigma$ consists of the sequence of vertices output so far and the vertices in the buffer $\tilde{\sigma}$.  After processing the last vertex $v_i$ at the end of the stream, {\sc Simpify} returns $(P,S_i)$, which will be useful for the $k$-simplification problem.
\begin{algorithmic}[1]
\Function{{\sc Simplify}}{$\eps,\delta$}
\State {read $v_1$ from the data stream; $i \gets 1$; $\mathit{init} \gets \mathrm{true}$; $\tilde{\sigma} \gets \emptyset$}
\While {{\bf true}}
    \If {{\it init} = true} 
        \State {$\mathit{init} \gets \mathrm{false}$}
        \State {output the vertices in $\tilde{\sigma}$} 
        \Comment {output the line segment in $\tilde{\sigma}$}
        \State {$\tilde{\sigma} \gets (v_i)$}  \Comment{view $v_i$ as a degenerate line segment}
        \State {$P \gets$ the grid points of $G_{v_i}$}
        \State {$S_i[p] \gets p$ for all $p \in P$}
    \Else
        \State {choose a vertex $q$ of some non-empty $S_i[p]$}
        \State {$\tilde{\sigma} \gets (p,q)$} \Comment{update the last edge of the simplified curve}
    \EndIf
    \If {end of data stream}
        \hspace*{45pt}\State {output $\tilde{\sigma}$} \\
        \hspace*{49pt}\Return{$(P,S_i)$}  \Comment{will be useful for $k$-simplification}
    \Else
        \State {read $v_{i+1}$ from the data stream}
    \EndIf
    \State{$S_{i+1}[p] \gets \conv(G_{v_{i+1}}) \cap F(S_i[p],p)$ for all $p \in P$}
    \State {delete $S_i$}
    \State{$\mathit{init} \gets \mathrm{true}$ if $S_{i+1}[p] = \emptyset$ for all $p \in P$}
    \State {$i \gets i + 1$}
\EndWhile
\EndFunction
\end{algorithmic}
\end{algorithm}

\subsection{Analysis}

Given a polytope $Q \subset \real^d$, let $|Q|$ denote its complexity, i.e., the total number of its faces of all dimensions.

\begin{lemma}
\label{lem:stab}
Let $P$ be the set of grid points in $G_{v_i}$.  Assume that $S_i[p]=p$ for all $p \in P$.  Take any $p \in P$ and any index $j \geq i+1$ such that $S_a[p] \not= \emptyset$ for all $a \in [i,j-1]$.
\begin{enumerate}[{\em (i)}]
\item $S_j[p] = \{x \in \conv(G_{v_j}) : \text{$px$ stabs $\conv(G_{v_i}), \ldots, \conv(G_{v_j})$ in order}\}$.
\item $S_j[p]$ is a convex polytope that can be computed in $O\bigl(\bigl|S_{j-1}[p]\bigr| + N\log N + N^{\lfloor d/2 \rfloor}\bigr)$ time, where $N$ is any upper bound on the number of support hyperplanes of $F(S_{j-1}[p],p)$ such that $N = \Omega(\eps^{(1-d)\lfloor d/2 \rfloor})$.
\end{enumerate}
\end{lemma}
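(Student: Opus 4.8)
The plan is to prove the two parts in sequence, first establishing the geometric characterization (i) by induction on $j$, and then using that characterization together with the recursive definition $S_{j}[p] = \conv(G_{v_j}) \cap F(S_{j-1}[p],p)$ to derive the complexity and runtime bound (ii).

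For part (i), I would induct on $j$. The base case $j = i+1$ is essentially a definitional check: since $S_i[p] = p$, we have $F(S_i[p],p) = F(\{p\},p)$, and I must verify this equals $\real^d$ (as noted in the excerpt, $p \in S_i[p]$ forces $F = \real^d$), so $S_{i+1}[p] = \conv(G_{v_{i+1}})$, which matches the claimed set because every point $x \in \conv(G_{v_{i+1}})$ trivially has $px$ stabbing $\conv(G_{v_i})$ (at $p$ itself) and $\conv(G_{v_{i+1}})$ (at $x$) in order. For the inductive step, assume the characterization holds for $S_{j-1}[p]$. I would unfold $S_j[p] = \conv(G_{v_j}) \cap F(S_{j-1}[p],p)$ and argue both inclusions. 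The forward direction: if $x \in S_j[p]$, then $x \in \conv(G_{v_j})$ and $px \cap S_{j-1}[p] \neq \emptyset$, so there is a point $z$ on segment $px$ lying in $S_{j-1}[p]$; by the induction hypothesis $pz$ stabs $\conv(G_{v_i}),\ldots,\conv(G_{v_{j-1}})$ in order, and since $z$ lies on $px$ before $x$, appending $x \in \conv(G_{v_j})$ extends the stabbing to $\conv(G_{v_j})$. The reverse direction is symmetric: a stabbing sequence for $px$ through all the hulls in order yields a witness point in $\conv(G_{v_{j-1}})$ on the segment $px$, and by the induction hypothesis (applied to the prefix stabbing up to $\conv(G_{v_{j-1}})$) that witness lies in $S_{j-1}[p]$, placing $x$ in $F(S_{j-1}[p],p)$.

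For part (ii), the central observation is that $F(S_{j-1}[p],p)$ is a convex polytope: it is the union of $S_{j-1}[p]$ with the ``shadow'' cast from the light source $p$, bounded by the tangent hyperplanes (support hyperplanes) from $p$ to the convex body $S_{j-1}[p]$. Intersecting this with the convex polytope $\conv(G_{v_j})$ gives a convex polytope, establishing convexity of $S_j[p]$. For the runtime, I would describe the two-stage computation: first construct the support hyperplanes of $F(S_{j-1}[p],p)$ from $S_{j-1}[p]$, which costs $O(|S_{j-1}[p]|)$ to read the input plus the cost of computing the cone of tangent hyperplanes; second, intersect the resulting $N$ halfspaces (those bounding $F$ together with those bounding $\conv(G_{v_j})$, the latter being $O(\eps^{-d})$ in number, absorbed into $N$) to form $S_j[p]$. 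The halfspace intersection of $N$ halfspaces in $\real^d$ is dual to computing a convex hull of $N$ points, which by the worst-case optimal output-sensitive bound takes $O(N \log N + N^{\lfloor d/2 \rfloor})$ time. The condition $N = \Omega(\eps^{(1-d)\lfloor d/2\rfloor})$ ensures $N$ dominates the $O(\eps^{-d})$ halfspaces from $\conv(G_{v_j})$ so the bound absorbs that term cleanly.

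The main obstacle I anticipate is \emph{the reverse inclusion in part (i)}, specifically the bookkeeping of the stabbing order. Stabbing ``in order'' means the witness points $x_i, \ldots, x_j$ appear monotonically along the directed segment $px$; when I extract the witness $x_{j-1} \in \conv(G_{v_{j-1}})$ to feed into the induction hypothesis, I need that the sub-stabbing of $\conv(G_{v_i}),\ldots,\conv(G_{v_{j-1}})$ is realized along the sub-segment $p\,x_{j-1}$ (not along $px$), and that $x_{j-1}$ lies on segment $px$ so that $x \in F(S_{j-1}[p],p)$. Care is needed because the induction hypothesis characterizes $S_{j-1}[p]$ as points $y \in \conv(G_{v_{j-1}})$ with $py$ stabbing the hulls in order, so I must show the witness $x_{j-1}$ genuinely satisfies this for the segment $p\,x_{j-1}$ rather than merely lying on the longer segment; the monotonicity of the stabbing order along $px$ is exactly what guarantees this, and I would make that argument explicit. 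A secondary subtlety is confirming that the degenerate cases ($S_{j-1}[p] = \emptyset$ or $p \in S_{j-1}[p]$) are correctly excluded by the hypothesis $S_a[p] \neq \emptyset$ for $a \in [i,j-1]$, so that $F(S_{j-1}[p],p)$ is a genuine polytope with finitely many support hyperplanes and the runtime bound applies.
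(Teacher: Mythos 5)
Your part (i) is correct and is essentially the paper's own argument: induction on $j$, with the forward inclusion coming straight from the definition of $F(S_{j-1}[p],p)$ and the reverse inclusion by extracting a witness point of $\conv(G_{v_{j-1}})$ on $px$ and feeding it to the induction hypothesis. The order-bookkeeping subtlety you flag is real but routine (the witnesses for the prefix all precede the witness in $\conv(G_{v_{j-1}})$ along $px$, so they lie on the subsegment), and the paper handles it implicitly; your treatment of the degenerate cases also matches the paper.

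The gap is in part (ii), in the step bounding the halfspaces contributed by $\conv(G_{v_j})$. You assert these number $O(\eps^{-d})$ and that the hypothesis $N = \Omega(\eps^{(1-d)\lfloor d/2 \rfloor})$ lets $N$ absorb them. Both halves of this are faulty. First, the arithmetic: for $d=2$ the guaranteed lower bound on $N$ is only $\Omega(\eps^{-1})$, and for $d=3$ only $\Omega(\eps^{-2})$, so in those dimensions $\eps^{-d}$ is \emph{larger} than what $N$ is guaranteed to dominate, and with your count the claimed running time $O(N\log N + N^{\lfloor d/2 \rfloor})$ does not follow. Second, the count itself: $O(\eps^{-d})$ is (at best) a count of grid cells, i.e.\ points, not of facets of their convex hull; the hull of $m$ points in $\real^d$ can have $\Theta(m^{\lfloor d/2 \rfloor})$ facets, so for $d \geq 4$ the figure $O(\eps^{-d})$ is unjustified as a facet bound. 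The correct argument — and the one the paper gives — is that $\conv(G_{v_j})$ is determined by the cells of $G_{v_j}$ that intersect the boundary of $B_{v_j}$, of which there are only $O(\eps^{1-d})$; hence it is the hull of $O(\eps^{1-d})$ points, and by the Upper Bound Theorem its complexity, in particular its number of facets, is $O(\eps^{(1-d)\lfloor d/2 \rfloor})$, which is $O(N)$ precisely because of the stated hypothesis on $N$. With that count, the intersection defining $S_j[p]$ involves $O(N)$ halfspaces in total, and Chazelle's halfspace-intersection algorithm (the convex-hull dual you invoke) yields the claimed $O(N\log N + N^{\lfloor d/2 \rfloor})$ time, completing the proof in the way you intended.
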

\begin{proof}
When $j = i+1$, (i) is true as $S_{i+1}[p] = \conv(G_{v_{i+1}})$.  Assume inductively that (i) holds for $j-1$ for some $j \geq i+2$.  For any $x \in F(S_{j-1}[p],p)$, $px$ intersects $S_{j-1}[p]$ by definition, which implies that $px$ stabs $\conv(G_{v_i}),\ldots,\conv(G_{v_{j-1}})$ in order by induction assumption.  Therefore, for every $x \in \conv(G_{v_j}) \cap F(S_{j-1}[p],p)$, $px$ stabs $\conv(G_{v_i}), \ldots, \conv(G_{v_j})$ in order.  Conversely, for every $x \in \conv(G_{v_j})$, if $px$ stabs $\conv(G_{v_i}),\ldots,\conv(G_{v_j})$ in order, then $px$ contains a point in $S_{j-1}[p]$ by induction assumption, which implies that $x \in F(S_{j-1}[p],p)$.

Consider (ii).  If $p \in S_{j-1}[p]$, then $F(S_{j-1}[p],p) = \real^d$ and $S_j[p] = \conv(G_{v_j})$.  
%If $S_{j-1}[p] = \emptyset$, then $S_j[p] = \emptyset$.  
This case can be handled in $O\bigl(\bigl|S_{j-1}[p]\bigr|\bigr)$ time.  If $p \not\in S_{j-1}[p]$,
%and $S_{j-1}[p] \not= \emptyset$, 
then $F(S_{j-1}[p],p)$ has two types of support hyperplanes.  The first type consists of support hyperplanes of $S_{j-1}[p]$ that separate $p$ from $S_{j-1}[p]$.  The second type consists of hyperplanes that pass through $p$ and are tangent to $S_{j-1}[p]$ at some $(d-2)$-dimensional faces of $S_{j-1}[p]$. 
%
%We intersect the halfspaces bounded by the support hyperplanes of $\conv(G'_{v_j})$ and $F(S_{j-1}[p],p)$.
%
%If $d \in \{2,3\}$, the intersection can be computed in linear time~\cite{C2016}.  
%Otherwise, 
%
%Assume that $d \geq 4$.  
%What is the complexity of $\conv(G'_{v_j})$?  
%There are $O(\eps^{1-d})$ grid lines in $G_{v_j}$ that intersect the boundary of the ball centered at $v_j$ with radius $(1+\eps/2)\delta$.  Each intersection point is incident to at most $2^d$ cells in $G'_{v_j}$.  
%
Since the side length of $G_{v_j}$ is $\eps\delta/(2\sqrt{d})$, the number of cells in $G_{v_j}$ that intersect the boundary of $B_{v_j}$ is $O(\eps^{1-d})$.  Hence, $\conv(G_{v_j})$ has $O(\eps^{(1-d)\lfloor d/2 \rfloor})$ complexity which is $O(N)$, and $S_j[p]$ can be computed in $O(N\log N + N^{\lfloor d/2 \rfloor})$ time~\cite{C1993}. 
%We use a brute-force approach by adding the supporting hyperplanes of $\conv(G'_{v_j})$ one by one to $F(S_{j-1}[p],p)$.  Let $N_i$ be the number of faces after adding $i$ hyperplanes.  Then, $N_0 = \bigl|F(S_{j-1}[p],p)\bigr|$ and $N_i \leq 2N_{i-1}$ because every existing face may intersect the new hyperplane to produce a new face.  Therefore, for any $i \in [2d]$, $N_i = O(2^i N_0) = O(N_0)$.  The time to add the $i$-th hyperplane is $O(N_i)$.   Hence, the total time to compute $\conv(G'_{v_j}) \cap F(S_{j-1}[p],p)$ is $O(\sum_{i=0}^{2d} N_i) = O(N_0)$.
\end{proof}

Next, we prove the solution quality guarantees.

\begin{lemma}
\label{lem:quality}
Let $\tau[v_1,v_i]$ be the prefix in the stream so far.  The simplified curve $\sigma$ satisfies the properties that $d_F(\sigma,\tau[v_1,v_i]) \leq (1+\eps)\delta$ and $|\sigma| \leq 2\kappa(\tau[v_1,v_i],\delta)-2$.
\end{lemma}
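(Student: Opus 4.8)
The plan is to prove the two claims separately: the distance bound $d_F(\sigma,\tau[v_1,v_i]) \leq (1+\eps)\delta$ and the size bound $|\sigma| \leq 2\kappa(\tau[v_1,v_i],\delta)-2$.

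For the distance bound, I would argue segment by segment. By the algorithm and Lemma~\ref{lem:stab}(i), each output segment $s_j = pq$ has the property that $pq$ stabs $\conv(G_{v_a}), \ldots, \conv(G_{v_b})$ in order, where $\tau[v_a,v_b]$ is the maximal subcurve that $s_j$ handles. The key geometric fact is that $\conv(G_x)$ is sandwiched between the ball $B_x$ of radius $\delta$ and the ball of radius $(1+\eps)\delta$ centered at $x$: since $G_x$ covers the ball of radius $(1+\eps/2)\delta$, we have $B_x \subseteq \conv(G_x)$, and since each grid cell has diameter $\eps\delta/2$, every point of $\conv(G_x)$ lies within $(1+\eps/2)\delta + \eps\delta/2 = (1+\eps)\delta$ of $x$. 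Thus if $pq$ stabs the $\conv(G_{v_a})$'s in order, picking a stabbing point $x_a \in pq \cap \conv(G_{v_a})$ for each $a$ gives a matching between $s_j$ and $\tau[v_a,v_b]$: match each subsegment $x_a x_{a+1}$ of $s_j$ to the edge $v_a v_{a+1}$. Since each matched pair of points stays within $(1+\eps)\delta$ (the stabbing points are within $(1+\eps)\delta$ of the corresponding vertices, and linear interpolation along matched segments keeps the distance bounded by the max of the endpoint distances by convexity of distance-to-a-point), we get $d_F(s_j, \tau[v_a,v_b]) \leq (1+\eps)\delta$. Concatenating these matchings across all segments yields $d_F(\sigma, \tau[v_1,v_i]) \leq (1+\eps)\delta$, using that the endpoint of $s_j$ coincides with the start of $s_{j+1}$ in the concatenation.

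For the size bound, I would relate the number of output segments to $\kappa := \kappa(\tau[v_1,v_i],\delta)$, the minimum size of an exact $\delta$-simplification. Let $\sigma^*$ be an optimal curve with $|\sigma^*| = \kappa$ and $d_F(\sigma^*,\tau[v_1,v_i]) \leq \delta$. The standard argument is a charging/greedy-longest-prefix comparison: each edge of $\sigma^*$ corresponds (via a Fréchet matching) to a contiguous subcurve of $\tau$ that lies within distance $\delta$ of that edge, hence within the $\conv(G_{v_a})$ covers. The crucial point is that our algorithm is greedy and maximal: each segment $s_j$ stabs the longest possible run of consecutive $\conv(G_{v_a})$'s starting where the previous one left off. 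I would show that any single edge of $\sigma^*$ can span a run of vertices that can be covered by at most two of our greedy segments, so the number of our segments is at most $2\kappa - 1$, giving $|\sigma| \leq 2\kappa - 2$ edges. The mechanism is that a single edge $e$ of $\sigma^*$ induces a line segment that stabs all the balls $B_{v_a}$ (hence the $\conv(G_{v_a})$'s) along its matched subcurve; our greedy segment, starting anywhere within the first cover, cannot be shorter than what one edge of the optimum achieves when both start at the same vertex, but because our segment may start partway through an optimal edge's span, the worst case is that two greedy segments suffice to cover the vertices matched to one optimal edge.

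The main obstacle will be the size bound, specifically making the ``two greedy segments per optimal edge'' charging rigorous. The subtlety is that our segments start from grid points in $G_{v_{i+1}}$ rather than from $v_{i+1}$ itself, and they are constrained to begin in the set $P$ of grid points of the first cover. I must verify that restricting the start to grid points of $G_{v_a}$ does not shorten the stabbable run compared to starting at an arbitrary point, which relies on the earlier claim (stated before Lemma~\ref{lem:stab}) that it suffices for the stabbing line segment to start from the grid points $P$ of the first cover. I would invoke that claim to ensure our greedy segments are genuinely maximal, then argue that any optimal edge overlapping two of our segments' vertex ranges can be charged at the boundary so that each optimal edge is charged at most twice. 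Care is needed at the transition vertices where one greedy segment ends and the next begins, since these share a boundary vertex; I would set up the charging so that the shared vertex and the maximality of each segment together force the factor-2 bound rather than something larger.
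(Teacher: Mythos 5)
There is a genuine gap, and it is in the size bound. Your charging scheme---``each edge of $\sigma^*$ is covered by at most two greedy segments''---is too weak by a factor of two. If $m$ is the number of output segments, that argument yields $m \leq 2(\kappa-1)$, and since the concatenated curve $\sigma$ has $2m$ vertices (each segment contributes both of its endpoints), you would only get $|\sigma| \leq 4\kappa-4$, not $2\kappa-2$. The paper proves the stronger, one-to-one charging: each greedy segment is charged to a \emph{distinct} edge of the optimal curve $\gamma=(u_1,\ldots,u_k)$. Concretely, let $x_a$ be the point of $\gamma$ matched to $v_a$ by a Fr\'{e}chet matching, and let $b=\max\{a : x_a \in u_1u_2\}$. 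Snapping $x_1$ to a grid point $p \in G_{v_1}$ within $\eps\delta/2$ shows that $pu_2$ stabs the balls of radius $(1+\eps/2)\delta$ centered at $v_1,\ldots,v_b$ in order (the enlarged radius absorbs the snap), hence stabs $\conv(G_{v_1}),\ldots,\conv(G_{v_b})$ in order, so $S_a[p]\neq\emptyset$ for all $a\leq b$. Thus the greedy segment does not end before $v_b$: it extends through \emph{all} vertices matched to $u_1u_2$, so the next segment starts at a vertex whose matched point lies on a strictly later edge $u_ju_{j+1}$, $j\geq 2$. Iterating, the map from greedy segments to optimal edges is injective, giving $m\leq k-1$ segments and $|\sigma| \leq 2(k-1)$ vertices. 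Your worry about restricting starts to grid points is resolved exactly by this snapping argument, but the quantitative conclusion you draw from maximality (two-to-one instead of one-to-one) is the wrong one, and the lemma does not follow from it.

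A secondary but related error: you assert that ``the endpoint of $s_j$ coincides with the start of $s_{j+1}$ in the concatenation.'' It does not. The end of $s_j$ is a vertex of some $S_c[p]\subseteq\conv(G_{v_c})$, while the start of $s_{j+1}$ is a grid point of $G_{v_{c+1}}$; these are different points, which is precisely why $m$ segments produce $2m$ vertices (and why the bound is $2\kappa-2$ rather than $\kappa$). This also leaves a hole in your distance argument: the connector edge joining the end of $s_j$ to the start of $s_{j+1}$ must itself be matched to a portion of $\tau$. The paper closes this by matching the connector $p_bp_{b+1}$ to the edge $v_bv_{b+1}$ of $\tau$ by linear interpolation, both endpoints being within $(1+\eps)\delta$ of the corresponding vertices. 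Your within-segment matching via stabbing points is otherwise the same as the paper's.
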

\begin{proof}
Let $\sigma$ be the concatenation of line segments $s_1, s_2, \ldots$.  That is, the end of $s_j$ is connected to the start of $s_{j+1}$ for $j \geq 1$ to produce $\sigma$.

The start of $s_1$ is 
a grid point $p_1$ of $G_{v_1}$, the end of $s_1$ is a vertex $p_b$ of $S_b[p_1] \subseteq \conv(G_{v_b})$ for some $b > 1$, and the start of $s_2$ is a grid point $p_{b+1}$ in $G_{v_{b+1}}$.  By Lemma~\ref{lem:stab}(i), $s_1$ stabs $\conv(G_{v_1}), \ldots, \conv(G_{v_b})$ in order.  Let $p_2,\ldots,p_{b-1}$ be some points on $s_1$ in this order such that $p_a \in s_1 \cap \conv(G_{v_a})$ for $a \in [2,b-1]$. Then, $d(v_a,p_a) \leq (1+\eps)\delta$ for $a \in [b]$.  So we can match $v_av_{a+1}$ to $p_ap_{a+1}$ for $a \in [b-1]$ by linear interpolation within a distance of $(1+\eps)\delta$.  Similarly, we can match $v_bv_{b+1}$ to the segment $p_bp_{b+1}$ between $s_1$ and $s_2$ by linear interpolation within a distance of $(1+\eps)\delta$.  Continuing this way shows that $d_F(\sigma,\tau[v_1,v_i]) \leq (1+\eps)\delta$.

\begin{figure}
\centerline{\includegraphics[scale=0.6]{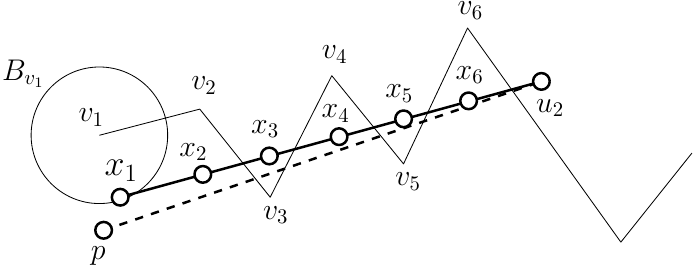}}
\caption{The points $x_i$ for $i \in [6]$ are matched to $v_i$ for $i \in [6]$ by the Fr\'{e}chet matching.}
\label{fg:quality}
\end{figure}

Let $k = \kappa(\tau[v_1,v_i],\delta)$.  Let $\gamma = (u_1,u_2,\ldots,u_k)$ be a polygonal curve of size $k$ such that $d_F(\gamma,\tau[v_1,v_i]) \leq \delta$.  For $a \in [i]$, let $x_a$ be a point on $\gamma$ mapped to $v_a$ by a Fr\'{e}chet matching between $\gamma$ and $\tau[v_1,v_i]$.  We can assume that $x_1 = u_1$ and $x_i = u_k$.  Let $b = \max\{a \in [i]: x_a \in u_1u_2\}$.  So $x_1u_2$ stabs balls of radii $\delta$ centered at $v_1,v_2,\ldots,v_b$ in order.  Refer to Figure~\ref{fg:quality}.  There is a grid point $p$ of $G_{v_1}$ within a distance $\eps\delta/2$ from $x_1$.  By a linear interpolation between $x_1u_2$ and $pu_2$, for $i \in [b]$, $x_i$ is mapped to a point on $pu_2$ at distance no more than $(1+\eps/2)\delta$ from $v_i$. That is, $pu_2$ stabs the balls centered at $v_1, v_2, \ldots, v_b$ with radii $(1+\eps/2)\delta$ in order.  It follows that $pu_2$ stabs $\conv(G_{v_1}),\ldots,\conv(G_{v_b})$ in order, which implies that $S_a[p] \not= \emptyset$ for $a \in [b]$ by Lemma~\ref{lem:stab}(i).  Therefore, {\sc Simplify} outputs the first segment when it encounters $v_{c+1}$ for some $c \geq b$.  We charge the first output segment to $u_1u_2$.  The vertex $v_{c+1}$ is matched to the point $x_{c+1}$ that lies on $u_ju_{j+1}$ for some $j \geq 2$.  We repeat the above argument to $x_{c+1}u_{j+1}$.  This way, every output segment is charged to a unique edge of $\gamma$, so there are at most $k-1$ output segments.
Concatenating them uses at most  $k-2$ more edges. Hence, $|\sigma| \leq 2k-2$.
\end{proof}

We are now ready to prove the  performance guarantees offered by {\sc Simplify}.

\begin{theorem}\label{thm: delta_main}
Let $\tau$ be a polygonal curve in $\mathbb{R}^d$ that arrives in a data stream. Let $\alpha  = 2(d-1)\lfloor d/2 \rfloor^2 + d$.
%let $\beta = 2(d-1)\lfloor d/2 \rfloor^3 + d$.  
For every $\delta > 0$ and every $\varepsilon \in (0,1)$, the output curve $\sigma$ of {\sc Simplify}$(\eps,\delta)$ satisfies $d_F(\sigma,\tau)\le (1+\varepsilon)\delta$ and $|\sigma|\le 2\kappa(\tau, \delta)-2$. 
The working storage is $O(\varepsilon^{-\alpha})$.  Each vertex is processed in $O(\varepsilon^{-\alpha}\log\frac{1}{\eps})$ time for $d \in \{2,3\}$ and $O(\varepsilon^{-\alpha})$ time for $d \geq 4$.
\end{theorem}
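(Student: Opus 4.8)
The plan is to combine the two lemmas already established with a careful accounting of the per-vertex cost and the total working storage, so that Theorem~\ref{thm: delta_main} follows almost entirely by bookkeeping. The correctness half---namely $d_F(\sigma,\tau)\le(1+\eps)\delta$ and $|\sigma|\le 2\kappa(\tau,\delta)-2$---is exactly the content of Lemma~\ref{lem:quality}, applied at the end of the stream when $\tau[v_1,v_i]=\tau$. So I would open the proof by citing Lemma~\ref{lem:quality} directly and then devote the bulk of the argument to the resource bounds.

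For the resource bounds, the first step is to nail down the complexity of a single structure $S_a[p]$ and the function $F(S_a[p],p)$. By Lemma~\ref{lem:stab}(ii), $S_a[p]$ is a convex polytope obtained by intersecting $\conv(G_{v_a})$---which has $O(\eps^{(1-d)\lfloor d/2\rfloor})$ complexity---with the light-cone region $F(S_{a-1}[p],p)$. I would argue that the number of support hyperplanes of $F(S_{a-1}[p],p)$ stays bounded by $N = O(\eps^{(1-d)\lfloor d/2\rfloor})$: the tangent hyperplanes through $p$ are determined by the $(d-2)$-faces of $S_{a-1}[p]$, and since each $S_a[p]$ lives inside some $\conv(G_{v_a})$ of that complexity, the face count does not grow across iterations. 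Feeding this $N$ into Lemma~\ref{lem:stab}(ii) gives a per-point, per-vertex cost of $O(N\log N + N^{\lfloor d/2\rfloor}) = O(\eps^{(1-d)\lfloor d/2\rfloor^2})$ (the $\log N$ term, i.e.\ the $\log\frac1\eps$ factor, is what distinguishes the $d\in\{2,3\}$ bound from the $d\ge4$ bound, where $N^{\lfloor d/2\rfloor}$ dominates).

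The second step is to multiply by the number of base points $|P|$. Since $P$ is the set of grid points of $G_{v_a}$ at the start of a segment and the grid has side length $\Theta(\eps\delta)$, we have $|P| = O(\eps^{-d})$. Each incoming vertex triggers the update $S_{a+1}[p]\gets\conv(G_{v_{a+1}})\cap F(S_a[p],p)$ for all $p\in P$, so the per-vertex time is $|P|$ times the single-structure cost, i.e.\ $O(\eps^{-d}\cdot\eps^{(1-d)\lfloor d/2\rfloor^2}\cdot\log\frac1\eps) = O(\eps^{-\alpha}\log\frac1\eps)$ with $\alpha = 2(d-1)\lfloor d/2\rfloor^2 + d$, and without the log factor for $d\ge4$. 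I would remark that the grid $G_{v_{a+1}}$ and the incident constructions are computed on demand from the single new vertex, so nothing outside the current $S$-array needs to be retained. For working storage, the only persistent data are the array $S_a$ (of size $|P|$, each entry a polytope of complexity $O(\eps^{(1-d)\lfloor d/2\rfloor})$), the buffer $\tilde\sigma$ (a single segment), and the set $P$; since $S_{a-1}$ is deleted immediately after $S_a$ is built, the total is $O(\eps^{-d}\cdot\eps^{(1-d)\lfloor d/2\rfloor}) = O(\eps^{-\alpha})$, matching the claimed bound.

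The main obstacle I anticipate is the second step's claim that the complexity of $S_a[p]$---and hence the face count $N$ of $F(S_{a-1}[p],p)$---does not accumulate over the potentially long run of vertices within a single segment. A naive intersection of a growing cone with successive balls could in principle blow up the number of faces additively with each iteration. The key observation that saves this is that $S_a[p]\subseteq\conv(G_{v_a})$, so its complexity is capped by that of a \emph{single} ball-cover $\conv(G_{v_a})$, independent of $a$; the clipping by $F(S_{a-1}[p],p)$ can only remove or refine faces within that envelope rather than grow it unboundedly. I would make this precise by bounding the number of facets of the intersection in terms of the $O(N)$ facets of $\conv(G_{v_a})$ plus the $O(N)$ support hyperplanes of $F(S_{a-1}[p],p)$, both of which are $O(\eps^{(1-d)\lfloor d/2\rfloor})$, and then invoke the optimal convex-hull/intersection construction of~\cite{C1993} to get the stated running time. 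Once this uniform complexity bound is in hand, the rest of the theorem is a direct substitution into the expression for $\alpha$.
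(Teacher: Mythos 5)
Your reduction of the correctness claims to Lemma~\ref{lem:quality} matches the paper, and your identification of \emph{where} the difficulty lies (preventing the complexity of $S_a[p]$ and $F(S_{a-1}[p],p)$ from accumulating over a long run of vertices) is exactly right. But the way you resolve that difficulty is not valid, and this is the heart of the theorem. You claim the number of support hyperplanes of $F(S_{a-1}[p],p)$ stays at $N = O(\eps^{(1-d)\lfloor d/2\rfloor})$ because ``$S_a[p]\subseteq\conv(G_{v_a})$, so its complexity is capped by that of a single ball-cover.'' Containment in a convex polytope does \emph{not} cap the facet count of a convex subset: the intersection $\conv(G_{v_{a}})\cap F(S_{a-1}[p],p)$ acquires facets from \emph{both} operands, and $F(S_{a-1}[p],p)$ carries two kinds of facets --- those inherited from facets of earlier $\conv(G_{v_b})$'s, and the tangent hyperplanes through $p$ that are newly created at \emph{every} iteration. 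Nothing in your argument rules out these tangent (``pivot'') hyperplanes surviving and piling up, one batch per vertex, which would make the complexity grow with the stream length and destroy both the storage and the time bounds. The paper's proof of Theorem~\ref{thm: delta_main} is almost entirely devoted to exactly this point: the inherited facets are bounded by $|V| = O(\eps^{(1-d)\lfloor d/2\rfloor})$ because the $\conv(G_{v_a})$'s are translates sharing a set $V$ of facet normals and a convex polytope has at most one facet per outward normal; the pivot facets require a separate charging argument (each pivot halfspace is charged to a distinct \emph{ordered pair} of normals in $V$ at the tangent $(d-2)$-face, with a geometric contradiction argument showing no pair is charged twice), giving at most $|V|(|V|-1) = O(\eps^{2(1-d)\lfloor d/2\rfloor})$ of them. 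So the correct uniform bound is $N = O(\eps^{2(1-d)\lfloor d/2\rfloor})$, quadratically larger than yours, and there is no short way around the charging argument.

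Your own arithmetic flags the problem. With your $N$, the per-vertex cost would be $O(\eps^{-d}\cdot\eps^{(1-d)\lfloor d/2\rfloor^2})=O\bigl(\eps^{-((d-1)\lfloor d/2\rfloor^2+d)}\bigr)$, which is \emph{not} $O(\eps^{-\alpha})$: the exponent $\alpha = 2(d-1)\lfloor d/2\rfloor^2 + d$ carries a factor of $2$ that your expression lacks. That factor of $2$ exists precisely because the pivot-halfspace count is quadratic in $|V|$; plugging $N = O(\eps^{2(1-d)\lfloor d/2\rfloor})$ into Lemma~\ref{lem:stab}(ii) and the upper bound theorem gives per-point complexity $O(N^{\lfloor d/2\rfloor}) = O(\eps^{2(1-d)\lfloor d/2\rfloor^2})$, and multiplying by $|P|=O(\eps^{-d})$ reproduces the stated $O(\eps^{-\alpha})$ storage and time bounds (with the $N\log N$ term supplying the $\log\frac{1}{\eps}$ factor for $d\in\{2,3\}$, as you correctly noted). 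In short: the skeleton of your accounting is the paper's, but the crucial uniform complexity bound is asserted rather than proved, the assertion is false as stated, and the missing ingredient --- the pivot-halfspace charging scheme --- is the main technical content of the paper's proof.
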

\begin{proof}
The guarantees on $\sigma$ follow from Lemma~\ref{lem:quality}.  We will bound the number of support hyperplanes and complexities of $S_j[p]$ and $F(S_{j-1}[p],p)$.  The working storage and processing time then follow from Lemma~\ref{lem:stab}(ii) and the fact that $|P| = O(\eps^{-d})$.

A \emph{bounding halfspace} of a convex polytope $O$ is a halfspace that contains $O$ and is bounded by the support hyperplane of a facet of $O$.  We count the bounding halfspaces of $S_j[p]$.  Some are bounding halfspaces of $F(S_{j-1}[p],p)$ whose boundaries pass through $p$ and are tangent to $S_{j-1}[p]$.  We call them the \emph{pivot} bounding halfspaces.  The remaining ones are bounding halfspaces of $\conv(G_{v_a})$ for some $a \leq j$.  We call these \emph{non-pivot} bounding halfspaces.   

Since $\conv(G_{v_a})$'s are translates of each other, their facets share a set $V$ of unit outward normals.  We have $|V| \leq |\conv(G_{v_a})| = O(\eps^{(1-d)\lfloor d/2 \rfloor})$.  No two facets of $S_j[p]$ have the same vector in $V$ because the two corresponding bounding halfspaces would be identical or nested---neither is possible.  This feature helps us to prevent a cascading growth in the complexities of $S_j[p]$ in the inductive construction.
It also implies that there are $O(\eps^{(1-d)\lfloor d/2 \rfloor})$ non-pivot  bounding halfspaces of $S_j[p]$.

Take a pivot bounding halfspace $h$ of $S_j[p]$.  Suppose that $h$ was introduced for the first time as a bounding halfspace of $S_i[p]$ for some $i \leq j$.  The boundary of $h$, denoted by $\partial h$, passes through $p$ and is tangent to $S_{i-1}[p]$ at a $(d-2)$-dimensional face $f$ of $S_{i-1}[p]$.  Let $g_1$ and $g_2$ be the bounding halfspaces of $S_{i-1}[p]$ whose boundaries contain the two facets of $S_{i-1}[p]$ incident to $f$.  
We make three observations.  First, $F(S_{i-1}[p],p)$ lies inside the cone $C_i$ of rays that shoot from $p$ towards $S_{i-1}[p]$.  Second, $h$ is a bounding halfspace of $C_i$.  Third, $S_a[p] \subseteq C_i$ for all $a \in [i,j]$.  

We claim that neither $\partial g_1$ nor $\partial g_2$ passes through $p$.  Neither $g_1$ nor $g_2$ is equal to $h$ because $h$ was not introduced before $S_i[p]$.  If both $\partial g_1$ and $\partial g_2$ pass through $p$, then $f$ spans a $(d-2)$-dimensional affine subspace $\ell$ that passes through $p$.  
%But then $g_1 \cap g_2 \cap \partial h = \ell$, meaning that $h$ supports no facet of $F(S_{j-1}[p],p)$.  This is a contradiction.  
But then $\partial h$ meets $C_i$ at $\ell$ only, which is a contradiction because $\partial h$ should support a facet of $S_j[p]$.  If either $\partial g_1$ or $\partial g_2$ passes through $p$, say $\partial g_2$, then the affine subspace spanned by $f$ does not pass through $p$.  But then $p$ and $f$ span a unique hyperplane, giving the contradiction that $g_2 = h$.  
%This proves our claim.

By our claim, $g_1$ and $g_2$ are non-pivot bounding halfspaces of $S_{i-1}[p]$.  We charge $h$ to $(\phi_1,\phi_2)$ or $(\phi_2,\phi_1)$, where $\phi_1$ and $\phi_2$ are the outward unit normals of $g_1$ and $g_2$, respectively.  The pair $(\phi_1,\phi_2)$ is charged if a ray that shoots from $p$ to an interior point of $g_1 \cap g_2$ arbitrarily near $\partial g_1 \cap \partial g_2$ hits $\partial g_1$ before $\partial g_2$.  Otherwise, the pair $(\phi_2,\phi_1)$ is charged.  Suppose that $h$ is charged to $(\phi_1,\phi_2)$.  We argue that $(\phi_1,\phi_2)$ is not charged again at some $(d-2)$-dimensional face of $S_a[p]$ for all $a \in [i,j-1]$ due to another pivot bounding halfspace of $S_j[p]$. 

 Assume to the contrary that $(\phi_1,\phi_2)$ is charged again at some $(d-2)$-dimensional face of $S_b[p]$ for some $b \in [i,j-1]$ due to a pivot bounding halfspace $h'$ of $S_j[p]$.  
%So $\partial h'$ is spanned by $p$ and $\partial g'_1 \cap \partial g'_2$ for two bounding halfspaces $g'_1$ and $g'_2$ of $S_{a-1}[p]$ for some $a \leq j$, where $\phi_1$ and $\phi_2$ are the outward unit normals of $g'_1$ and $g'_2$, respectively.  
%
%If $g'_1 \not= g_1$, then since $\partial g_1$ is parallel to $\partial g'_1$, either $\partial g_1$ lies outside $g'_1$, or $\partial g'_1$ lies outside $g_1$. Neither is possible.  A similar reasoning shows that $g'_2 = g_2$.  But then $h = h'$ because both $\partial h$ and $\partial h'$ are spanned by $p$ and $\partial g_1 \cap \partial g_2 = \partial g'_1 \cap \partial g'_2$.  This is a contradiction.
%
%Since both $h$ and $h'$ are charged to $\{\phi_1,\phi_2\}$, 
%We can align $\partial h$ and $\partial h'$ by rotating $\partial h$ around $p$ in the direction of a vector normal to $\partial h$.  Therefore, 
Either the dimension of $\partial h' \cap C_i$ is zero or one, or the dimension of $\partial h' \cap C_i$ is two.  In the first case, $\partial h'$ cannot support any facet of $S_j[p]$, a contradiction.  In the second case, since $h'$ is charged to the same ordered pair $(\phi_1,\phi_2)$, $\partial h'$ must separate $S_b[p]$ from $\partial h$.  Figure~\ref{fg:simplify-2} illustrates the configuration. Therefore, the cone of rays $C_b$ that shoot from $p$ towards $S_b[p]$ meets $\partial h$ only at $p$.  However, $S_j[p] \subseteq C_b$, which means that $\partial h$ cannot support any facet of $S_j[p]$, a contradiction.

\begin{figure}
	\centerline{\includegraphics[scale=0.5]{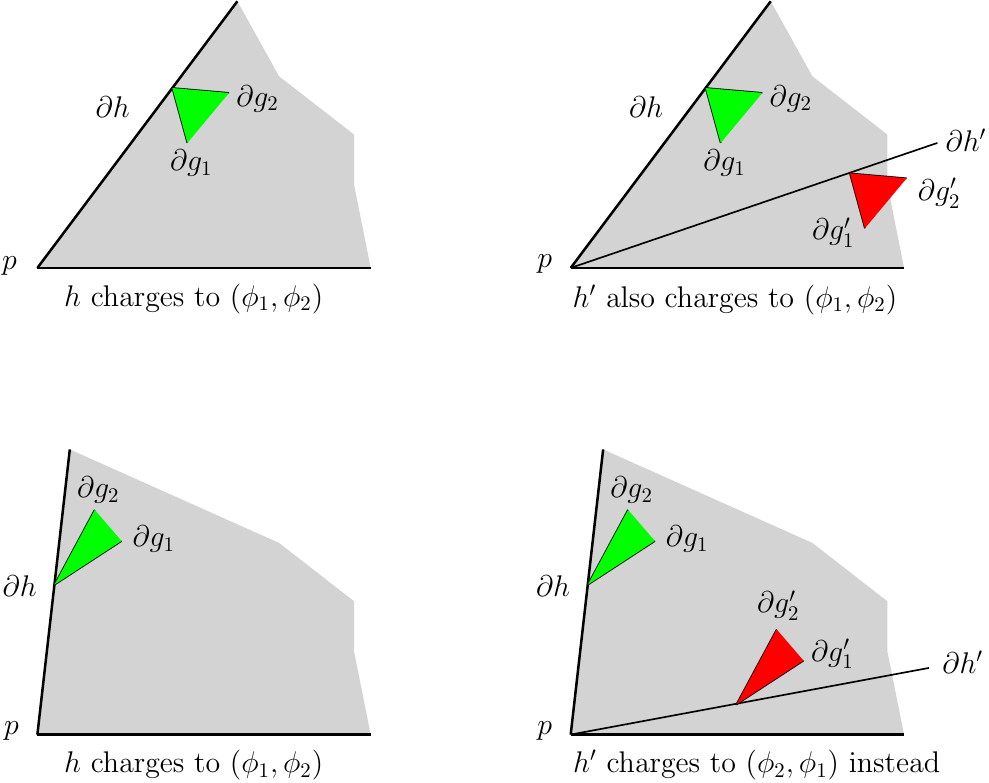}}
	\caption{The shaded cones denote $C_i$.  Let $g'_1$ and $g'_2$ be the bounding halfspaces of $S_b[p]$ such that $\partial h'$ is spanned by $p$ and $\partial g'_1 \cap \partial g'_2$.  Without loss of generality, assume that $g_i$ and $g'_i$ have the same unit outward normal $\phi_i$ for $i \in \{1,2\}$.  A local neighborhood of $g_1 \cap g_2$ is shown in green, and a local neighborhood of $g'_1 \cap g'_2$ is shown in red.  In the first row, both $h$ and $h'$ are charged to $(\phi_1,\phi_2)$, and $\partial h'$ separates $g'_1 \cap g'_2 \cap C_i$ from $\partial h$.  Note that $S_b[p] \subseteq g'_1 \cap g'_2 \cap C_i$.  In the second row, the alternative configuration is shown in which $\partial h'$ does not separate $g'_1 \cap g'_2 \cap C_i$ from $\partial h$.  In this case, a ray that shoots from $p$ to any interior point of $g'_1 \cap g'_2$ arbitrarily close to $\partial g'_1 \cap \partial g'_2$ must intersect $\partial g'_2$ before $\partial g'_1$. Therefore, $h'$ is charged to $(\phi_2,\phi_1)$ instead of $(\phi_1,\phi_2)$.}
	\label{fg:simplify-2}
\end{figure}

In all, $S_j[p]$ has $O(\eps^{(1-d)\lfloor d/2 \rfloor})$ non-pivot bounding halfspaces and at most $|V|(|V|-1) = O(\eps^{2(1-d)\lfloor d/2 \rfloor})$ pivot bounding halfspaces.  The same bounds hold for $S_{j-1}[p]$.  If a support hyperplane $L$ of $F(S_{j-1}[p],p)$ is not a support hyperplane of $S_{j-1}[p]$, then $L$ passes through $p$ and is tangent to $S_{j-1}[p]$ at some $(d-2)$-dimensional face $f$.  Since $L$ is different from the support hyperplanes of the two facets of $S_{j-1}[p]$ incident to $f$, these two hyperplanes do not pass through $p$.  So the two facets incident to $f$ induce some non-pivot bounding halfspaces of $S_{j-1}[p]$, implying that $F(S_{j-1}[p],p)$ has $O(\eps^{2(1-d)\lfloor d/2 \rfloor})$ bounding halfspaces.  
%The intersection of $m$ halfspaces in $\real^d$ has $O(m^{\lfloor d/2 \rfloor})$ complexity. So 
It follows that $\bigl|S_j[p]\bigr|$ and $\bigl|F(S_{j-1}[p],p)\bigr|$ are $O(\eps^{2(1-d){\lfloor d/2 \rfloor}^2})$.
\end{proof}

\cancel{

We first analyze the effect of repeated invocations of $\seg$.
	
\begin{lemma}\label{lem: non_null_S}
Consider a stream $\gamma = (x_1,\ldots,x_i)$ for some $i \geq 2$ processed by {\sc SeqSEG}. Let $P$ be the set of grid points of $G_{x_1}$ covered by $B_{x_1}$.  
%Consider an index $i \geq 2$ and the calls $\seg_{\varepsilon,\delta}(x_j,P_{j-1},S_{j-1})$ for all $j \in [2,i]$.  
If there is a segment $yz$ such that $d_F(\gamma,yz) \leq \delta$, then {\sc SeqSEG}$(\eps,\delta)$ does not terminate at or before $x_i$, that is, for all $j \in [2,i]$, there exists $p \in P$ such that $S_j(p) \not= \emptyset$.
\end{lemma}
\begin{proof}
Since $d_F(\gamma,yz) \leq \delta$, we have $d(x_1,y) \leq \delta$ and $d(x_i,z) \leq \delta$ which imply that $y\in B_{x_1}$ and $z\in B_{x_i}$.  Let $p$ be the grid point in $B_{x_1}$ nearest to $y$.  We prove inductively that $S_j(p) \not = \emptyset$ for $j \in [2,i]$.  In the base case of $i = 2$, $S_2(p)$ is $\conv(G_{x_2})$ which is non-empty.  
		
Consider an index $i > 2$.  We have $d(p, y) \le \varepsilon\delta/2$ by the choice of $p$.  A linear interpolation between $yz$ and $pz$ shows that $d_F(\gamma, pz)\le d_F(\gamma,yz) + \frac{1}{2}\eps\delta \leq  (1+\frac{1}{2}\varepsilon)\delta$.   A Fr\'{e}chet matching between $\gamma$ and $pz$ sends $x_1,x_2,\ldots,x_i$ to points $q_1, q_2, \ldots, q_i$ on $pz$, respectively, in this order from $p$ to $z$ such that $d(x_j,q_j) \leq (1+\frac{1}{2}\varepsilon)\delta$ for all $j \in [i]$.  Thus, $q_j \in \conv(G_{x_j})$ for all $j \in [i]$.
		
We claim that $q_j \in S_j(p)$ for all $j\in [2,i]$.  The claim is true for $q_{2}$ as $S_{2}(p) = \conv(G_{x_2})$ and we have shown that $q_2 \in \conv(G_{x_2})$.  Suppose that the claim is true for some $j \in [2,i-1]$.   Therefore, $S_j(p)$ is non-empty as $q_j$ lies in it.  It means that $\seg$ sets $S_{j+1}(p)$ to be $\conv(G_{x_{j+1}}) \cap F(S_j(p),p)$.  Recall that $q_j$ and $q_{j+1}$ lie on the line segment $pz$ in this order from $p$ to $z$.  As $q_j \in S_j(p)$, we conclude that $q_{j+1} \in F(S_j(p),p)$ by definition.  We have shown that $q_{j+1} \in \conv(G_{x_{j+1}})$ previously.  Hence, $q_{j+1} \in \conv(G_{x_{j+1}}) \cap F(S_j(p),p) = S_{j+1}(p)$.  This proves that our claim is true.  
		
It follows from our claim that $S_i(p)$ is non-empty.
%So $\seg(\eps,\delta,x_i,P,S_{i-1})$ returns a non-null $S_i$, 
%completing the inductive proof.
\end{proof}
	
	%We then show that for any $p\in P$, if the constructed $S_i(p)$ is non-empty upon the arrival of $v_i$, then $d_F(\tau[v_a,v_i], pq)\le (1+3\varepsilon)\delta$ for any point $q \in S_i(p)$.
	
	The next result shows that $pq$ is a good segment for any point $q \in S_i(p)$.
	
\begin{lemma}\label{lem: dist_ub_S}
Suppose that {\sc SeqSEG} does not terminate when processing a stream $\gamma = (x_1,\ldots,x_i)$ for some $i \geq 2$.  Let $P$ be the set of grid points of $G_{x_1}$ covered by $B_{x_1}$.   If {\sc SeqSEG} constructs a non-empty $S_i(p)$ for some $p \in P$, then $d_F(\gamma, pq)\le (1+\varepsilon)\delta$ for every point $q\in S_i(p)$.  
\end{lemma}
\begin{proof}
We prove the lemma by induction on $i$.
Recall that $d(p,x_1) \leq (1+\frac{1}{2}\varepsilon)\delta$ as $p$ is a grid point covered by $B_{x_1}$.  Similarly, $d(q,x_2) \leq (1+ \varepsilon)\delta$ for  every $q \in \conv(G_{x_2})$.  As $S_{2}(p) = \conv(G_{x_2})$, which is non-empty, we get $d(q,x_2) \leq (1+\varepsilon)\delta$ for every point $q \in S_2(p)$.  Then, a linear interpolation between the segments $pq$ and $x_1x_2$ shows that $d_F(\gamma[x_1,x_2],pq) \leq (1+\varepsilon)\delta$.  This settles the base case.  

Inductively, assume that for every $j \in [2,i-1]$, if $S_j(p)$ is non-empty, then $d_F(\gamma[x_1,x_j],pq) \leq (1+\varepsilon)\delta$ for every point $q \in S_j(p)$ .

Suppose that $S_i(p)$ is non-empty.  The procedure $\seg$ must have constructed a non-empty $S_j(p)$ for all $j \in [2,i-1]$.  Moreover, $\seg$ sets $S_i(p) = \conv(G_{x_i}) \cap F(S_{i-1}(p),p)$.  Therefore, $S_i(p) \subseteq F(S_{i-1}(p),p)$, which implies that for every point $q \in S_i(p)$, the segment $pq$ intersects $S_{i-1}(p)$.  Let $y$ be a point in $pq \cap S_{i-1}(p)$.  As $y \in S_{i-1}(p)$, the induction assumption says that $d_F(\gamma[x_1,x_{i-1}],py) \leq (1+\varepsilon)\delta$.  Recall that $q \in S_i(p) \subseteq \conv(G_{x_i})$.  Therefore, $d(q,x_i) \leq (1+\varepsilon)\delta$.  Then, a linear interpolation between the segments $yq$ and $x_{i-1}x_i$ shows that $d_F(\gamma,pq) \leq (1+\varepsilon)\delta$.
\end{proof}
	
%Lemmas~\ref{lem: non_null_S}~and~\ref{lem: dist_ub_S} above are the guarantees offered by repeated invocations of $\seg$.  
Next, we analyze the working storage.
%To this end, we bound the number of vertices of $S_i(p)$. 
We first discuss what the boundary of  $F(S_{i-1}(p), p)$ looks like.     Consider the set of vectors $V_{i-1}(p) = \{ v - p : \text{vertex } v \text{ of $S_{i-1}(p)$}\}$.  A \emph{conical combination} of vectors in $V_{i-1}(p)$ is $\sum_{\nu \in V_{i-1}(p)} \alpha_{\nu} \! \cdot \! \nu$ for some non-negative coefficients $\alpha_{\nu}$'s.  Let $\psi_{i-1}(p)$ be the set of all conical combinations of vectors in $V_{i-1}(p)$.  Cheng and Huang~\cite{cheng2023curve} proved the following result.  %{\color{red} Add a figure to illustrate the geometric construct.}
	
\begin{lemma}[{\cite[Lemma~10]{cheng2023curve}}]
\label{lem:F}
$F(S_{i-1}(p), p)$ is the Minkowski sum of $S_{i-1}(p)$ and $\psi_{i-1}(p)$.
\end{lemma}

\begin{figure}
\centering{\includegraphics[scale=0.5]{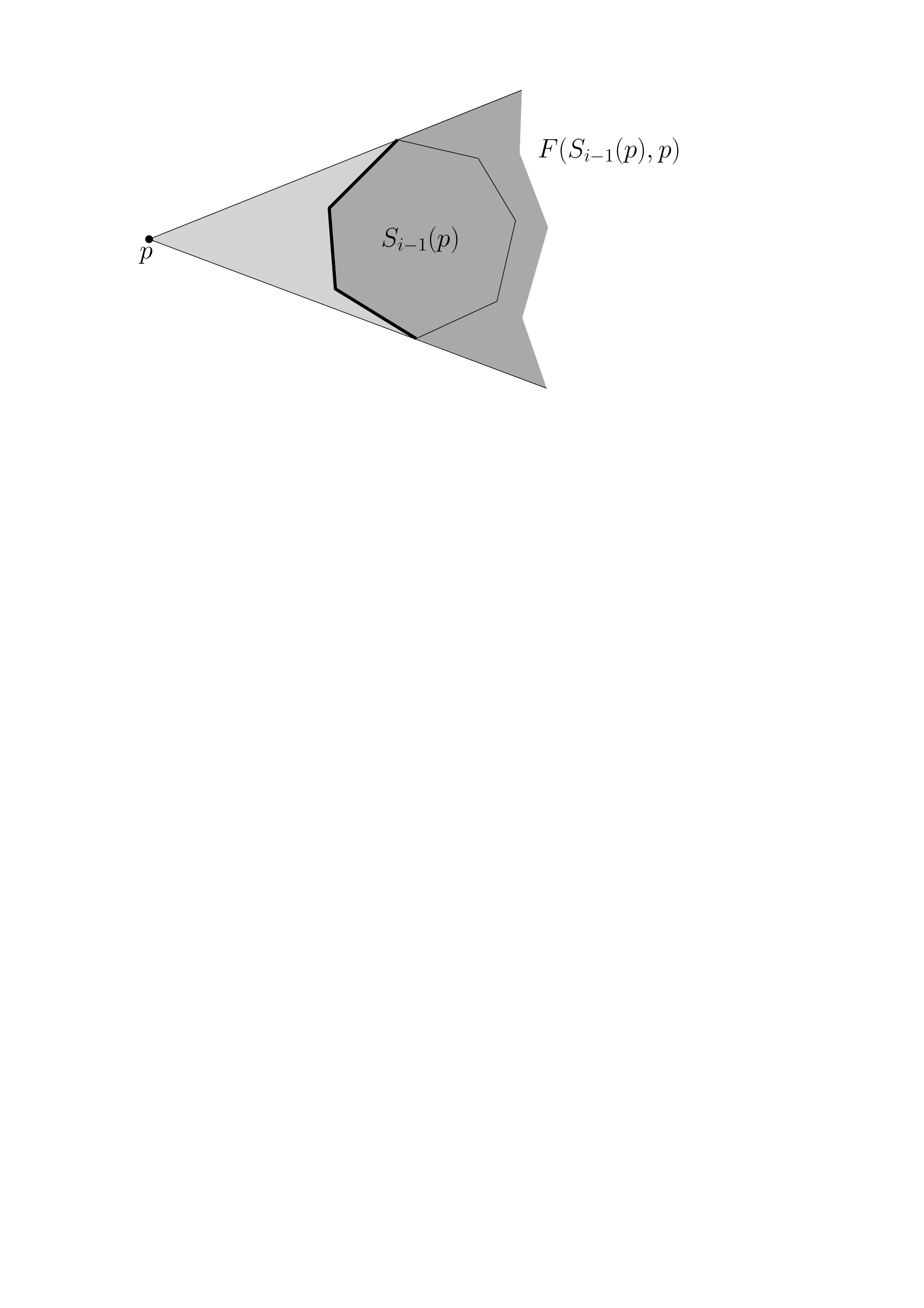}}
\caption{The shaded cone with apex $p$ is $\{p + \nu : \nu \in \psi_{i-1}(p)\}$.  The darkly shaded subset of the cone is $F(S_{i-1}(p), p)$.  The bold boundary chain of $S_{i-1}(p)$ is its inner boundary.}\label{fig:construct}
\end{figure}

In the plane, $F(S_{i-1}(p),p) = \mathbb{R}^2$ if $p$ is inside $S_{i-1}(p)$.  If $p \not\in S_{i-1}(p)$, then $\psi_{i-1}(p)$ is the set of vectors that shoot from $p$ towards $S_{i-1}(p)$.  That is, $\{p + \nu : \nu \in \psi_{i-1}(p)\}$ is the convex cone of points delimited by the tangents from $p$ to $S_{i-1}(p)$.  Figure~\ref{fig:construct} gives an example.  By Lemma~\ref{lem:F}, $F(S_{i-1}(p),p)$ consists of $S_{i-1}(p)$ and the subset of $\mathbb{R}^2$ that is hidden from $p$ by $S_{i-1}(p)$.  We call the boundary chain of $S_{i-1}(p)$ that separates $p$ from $F(S_{i-1}(p),p)$ the \emph{inner boundary} of $S_{i-1}(p)$.   The boundary of $F(S_{i-1}(p),p)$ consists of the inner boundary of $S_{i-1}(p)$ and two halflines that start from the endpoints of the inner boundary of $S_{i-1}(p)$ and shoot away from $p$ along the tangents from $p$ to $S_{i-1}(p)$.  
	
We focus on characterizing the boundary of $S_i(p)$.  We call a boundary edge of $S_i(p)$ \emph{inherited} if it lies on an edge of $\conv(G_{x_j})$ for some $j$.  We call the other boundary edges of $S_i(p)$ \emph{non-inherited}.  
	
\begin{lemma}\label{lem: lower_boun_comp}
%If $p \not\in S_{i-1}(p)$, 
The boundary of $S_i(p)$ has at most two non-inherited edges, and these non-inherited edges lie on rays that shoot from $p$. 
\end{lemma}
\begin{proof}
We prove the lemma by induction.  In the base case, $i = 2$, $S_1$ is null, and
%so the condition $p \not\in S_1(p)$ is vacuous.  
%The lemma is true for $i = 2$ because 
$S_2(p) = \conv(G_{x_2})$.  All edges of $S_{2}(p)$ are thus inherited.  
		
Assume inductively that the lemma is true for $S_{i-1}(p)$ for some $i \geq 3$.  If $S_i(p)$ is empty, there is nothing to prove.  Suppose that $S_i(p) \not= \emptyset$. Therefore, $S_{i-1}(p)$ is also non-empty by the working of $\seg$.  
%If $p \in S_{i-1}(p)$, then $F(S_{i-1}(p),p) = \mathbb{R}^2$, which implies that $S_i(p) = \conv(G'_{x_i}) \cap F(S_{i-1}(p),p) = \conv(G'_{x_i})$.  So all edges of $S_i(p)$ are inherited.  

If $p \in S_{i-1}(p)$, then $F(S_{i-1}(p),p) = \mathbb{R}^2$ and $S_i(p) = \conv(G_{x_i}) \cap F(S_{i-1}(p),p) = \conv(G_{x_i})$.  So all edges of $S_i(p)$ are inherited.

Suppose that $p \not\in S_{i-1}(p)$.  As discussed earlier, the boundary of $F(S_{i-1}(p),p)$ consists of the inner boundary of $S_{i-1}(p)$ and two infinite edges that shoot from $p$ along the tangents from $p$ to $S_{i-1}(p)$.  If $S_{i-1}(p)$ has a non-inherited edge $e$, by induction assumption, this edge lies on a ray that shoots from $p$.   Therefore, the edge $e$ must lie on a tangent from $p$ to $S_{i-1}(p)$.  It implies that the inner boundary of $S_{i-1}(p)$ consists of inherited edges.  When we intersect $F(S_{i-1}(p),p)$ with $\conv(G_{x_i})$ to form $S_i(p)$, the intersection $\conv(G_{x_i}) \cap F(S_{i-1}(p),p)$ can only be bounded by the inner boundary of $S_{i-1}(p)$, the boundary of $\conv(G_{x_i})$, and the tangents from $p$ to $S_{i-1}(p)$.   It follows that at most two boundary edges of $S_i(p)$ can be non-inherited, and they lie on the tangents from $p$ to $S_{i-1}(p)$.
\end{proof}
	
We are ready to the bound the number of vertices of $S_i(p)$.
	
\begin{lemma}\label{lem: vertex_number_UB}
Suppose that {\sc SeqSEG} does not terminate when processing $(x_1,\ldots,x_i)$ for some $i \geq 2$.  Then, $|P| = O(\eps^{-2})$, $\conv(G_{x_i})$ has $O(1/\eps)$ vertices, and $S_i(p)$ has $O(1/\varepsilon)$ vertices for every $p \in P$.
\end{lemma}
\begin{proof}
%Consider the data stream $x_1,x_2,\ldots$.  Let$a \geq 2$ be the smallest index such that $\seg_{\eps,\delta}(x_{a+1},P_a,S_a)$ returns null.  It suffices to prove the lemma for $x_1,x_2,\ldots,x_a$ because we will reinitialize and process $x_{a+1},x_{a+2},\ldots$ in the same way.
%
%Take any $i \in [a]$.  
Since $B_{x_1}$ overlaps with $O(\eps^{-2})$ cells, we have $|P| = O(\eps^{-2})$.  
%Since $\text{\sc SeqSEG}_{\varepsilon,\delta}$ does not terminate at or before $x_i$,$\seg_{\eps,\delta}$ returns a non-null $S_j$ for all $j \in [2,i]$, we have $P_i = P_1$, which means that $|P_i| = |P_1| = O(\eps^{-2})$.  
$\conv(G_{x_i})$ can be viewed as the convex hull of the grid cells intersected by the boundary of $B_{x_i}$.  If we divide the boundary of $B_{x_i}$ by a horizontal line through its center, each resulting semi-circle intersects $O(1/\eps)$ grid cells.  So $\conv(G_{x_i})$ is the convex hull of $O(1/\eps)$ grid cells, which has $O(1/\eps)$ vertices.  By Lemma~\ref{lem: lower_boun_comp}, all except possibly two edges of $S_i(p)$ are inherited.  An inherited edge is contained in an edge of $\conv(G_{x_j})$ for some $j$.  Observe that the $\conv(G_{x_j})$'s are translates of each other.  It follows that there are $O(1/\varepsilon)$ distinct slopes among the edges of all $\conv(G_j)$'s.  As $S_i(p)$ is convex, at most two inherited edges of $S_i(p)$ can have the same slope.  Hence, $S_i(p)$ has $O(1/\varepsilon)$ edges and hence $O(1/\varepsilon)$ vertices.
\end{proof}
	
We are now ready to analyze the performance of {\sc SeqSEG}.
	
\begin{lemma}\label{lem: seg}
{\sc SeqSEG} uses $O(\varepsilon^{-3})$ working storage and processes each vertex in $O(\varepsilon^{-3})$ time.
\end{lemma}
\begin{proof}
Upon the arrival of the vertex $x_i$, $\seg$ stores $\conv(G_{x_i})$ as well as $S_{i-1}(p)$, $S_i(p)$, and $F(S_{i-1}(p),p)$ for every $p \in P$.   By Lemma~\ref{lem: vertex_number_UB}, $|P| = O(\eps^{-2})$ and the complexities of $\conv(G_i)$, $S_{i-1}(p)$, and $S_i(p)$ are $O(1/\eps)$.  The boundary of $F(S_{i-1}(p),p)$ consists of the inner boundary of $S_{i-1}(p)$ and two infinite edges.  Hence, $\seg$ uses $O(1/\varepsilon)$ space for each $p \in P$, resulting in a working storage of $O(\varepsilon^{-3})$.
		
%Upon the arrival of $v_i$, 
The constructions of $\conv(G_{x_i})$, $F(S_{i-1}(p),p)$, and $\conv(G_{x_i}) \cap F(S_{i-1}(p),p)$ for all $p \in P$ dominate the processing time.  We have already shown that $\conv(G_{x_i})$ and $F(S_{i-1}(p),p)$ have $O(1/\varepsilon)$ complexities, and it is well known that two convex polygons can be intersected in linear time~\cite{o1982new}.  It remains to discuss the constructions of $F(S_{i-1}(p),p)$ and $\conv(G_{x_i})$.
		
If $p \in S_{i-1}(p)$, then $F(S_{i-1}(p),p) = \mathbb{R}^2$; otherwise, the boundary of $F(S_{i-1}(p),p)$ consists of the inner boundary of $S_{i-1}(p)$ and two infinite edges that lie along the tangents from $p$ to $S_{i-1}(p)$.  It is well-known that the tangents from $p$ to $S_{i-1}(p)$ can be found by binary searches in $O(\log \frac{1}{\eps})$ time.
		
Consider $\conv(G_{x_i})$.  We divide the boundary of $B_{x_i}$ by a horizontal line through its center into two semi-circles.  Take the upper semi-circle.  Traverse it from left to right to find the sequence $Y$ of $O(1/\eps)$ grid cells intersected by the upper semi-circle.  The upper convex hull of the upper boundary of $Y$ is the upper convex hull of $\conv(G_{x_i})$.  Since the upper boundary of $Y$ is vertically monotone, its upper convex hull can be constructed in $O(1/\eps)$ time using Graham scan.
\end{proof}

\section{Streaming algorithm for the $\pmb \delta$-simplification problem}

\cancel{
\begin{algorithm}
\caption{{\sc Simplify}}
\label{alg:simplify}
{\bf Input:} A data stream $\tau = (v_1,v_2,\ldots)$, $\eps$, and $\delta$.

{\bf Output:} Output a curve $\sigma$ such that $d_F(\tau,\sigma) \leq (1+\varepsilon)\delta$ and $|\sigma| \leq 2\kappa(\tau,\delta)-2$.
\begin{algorithmic}[1]
\Procedure{{\sc Simplify}}{$\eps,\delta$}
\State {$S_1 \gets$ null}
\State {$P \gets$ the grid points of $G_{v_1}$ covered by $B_{v_1}$}
\State {$\tilde{\sigma} \gets (v_1)$}  \Comment{$\tilde{\sigma}$ stores the last one or two vertices in $\sigma$}
\State {$i \gets 1$}
\Repeat
    \State {$i \gets i+1$}    
    \State {$S_i \gets \seg(\eps,\delta,v_i,P,S_{i-1})$} \Comment{storage for $S_{i-1}$ is released afterwards}
    \If {$S_i = \text{null}$}
        \State {output the vertices in $\tilde{\sigma}$} \Comment {output the two vertices stored in $\tilde{\sigma}$}
        \State {$\tilde{\sigma} \gets (v_i)$}  \Comment {start a new segment}
        \State {$P \gets$ the grid points of $G_{v_i}$ covered by $B_{v_i}$}
    \Else
        \State{$p \gets$ a grid point in $P$ such that $S_i(p) \not= \emptyset$}
        \State {$q \gets$ a vertex of $S_i(p)$}
        \State {$\tilde{\sigma} \gets (p,q)$}  \Comment {$pq$ is the last edge of $\sigma$}
        %\State{$\sigma \gets \sigma \circ pq$}
    \EndIf
\Until {end of data stream}
\State {output the vertices in $\tilde{\sigma}$}
%\State {\Return $x_i$}
\EndProcedure
\end{algorithmic}
\end{algorithm}
}
 
%{\sc Simplify} starts by initializing $P_1$ to be the set of grid vertices covered by $B_{v_1}$, the array $S_1$ to be null, and the simplified curve $\sigma$ to be $v_1$.
	
%For $i \geq 2$, whenever $v_i$ arrives, {\sc Simplify} sets $S_i \gets \seg_{\varepsilon,\delta}(v_i,P_{i-1},S_{i-1})$.  If $S_i \not= \text{null}$, then a grid point $p \in P_{i-1}$ is picked such that $S_i(p) \not= \emptyset$.  Take a segment $pq$ for any vertex $q$ of $S_i(p)$.  If $S_{i-1}$ is null, then delete the last vertex of $\sigma$; otherwise, delete the last edge of $\sigma$.  Afterwards, set $\sigma \gets (\sigma \circ p) \circ q$, that is, $p$ and $q$ are appended in this order to the vertex sequence of $\sigma$.   We continue until $\seg_{\varepsilon,\delta}$ returns a null $S_j$ for some index~$j$.  Then, {\sc Simplify} sets $\sigma \gets \sigma \circ v_j$ and assigns $P_j$ to be the set of grid points covered by $B_{v_j}$.  Note that $S_j$ is null.  The above process is then repeated.
	
	%We prove that $\sigma$ is a good bi-criteria approximation of the optimal $\delta$-simplification for $\tau$.

Algorithm~\ref{alg:simplify} describes the procedure {\sc Simplify} for the $\delta$-simplification problem for a stream of vertices.
 %$\tau$ be the stream of vertices received so far.  {\sc Simplify} maintains a curve $\sigma$ such that $d_F(\tau,\sigma) \leq (1+\varepsilon)\delta$ and $|\sigma| \leq 2\kappa(\tau,\delta)-2$.  
At a high level, {\sc Simplify} simulates repeated calls to {\sc  SeqSEG} with some $O(1)$-time bookkeeping steps in between.  Let $v_{i_1}, v_{i_2}, \ldots$ be vertices at which {\sc Simplify} produces a null array $S_{i_j}$.  Every subcurve $\tau[v_j,v_{i_{j+1}-1}]$ is simplified to a line segment.  The output of {\sc Simplify} is the list of these segment endpoints in order, representing the concatenation of these line segments.  
%The following result proves its performance.

\begin{theorem}\label{thm: delta_main}
Let $\tau$ be a polygonal curve in $\mathbb{R}^2$ that arrives in a data stream.  For every $\delta > 0$ and every $\varepsilon \in (0,1)$, {\sc Simplify}$(\eps,\delta)$ outputs a curve $\sigma$ such that $d_F(\tau, \sigma)\le (1+\varepsilon)\delta$ and $|\sigma|\le 2\kappa(\tau, \delta)-2$.  It uses $O(\varepsilon^{-3})$ working storage and processes each vertex in $O(\varepsilon^{-3})$ time.
\end{theorem}
\begin{proof}
The working storage and processing time follow from Lemma~\ref{lem: seg}.
%and the discussion before the theorem statement.
		
We first prove that $d_F(\tau, \sigma) \leq (1+\varepsilon)\delta$.  Let $(w_1, w_2,...)$ denote the vertex sequence of $\sigma$.  The algorithm generates a null array $S_i$ for some vertices $v_i$ of $\tau$.  Let these vertices be $v_{i_1}, v_{i_2},\ldots$ in this order along $\tau$.  Clearly, $i_1 = 1$.  Let $P_{i_j}$ denote the set of grid points of $G_{v_j}$ covered by $B_{v_j}$.   By construction, $S_{i_j+1}(p)$ is $\conv(G_{v_{i_j+1}})$ for all $p \in P_{i_j}$ which is non-empty.  Therefore, the index $i_{j+1}$ must be greater than $i_j+1$.  It means that some vertex of $\tau$ lies between $v_{i_{j}}$ and $v_{i_{j+1}}$.  As seen in the pseudocode of {\sc Simplify}, the edge $w_{2j-1}w_{2j}$ of $\sigma$, which is produced as a simplification of $\tau[v_{i_j},v_{i_{j+1}-1}]$, satisfies the properties that $w_{2j-1} \in P_{i_j}$ and $w_{2j} \in S_{i_{j+1}-1}(w_{2j-1})$.  By Lemma~\ref{lem: dist_ub_S}, $d_F(\tau[v_{i_j}, v_{i_{j+1}-1}], w_{2j-1}w_{2j}) \leq (1+\varepsilon)\delta$. For any $i \in [i_j,i_{j+1}-1]$, $v_i$ is matched to a point $p_i \in w_{2j-1}w_{2j}$ by a Fr\'echet matching between $\tau[v_{i_j}, v_{i_{j+1}-1}]$ and $w_{2j-1}w_{2j}$.  Therefore, $d(v_i, p_i)\le (1+\varepsilon)\delta$. We construct a matching between $\tau$ and $\sigma$ as follows.
		
First, we match $w_{2j-1}$ and $w_{2j}$ to $v_{i_j}$ and $v_{i_{j+1}-1}$, respectively, for all $j$. This fixes the matching partners of all vertices of $\sigma$.  Then, for every $j$ and every $i \in [i_j,i_{j+1}-1]$, we match $v_i$ to $p_i$.  This fixes the matching partners of all vertices of $\tau$.  Note that the assignment of matching partners as described above respects the ordering of both $\tau$ and $\sigma$.  Also, the distance between a vertex and its matching partner is at most $(1+\varepsilon)\delta$.  The remaining unmatched points of $\tau$ and $\sigma$ form open line segments.  Therefore, we can use linear interpolation to match them, which  maintains the distance bound of $(1+\varepsilon)\delta$.   This completes the proof that $d_F(\tau, \sigma)\le (1+\varepsilon)\delta$. 
		
Let $\sigma^*$ be an optimal curve for the $\delta$-simplification problem.  By Lemma~\ref{lem: non_null_S}, for every edge of $\sigma^*$ other than its last edge, we introduce at most one extra edge to $\sigma$.  Thus, $|\sigma|\le 2\kappa(\tau, \delta)-2$. 
\end{proof}

	%Algorithm~\ref{alg: delta_simplification} processes every vertex of $\tau$ once by the procedure $\seg_\delta$ in $O(1/\varepsilon^2\log(1/\varepsilon))$ time. In the worst case, the algorithm will initialize a new instance of $\seg_\delta$ with the vertex in $O(1/\varepsilon^2)$ time. Plus the $O(1)$ time used to adjust $\sigma$, the algorithm processes every vertex of $\tau$ in $O(1/\varepsilon^2\log(1/\varepsilon))$ time. It implies an almost liner-time static algorithm for generating $\delta$-simplification.
	
Theorem~\ref{thm: delta_main} implies a linear-time static bicriteria approximation algorithm for the $\delta$-simplification problem in $\mathbb{R}^2$ as stated in Corollary~\ref{cor: static_delta_simplification} below.  The algorithm of Van de Kerkhof~et~al.~\cite{van2019global} also works in the static setting in $\mathbb{R}^d$ with the same quality guarantees, but the running time is quadratic in $|\tau|$.
	
\begin{corollary}\label{cor: static_delta_simplification}
Let $\tau$ be a polygonal curve in $\mathbb{R}^2$. For every $\delta > 0$ and every $\eps \in (0,1)$, one can compute a curve $\sigma$ in $O(|\tau|\varepsilon^{-3})$ time such that $d_F(\tau, \sigma)\le (1+\varepsilon)\delta$ and $|\sigma|\le 2\kappa(\tau, \delta)-2$.
\end{corollary}

}

\section{Streaming algorithm for the $\pmb k$-simplification problem}\label{sec: k-simplification}
	
\subsection{Algorithm}

Let $\sigma_i^*$ be the optimal solution for the $k$-simplification problem for $\tau[v_1,v_i]$.  If we knew $\delta_i^* = d_F(\sigma_i^*,\tau[v_1,v_i])$, we could call {\sc Simplify}$(\eps,\delta_i^*)$ to obtain an approximate solution for the $k$-simplification problem. Therefore, we try to estimate $\delta_i^{*}$ on the fly.  

We maintain an output simplified curve $\sigma_i$ in the working storage after processing every vertex $v_i$ in the stream.  The first simplified curve $\sigma_{2k-1}$ is obtained by calling {\sc Simplify}, via an invocation of a new procedure {\sc Compress}, on $\tau[v_1,v_{2k-1}]$ with $\delta$ equal to some value $\delta_{2k-1}$. Given a curve of size $2k-1$, {\sc Compress} simplifies it to a curve of size at most $2k-2$.  

For $i \geq 2k$, when a vertex $v_i$ arrives in the stream, either we update the last edge of $\sigma_{i-1}$ to produce $\sigma_i$ as in {\sc Simplify}, or we start a new segment (initially just the vertex $v_i$) as in {\sc Simplify}.  Suppose that we start a new segment.  If $|\sigma_{i-1}| < 2k-2$, we can append $v_i$ to $\sigma_{i-1}$ to form $\sigma_i$.  Otherwise, $\sigma_i$ is obtained by calling {\sc Simplify}, via an invocation of {\sc Compress},  on the concatenation $\sigma_{i-1} \circ (v_i)$ with $\delta$ equal to some value $\delta_{i}$ to be specified later.  We will prove in Lemmas~\ref{lem: delta-min}--\ref{lem: deal_with_alpha} that $\delta_i \leq (1+O(\eps))\delta^*_i$.  Afterward, we repeat the above to process the next vertex in the stream.  The above processing is elaborated in {\sc Reduce} in
Algorithm~\ref{alg: k_simplification}.  

We define a curve $\tau_i$ for every $i \geq 2k-1$ in the comments in lines~\ref{alg:tau-1}, \ref{alg:tau-2}, and~\ref{alg:3} of {\sc  Reduce}.  These curves facilitate the analysis, but they are not maintained by the algorithm as variables.  For each $i$, $\tau_i$ is the curve from which the simplification $\sigma_i$ is computed.  Therefore, $\tau_{2k-1} = \tau[v_1,v_{2k-1}]$ (line~\ref{alg:tau-1}).  Consider any $i \geq 2k$.   If we call {\sc Compress} to simplify $\sigma_{i-1} \circ (v_i)$ to a curve $\sigma_i$ of size at most $2k-2$, we set $\tau_i = \sigma_{i-1} \circ (v_i)$ (line~\ref{alg:3}).  Otherwise, $\sigma_i$ is obtained by simplifying $\tau_{i-1} \circ (v_i)$ whether we start a new segment at $v_i$ or not, so $\tau_i = \tau_{i-1} \circ (v_i)$ (line~\ref{alg:tau-2}).

The input parameter $r$ of {\sc Reduce} needs an explanation. 
We will show how to compute a lower bound $\delta_{\min} > 0$ for $d_F(\sigma^*_{2k-1},\tau[v_1,v_{2k-1}])$. For $i \geq 2k-1$, let $r_i \geq 1$ and $s_i \geq 0$ be integers such that 
\begin{equation*}
\frac{1}{\eps^{s_i}}\delta_{\min} 
\leq 
\frac{(1+\varepsilon)^{r_i-1}}{\varepsilon^{s_i}}\delta_{\min}
\leq \delta_i^*
\leq 
\frac{(1+\varepsilon)^{r_i}}{\varepsilon^{s_i}}\delta_{\min}
\leq 
\frac{1}{\eps^{s_i+1}}\delta_{\min}.
%\label{eq:def}
\end{equation*}
We will show that if we can start the processing of the data stream with an initial error tolerance of $\delta_{2k-2} = \eps(1+\varepsilon)^{r_i+1}(1-4\eps)^{-1}\delta_{\min}$, then the output curve $\sigma_i$ will be within a Fr\'{e}chet distance of $\eps^{-s_i}(1+\varepsilon)^{r_i+1}(1-4\eps)^{-1}\delta_{\min}$ from $\tau[v_1,v_i]$.  By the definitions of $r_i$ and $s_i$, the error is at most $(1+O(\eps))\delta_i^*$.  The correct values of $r_i$ and $s_i$ are unknown to us.  Nevertheless, no matter what $s_i$ is, we have $1 \leq r_i \leq \lfloor \log_{1+\eps}\frac{1}{\eps}\rfloor$.  We launch $\lfloor \log_{1+\eps}\frac{1}{\eps}\rfloor=O(\frac{1}{\varepsilon}\log \frac{1}{\varepsilon})$ independent runs of {\sc Reduce} for each $r$ between 1 and $\lfloor \log_{1+\eps}\frac{1}{\eps}\rfloor$. After processing the arriving vertex $v_i$, the run with the minimum $\delta_i$ gives the right answer. This idea of using multiple independent runs to capture the right $\delta$ was proposed in~\cite{FF2023} for streaming $k$-simplification under the discrete  Fr\'{e}chet distance.  We still have to obtain $s_i$.  This is achieved by establishing a lower bound for $\delta^*_i$ (Lemma~\ref{lem:shortcut}) and computing a particular upper bound for $\delta_i^*$ in line~\ref{alg:shortcut} of {\sc Compress}.

We assume that no three consecutive vertices in the stream are collinear.  We enforce it using $O(1)$ more working storage as follows.  Remember the last two vertices $(x, y)$ in the stream, but do not feed $y$ to {\sc Reduce} yet.  Wait until the next vertex $z$ arrives.  If $x$, $y$ and $z$ are not colinear, feed $y$ to {\sc Reduce} and remember $(y,z)$ as the last two vertices.  If $x$, $y$ and $z$ are collinear, then drop $y$, make $(x,z)$ the last two vertices, and wait for the next vertex.
%We use a variable $\tau_i$ to denote the curve simplified by calling {\sc Shortcut}; it is not needed for the algorithm, but it facilitates the discussion in the proofs.
%

Algorithm~\ref{alg:repeatsimplify} shows the procedure {\sc Compress}. If {\sc Compress} is called when processing $v_i$, its task is to simplify $\tau_i = \sigma_{i-1} \circ (v_i)$ to a curve $\sigma_i$ of size at most $2k-2$ with error estimate $\delta_i = \eps^{-t}\delta_{i-1}$, where $t$ is some judiciously chosen positive integer.  
%This simplified curve is $\sigma_i$ with error estimate $\delta_i = \eps^{-t}\delta_{i-1}$.  

%Moreover, {\sc Shortcut} returns $(P,S)$, where $P$ is the set of grid points in $G'_{x_a}$ for some vertex $x_a$ of $\sigma_{i-1} \circ (v_i)$, and for each $p \in P$, $S[p]$ is the subset of points $y\in\real^d$ such that $py$ stabs $\conv(G'_{x_a}),\ldots,\conv(G'_{v_i})$ in order.  Note that $S[p] \not= \emptyset$ for some $p \in P$.  The availability of $(P,S)$ allows {\sc Reduce} to process the next vertex in the stream to update the last segment of the simplified curve or start a new segment (initially degenerate) as in {\sc Simplify}.

\begin{algorithm}
\caption{{\sc Reduce}}
\label{alg: k_simplification}
{\bf Input:} A data stream $\tau = (v_1,v_2,\ldots)$ and three parameters $r$, $\varepsilon$, and $k$. \\
{\bf Output:} Let $\tau[v_1,v_i]$ be the prefix in the stream so far.  A curve $\sigma_i$ is maintained such that $|\sigma_i| \leq 2k-2$ and $d_F(\sigma_i,\tau[v_1,v_i]) \leq (1+O(\eps))\delta_i$.
%\leq (1+10\eps) \cdot \min\{d_F(\tau,\sigma') : |\sigma'| \leq k\}$.
\begin{algorithmic}[1]
\Procedure{Reduce}{$r,\varepsilon,k$}
\State {read the first $2k-1$ vertices $v_1,\ldots,v_{2k-1}$ of $\tau$}
%\Comment{no three consecutive vertices are collinear}
%\State{$\tau_{2k-2} \gets \tau[v_1,v_{2k-2}]$}
\State {$\delta_{\min} \gets \frac{1}{2}\min\bigl\{d(v_i,v_{i-1}v_{i+1}) : i \in [2,2k-2]\bigr\}$} \label{alg:lowerbound} 
\State {$\delta_{2k-2} \gets \eps(1+\varepsilon)^{r+1}(1-4\eps)^{-1}\delta_{\min}$}  \label{alg:1-start} 
\State{$\sigma_{2k-2} \gets \tau[v_1,v_{2k-2}]$}   
\Comment{$\tau_{2k-1} \gets \tau[v_1,v_{2k-1}]$}   \label{alg:tau-1}
%\Comment{$d_F(\tau_{2k-1},\sigma^*_{2k-1}) \geq \delta_{\min}$ by Lemma~\ref{lem: delta-min}} 
\State {$(\sigma_{2k-1},\delta_{2k-1},P,S_{2k-1}) \gets \text{{\sc Compress}}(\tau[v_1,v_{2k-1}],\varepsilon,k,\delta_{2k-2})$} \label{alg:1-end} 
\State {$i \gets 2k$}
\While{{\bf true}}
\State {read $v_i$ from the data stream}   \label{alg:2-start}
\State{$\delta_i \gets \delta_{i-1}$}
\Comment{$\tau_i \gets \tau_{i-1} \circ (v_i)$}  \label{alg:tau-2}
\State{$S_i[p] \gets \conv(G_{v_{i}}) \cap F(S_{i-1}[p],p)$ for all $p \in P$, where $G_{v_{i}}$ is defined with \hspace*{72pt}$\delta = \delta_{i}$}

\If {$S_i[p] \not= \mathrm{null}$ for some $p \in P$} \\\Comment{update the last possibly degenerate segment as in {\sc Simplify}}
    \State{$q \gets$ any vertex of $S_i[p]$}
    \State{$\sigma_i \gets \sigma_{i-1}$ with the last segment replaced by $pq$}  
    %\Comment{$|\sigma_i| = |\sigma_{i-1}|$}
\ElsIf {$|\sigma_{i-1}| < 2k-2$}  \Comment{start a new segment as in {\sc Simplify}}
    \State{$P \gets$ the grid points of $G_{v_i}$}  
    \Comment{$G_{v_i}$ is defined with $\delta = \delta_i$}
    \State{$S_i[p] \gets p$ for all $p \in P$}
    \State{$\sigma_i \gets \sigma_{i-1} \circ (v_i)$} 
    \Comment{treat $v_i$ as a degenerate segment}
    %\Comment{this degenerate segment will be replaced when processing $v_{i+1}$}  
    \label{alg:2-end}
\Else  \Comment{simplify $\sigma_{i-1}\circ (v_i)$ to produce $\sigma_i$}
    %\State{$\tau_i \gets \sigma_{i-1} \circ (v_i)$}
    \State {$(\sigma_i,\delta_i,P,S_i) \gets \text{{\sc Compress}}(\sigma_{i-1} \circ (v_i),\varepsilon,k,\delta_{i-1})$}  
    \Comment{$\tau_i = \sigma_{i-1} \circ (v_i)$} \label{alg:3}  
\EndIf
\State{delete $\sigma_{i-1}$, $\delta_{i-1}$, and $S_{i-1}$}
\State {$i \gets i +1$}
\EndWhile
\EndProcedure\end{algorithmic}
\end{algorithm}

\begin{algorithm}
\caption{{\sc Compress}}
\label{alg:repeatsimplify}
{\bf Input:} A polygonal curve $(x_1,\ldots,x_{2k-1})$ and three parameters $\varepsilon$, $k$, and $\delta$. \\
{\bf Output:} $(\zeta,\eps^{-t}\delta,P,S)$, where $\zeta$ is the output curve of calling {\sc Simplify}$(\eps,\eps^{-t}\delta)$ on $(x_1,\ldots,x_{2k-1})$ for some $t \in \mathbb{N}$, and $(P,S)$ are the output returned by {\sc Simplify} after processing $x_{2k-1}$.
\begin{algorithmic}[1]
\Function{Compress}{$(x_1,\ldots,x_{2k-1}),\varepsilon,k,\delta$}
\State {$t \gets \min\left\{i \in \mathbb{N} : i \geq 1, \eps^{-i}\delta \geq \frac{1}{2}\min\{d(x_j,x_{j-1}x_{j+1}) : j \in [2,2k-2], \, \text{$j$ is even}\}\right\}$}  \label{alg:shortcut}
%\State {$\delta' \gets \eps^{-t}\delta$}
\State{$\zeta \gets \text{output curve produced by calling {\sc Simplify}}(\eps,\eps^{-t}\delta)$ on $(x_1,\ldots,x_{2k-1})$} 
\State{$(P,S) \gets$ output returned by the above call of {\sc Simplify} after processing $x_{2k-1}$}
\State {\Return $(\zeta,\eps^{-t}\delta,P,S)$}  %\Comment{$\xi$ is a pointer}
\EndFunction
\end{algorithmic}
\end{algorithm}

\cancel{

		\begin{algorithm}
		\caption{{\sc RepeatSimplify}}
		\label{alg:repeatsimplify}
		{\bf Input:} A polygonal curve $\xi$ in a data stream and three parameters $\varepsilon$, $k$ and $\delta$.
		
		{\bf Output:} The output curve $\zeta$ of running {\sc Simplify} on $\xi$ using $\eps$ and an adjusted $\delta$ as parameters such that $|\zeta| \leq 2k-2$.  The last value of $\delta$, the last set of grid vertices $P$ used by the last call of $\seg_{\eps,\delta}$, and the array $S$ returned by the last call of $\seg_{\eps,\delta}$ are also returned.
		
		\begin{algorithmic}[1]
			\Function{RepeatSimplify}{$\xi,\varepsilon,k,\delta$}
			\Repeat
			\State{let $\xi = (x_1, x_2, \ldots, x_n)$}
			\State{run {\sc Simplify} as a static algorithm on $\xi$ with parameters $\eps$ and $\delta$}
			\State{$\zeta \gets \text{output curve of {\sc Simplify}}$}
			%\State{$x \gets \text{last vertex of $\xi$ that {\sc Simplify} gets a null output from $\seg_{\varepsilon/2,\delta}$}$}
			\State{$S \gets \text{the array returned by the call of $\seg_{\varepsilon,\delta}$ on $x_n$}$}
			\If {$S$ is null}
				\State{$P \gets \text{the grid points covered by $B_{x_n}$}$}
			\Else
				\State{$P \gets \text{the input set of grid points for the call of $\seg_{\eps,\delta}$ on $x_n$}$}
			\EndIf
			\State {$\delta \gets \beta\delta$}
			\Until{$|\zeta| \leq 2k-2$}
			\State {\Return $(\zeta,\delta,P,S)$}  %\Comment{$\xi$ is a pointer}
			\EndFunction
		\end{algorithmic}
	\end{algorithm}

}

\subsection{Analysis}

\cancel{
One can perform a one-sided test of whether a candidate value $\delta$ is less than $\delta_i^*$ by running {\sc Simplify} with $\varepsilon$ and $\delta$; if the output curve has size greater than $2k-2$, then $\delta < \delta_i^*$.  %Although this observation leads to a binary search strategy, 
However, a naive implementation requires storing the entire data stream.  We adopt an idea of Filtser and Filtser~\cite{FF2023} for curve simplification under the \emph{discrete} Fr\'echet distance: use a simplified curve of size at most $2k-2$ constructed previously for a prefix of the data stream as a proxy for that prefix.  A candidate value $\delta$ can be tested using the concatenation of this proxy and the suffix instead.

The remaining suffix may be long, so we still cannot afford to store the concatenation of the proxy and the suffix.  Therefore, we require that the suffix admits repeated invocations of $\seg$ without changing the error tolerance (after initializing with the proxy) such that the proxy together with the output of these calls of $\seg$ form a simplified curve of size at most $2k-2$.  The data stream can thus be represented by the proxy and the last solution array returned by $\seg$, satisfying the working storage requirement in the streaming setting.  
%The suffix is not explicitly maintained; rather, {\sc SeqSEG} consumes each arriving vertex and incrementally updates the solution array. 
For the above idea to work, the Fr\'{e}chet distance between the proxy and the corresponding prefix must be $O(\varepsilon \delta_i^*)$ so that the error incurred by the proxy is only $O(\varepsilon\delta_i^*)$.  
%This is a second requirement for the choice of the proxy.

The proxy is not maintained separately; rather, it is embedded within the output simplified curve maintained for the data stream.  When processing an arriving vertex $v_i$, the easy case is that we can update the last vertex or edge of the output curve without violating the size bound of $2k-2$.  Otherwise, we need to simplify the concatenation of the current output curve and $v_i$, which has size $2k-1$, down to size $2k-2$.  We present a simple procedure {\sc Compress} for this task that helps to achieve a processing time linear in~$k$.  In addition to the output simplified curve, an error estimate $\delta_i$ is also computed.

%The requirements on the proxy for $v_1,v_2,\ldots,v_i$ may be incompatible with the requirements for the proxy for $v_1,v_2,\ldots,v_{i-1}$.  

Algorithm~\ref{alg: k_simplification} gives the pseudocode of our algorithm {\sc Reduce} with $r$, $\eps$, and $k$ as parameters.  Without loss of generality, we assume that no three consecutive vertices in the data stream are collinear.  This can be enforced easily: remember the last two vertices $(x, y)$ in the data stream; if the next vertex $z$ are collinear with them, drop $y$ and make $(x,z)$ the last two vertices in the data stream.

The initial error tolerance is set in line~\ref{alg:1-start} of {\sc Reduce}.  Since the prefix $\tau[v_1,v_i]$ for $i \in [2k-2]$ can be used directly as the output curve, simplification only begins at the vertex $v_{2k-1}$.  For $i \geq 2k-1$, we use $\sigma_i$ to denote the simplified curve for $\tau[v_1,v_i]$.  We use $\tau_i$ to denote the concatenation of the proxy and the suffix, but $\tau_i$ is not a variable maintained by {\sc Reduce}.  The comments in lines~\ref{alg:1-start},~\ref{alg:1-end},~\ref{alg:2-end}, and~\ref{alg:3} of {\sc Reduce} describe the conceptual update of~$\tau_i$.  Algorithm~\ref{alg: k_simplification} calls $\seg$ repeatedly.  In between these calls, there are other update tasks for maintaining the output curve, including calling {\sc Compress} whenever the output curve size reaches $2k-1$.  {\sc Compress} returns the updated output curve $\sigma_i$ (of size at most $2k-2$), an error estimate $\delta_i$ that satisfies $d_F(\tau_i,\sigma_i) \leq (1+\eps)\delta_i$, and the set of grid point $P$ and the solution array $S_i$ for the continued invocations of $\seg$ in the future.  Given a curve $\gamma$ and a vertex $x$, we use $\gamma \circ x$ to denote the curve formed by appending $x$ to the end of $\gamma$.

}

Lemma~\ref{lem: delta-min} below gives a lower bound for the Fr\'{e}chet distance between a curve of size $m$ and another curve of size $n$ such that $m \geq 2n-1$.  An application of this result shows that $\delta_{\min}$ computed in line~\ref{alg:lowerbound} of {\sc Reduce} is a lower bound for $d_F(\sigma^*_{2k-1},\tau[v_1,v_{2k-1}])$ as $|\sigma_{2k-1}^*| = k$.
%show that $\delta_{\min}$ is a lower bound for $\delta^*$.

\begin{lemma}\label{lem: delta-min}
Let $\xi = (p_1,\ldots,p_m)$ and $\zeta = (q_1,\ldots,q_n)$ be any two curves such that $m \geq 2n-1$.  Then, $d_F(\xi,\zeta) \geq \frac{1}{2}\min\bigl\{d(p_i,p_{i-1}p_{i+1}) : i \in [2,m-1]\bigr\}$.
\end{lemma}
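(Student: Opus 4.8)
The plan is to argue by contradiction through a counting argument that rests on a single geometric observation. Write $\mu = \min\{d(p_i,p_{i-1}p_{i+1}) : i \in [2,m-1]\}$, so that $d(p_i,p_{i-1}p_{i+1}) \ge \mu$ for every $i$, and suppose toward a contradiction that $d_F(\xi,\zeta) < \mu/2$. Fix a Fr\'echet matching $\mathcal{M}$; it sends each vertex $p_i$ to a point $\phi(p_i)$ on $\zeta$ with $d(p_i,\phi(p_i)) \le d_F(\xi,\zeta) < \mu/2$, and by the monotonicity of the matching the images $\phi(p_1),\ldots,\phi(p_m)$ appear in this order along $\zeta$. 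The goal is to show that these constraints force $m \le 2n-2$, contradicting the hypothesis $m \ge 2n-1$.

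The crux is the following local claim: no three consecutive images $\phi(p_{i-1}),\phi(p_i),\phi(p_{i+1})$ can lie on a single edge of $\zeta$. Suppose they did, and call them $a,b,c$. Since they lie on one line segment and occur in order along $\zeta$, the middle point is a convex combination $b = (1-t)a + tc$ for some $t \in [0,1]$. Compare $b$ with the corresponding point $b' = (1-t)p_{i-1} + t\,p_{i+1}$ on the segment $p_{i-1}p_{i+1}$: using $d(a,p_{i-1}) < \mu/2$ and $d(c,p_{i+1}) < \mu/2$, the triangle inequality gives $d(b,b') \le (1-t)d(a,p_{i-1}) + t\,d(c,p_{i+1}) < \mu/2$, hence $d(b,p_{i-1}p_{i+1}) < \mu/2$. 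Combining this with $d(p_i,b) < \mu/2$ yields $d(p_i,p_{i-1}p_{i+1}) \le d(p_i,b) + d(b,p_{i-1}p_{i+1}) < \mu$, which contradicts $d(p_i,p_{i-1}p_{i+1}) \ge \mu$ and establishes the claim.

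Granting the claim, I would finish with a short counting step. Assign to each vertex $p_i$ an edge of $\zeta$ whose closed segment contains $\phi(p_i)$, choosing the assignment to be non-decreasing in $i$; this is possible because the $\phi(p_i)$ are ordered along $\zeta$ and each edge is a contiguous arc, the only freedom being at vertices of $\zeta$ shared by two edges. Monotonicity makes the set of vertices assigned to any fixed edge a consecutive block, and the local claim bounds the size of each block by $2$. Summing over the $n-1$ edges of $\zeta$ gives $m \le 2(n-1) = 2n-2$, the desired contradiction, so $d_F(\xi,\zeta) \ge \mu/2$. I expect the geometric estimate of the second paragraph to carry the real content, while the point requiring care in the counting is the tie-breaking at shared vertices of $\zeta$, which the monotone-assignment device is designed to resolve cleanly.
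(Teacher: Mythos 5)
Your proof is correct and takes essentially the same route as the paper's: both hinge on the pigeonhole fact that $m \geq 2n-1$ forces three consecutive vertices of $\xi$ to be matched onto a single edge of $\zeta$, combined with the same convexity/linear-interpolation estimate showing that this would put $p_i$ within distance $2\,d_F(\xi,\zeta)$ of the segment $p_{i-1}p_{i+1}$. The only difference is organizational: you argue in contrapositive form and spell out the monotone edge-assignment counting explicitly, whereas the paper asserts that pigeonhole step directly and concludes with the same estimate.
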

\begin{proof}
Let $\mathcal{M}$ be a Fr\'{e}chet matching between $\xi$ and $\zeta$.  Since $m \geq 2n-1$, there must exist an index $c \in [2,m-1]$ such that $\mathcal{M}$ matches $\xi[p_{c-1},p_{c+1}]$ to an edge of $\zeta$.  Let $xy$ be the line segment to which $\xi[p_{c-1},p_{c+1}]$ is matched by $\mathcal{M}$.  Let $r = d_{\mathcal{M}}(\xi[p_{c-1},p_{c+1}],xy)$.  We claim that $r\ge\frac{1}{2}d(p_c, p_{c-1}p_{c+1})$.   Since $d(p_{c-1},x) \leq r$ and $d(p_{c+1},y) \leq r$, by linear interpolation, every point on $xy$ is at a distance no more than $r$ from $p_{c-1}p_{c+1}$.   Let $z$ be the point in $xy$ that is matched with $p_c$ by $\mathcal{M}$.  We have $d(p_c,p_{c-1}p_{c+1}) \leq d(p_c,z) + d(z,p_{c-1}p_{c+1})$.  We have shown that $d(z,p_{c-1}p_{c+1}) \leq r$, so  $d(p_c,p_{c-1}p_{c+1}) \leq 2r$, completing the proof of our claim.  Hence,
%$\frac{1}{2}d(p_c,p_{c-1}v_{c+1}) \leq d_F(\tau[p_{c-1},p_{c+1}],xy)$ for any line segment $xy$.  Therefore, 
$d_F(\xi,\zeta) \geq r \geq \frac{1}{2}d(p_c, p_{c-1}p_{c+1}) \geq \frac{1}{2}\min\bigl\{d(p_i,p_{i-1}p_{i+1}) : i \in [2,m-1]\bigr\}$.
%$\delta_{\min}$, which implies that $\delta_{\min} \leq \delta^*$.
\end{proof}

Lemma~\ref{lem:shortcut} below shows that a curve $(p_1,\ldots,p_{2k-1})$ of size $2k-1$ can be simplified to a curve of size at most $2k-2$ using an error tolerance of $\frac{1}{2}\min\bigl\{d(p_j,p_{j-1}p_{j+1}) : j \in [2,2k-2], \, \text{$j$ is even}\bigr\}$.  By Lemma~\ref{lem:shortcut}, {\sc Compress} always returns a curve of size at most $2k-2$.
% in lines~\ref{alg:1-end} and~\ref{alg:3} of {\sc  Reduce}.

\begin{lemma}\label{lem:shortcut}
Let  $\xi=(p_1, \ldots,p_{2k-1})$.  Assume that no three consecutive vertices are collinear.  Let $\hat{\delta}$ be the minimum value such that calling {\sc Simplify}$(\eps, \hat{\delta})$ on $\xi$ produces a curve of size at most $2k-2$.  Then, 
\[
\hat{\delta}\leq \frac{1}{2}\min\bigl\{d(p_j,p_{j-1}p_{j+1}) : j \in [2,2k-2],\, \text{$j$ is even}\bigr\}\le (1+\eps)\hat{\delta}.
\]
\end{lemma}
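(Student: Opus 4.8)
The plan is to pin $\hat\delta$ between $\tfrac{1}{2(1+\eps)}D$ and $\tfrac12 D$, where $D=\min\bigl\{d(p_j,p_{j-1}p_{j+1}):j\in[2,2k-2],\ j\text{ even}\bigr\}$ and $M=\tfrac12 D$ is the target value. I will establish two facts: (a) {\sc Simplify}$(\eps,M)$ outputs a curve of size at most $2k-2$, which forces $\hat\delta\le M$; and (b) for every $\delta<M/(1+\eps)$, {\sc Simplify}$(\eps,\delta)$ outputs a curve of size more than $2k-2$, which forces $\hat\delta\ge M/(1+\eps)$, i.e.\ $M\le(1+\eps)\hat\delta$. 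No monotonicity of the output size in $\delta$ is needed: (a) places $M$ in the set of admissible tolerances, so $\hat\delta\le M$, and (b) shows every admissible tolerance is at least $M/(1+\eps)$, so $\hat\delta\ge M/(1+\eps)$.

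The geometric heart is a threshold claim: an oriented segment stabs the three balls $B(p_{j-1},\rho)$, $B(p_j,\rho)$, $B(p_{j+1},\rho)$ in order if and only if $\rho\ge\tfrac12 d(p_j,p_{j-1}p_{j+1})$. The forward direction is exactly the interpolation argument of Lemma~\ref{lem: delta-min}: if $x_{j-1}\preceq x_j\preceq x_{j+1}$ are in-order stab points, then linear interpolation along the subsegment $x_{j-1}x_{j+1}$ shows every point of it, in particular $x_j$, lies within $\rho$ of $p_{j-1}p_{j+1}$, whence $d(p_j,p_{j-1}p_{j+1})\le d(p_j,x_j)+\rho\le 2\rho$. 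For the converse, let $q$ be the point of $p_{j-1}p_{j+1}$ nearest $p_j$ and translate the base segment $p_{j-1}p_{j+1}$ by $\tfrac12(p_j-q)$; the translate passes within $\tfrac12 d(p_j,q)=\tfrac12 d(p_j,p_{j-1}p_{j+1})$ of each of $p_{j-1},p_j,p_{j+1}$, and the order of the three nearest points on it matches the order of $p_{j-1},q,p_{j+1}$ along the base segment, hence is correct.

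Next I translate this into statements about {\sc Simplify}. Recall that $\conv(G_x)$ contains $B(x,(1+\eps/2)\delta)$ and is contained in $B(x,(1+\eps)\delta)$. Thus if a segment stabs $\conv(G_{v_{j-1}}),\conv(G_{v_j}),\conv(G_{v_{j+1}})$ in order, then it stabs the radius-$(1+\eps)\delta$ balls, so the threshold claim gives $(1+\eps)\delta\ge\tfrac12 d(p_j,p_{j-1}p_{j+1})$; conversely, a segment witnessing $\tfrac12 d(p_j,p_{j-1}p_{j+1})\le(1+\eps/2)\delta$ stabs the radius-$(1+\eps/2)\delta$ balls and hence the $\conv(G)$'s. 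By Lemma~\ref{lem:stab}(i) this is precisely the condition under which, after a restart at $v_{j-1}$, the structure $S_{j+1}[p]$ is nonempty for some grid point $p\in G_{v_{j-1}}$; re-anchoring the witness segment at the grid point nearest its first stab point (a rounding that costs at most $\eps\delta/2$, absorbed by the gap between $(1+\eps/2)\delta$ and $(1+\eps)\delta$, exactly as in the proof of Lemma~\ref{lem:quality}) shows a grid point really achieves it. Call the triple around an even index $j$ \emph{breakable at tolerance $\delta$} when no such grid segment exists; by the above it is breakable if $(1+\eps)\delta<\tfrac12 d(p_j,p_{j-1}p_{j+1})$ and not breakable if $(1+\eps/2)\delta\ge\tfrac12 d(p_j,p_{j-1}p_{j+1})$.

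Finally I run the combinatorics. Because a restart sets $S[p]=p$ and immediately $S_{a+1}[p]=\conv(G_{v_{a+1}})\ne\emptyset$, every output segment spans at least two vertices unless it is restarted at the last vertex $v_{2k-1}$; hence the segments partition $v_1,\dots,v_{2k-1}$ into consecutive blocks, all of size at least $2$ except possibly a final singleton. Since the sizes sum to $2k-1$, there are at most $k$ blocks, with equality if and only if the blocks are exactly $\{v_1,v_2\},\{v_3,v_4\},\dots,\{v_{2k-3},v_{2k-2}\},\{v_{2k-1}\}$; this tight configuration occurs if and only if every even-indexed triple, with its natural restart at the preceding odd vertex, is breakable, and it produces $2k-1>2k-2$ vertices, whereas at most $k-1$ blocks produce at most $2k-2$ vertices. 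For (b), if $\delta<M/(1+\eps)$ then $(1+\eps)\delta<M\le\tfrac12 d(p_j,p_{j-1}p_{j+1})$ for every even $j$, so all even triples are breakable, the tight configuration arises, and the output exceeds $2k-2$. For (a), let $j^{*}$ attain $D$; at $\delta=M$ we have $(1+\eps/2)M>M=\tfrac12 d(p_{j^{*}},p_{j^{*}-1}p_{j^{*}+1})$, so this triple is \emph{not} breakable, the tight configuration cannot arise, the output has at most $k-1$ blocks, hence at most $2k-2$ vertices. The main obstacle is the careful bookkeeping of the two radius thresholds so that the rounding slack lands the sandwich exactly at $M/(1+\eps)$ and $M$; the underlying geometry reduces cleanly to the threshold claim, whose hard direction reuses the interpolation already proved in Lemma~\ref{lem: delta-min}.
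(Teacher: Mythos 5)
Your overall route is the same as the paper's: both proofs rest on the parity observation that every restart block of {\sc Simplify} has at least two vertices (so the alternating configuration with restarts at $v_1,v_3,\ldots,v_{2k-1}$ is the only way the output can exceed $2k-2$ vertices), the translated-chord witness segment, grid-point re-anchoring via Lemma~\ref{lem:stab}(i), and the three-point lower bound of Lemma~\ref{lem: delta-min}. The final sandwich $\hat{\delta}\le M\le(1+\eps)\hat{\delta}$ you obtain is correct. However, one intermediate step is mis-justified: the claim that a triple is \emph{not} breakable whenever $(1+\eps/2)\delta\ge\tfrac12 d(p_j,p_{j-1}p_{j+1})$. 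Your argument starts from a witness whose stab points lie within $(1+\eps/2)\delta$ of the three centers and then rounds its start point to a grid point, asserting that the $\eps\delta/2$ rounding cost is ``absorbed by the gap between $(1+\eps/2)\delta$ and $(1+\eps)\delta$.'' That absorption runs the wrong way: $\conv(G_x)$ \emph{contains} the ball of radius $(1+\eps/2)\delta$ but is only \emph{contained in} the ball of radius $(1+\eps)\delta$, so after rounding, stab points at distance up to $(1+\eps)\delta$ from the centers need not lie in the $\conv(G)$'s, and Lemma~\ref{lem:stab}(i) cannot be invoked. The absorption used in the proof of Lemma~\ref{lem:quality} (and in the paper's proof of the present lemma) goes from $\delta$ to $(1+\eps/2)\delta$: the witness must stab the radius-$\delta$ balls before rounding. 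Hence the threshold your rounding argument actually yields is ``not breakable if $\delta\ge\tfrac12 d(p_j,p_{j-1}p_{j+1})$,'' not the $(1+\eps/2)\delta$ version you state.

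The damage is contained because the single invocation of this claim, in your part (a), occurs at $\delta=M$ for the minimizing even index $j^*$, where $\tfrac12 d(p_{j^*},p_{j^*-1}p_{j^*+1})=M=\delta$; the corrected threshold applies there (with equality), exactly as in the paper, where the witness stabs radius-$\delta$ balls and rounding lands the stab points inside the radius-$(1+\eps/2)\delta$ balls and hence inside the $\conv(G)$'s. So if you restate your ``not breakable'' criterion as $\delta\ge\tfrac12 d(p_j,p_{j-1}p_{j+1})$, the proof goes through; as written, the intermediate claim is unsupported. Everything else --- the breakable direction (which correctly uses $\conv(G_x)\subseteq B(x,(1+\eps)\delta)$ together with the forward half of your threshold claim), the block-counting dichotomy, and your remark that no monotonicity of the output size in $\delta$ is needed --- is sound and mirrors the paper's argument.
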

\begin{proof}
Consider the run of {\sc Simplify}$(\eps, \hat{\delta})$ on $\xi$.  Let $p_{i_1}, p_{i_2}\ldots$ be the vertices in order along $\xi$ at which {\sc Simplify} starts a new segment. 
Note that $i_1=1$ and $i_{j+1} \geq i_j+2$ for all $j$.
    
{\sc Simplify} replaces each subcurve $\xi[p_{i_j},p_{i_{j+1}-1}]$ by a segment.  The simplified curve is a concatenation of these segments.  Since $|\xi| = 2k-1$, there must be an index $j$ such that $i_{j+1} > i_j +2$ so that {\sc Simplify}$(\eps, \hat{\delta})$ simplifies $\xi$ to a curve of size at most $2k-2$.  Let $b = \min\{j : i_{j+1} > i_j + 2 \}$.  Because $i_1=1$ and $i_{j+1} = i_j+2$ for all $j \in [b-1]$, we know that $i_j$ is odd for every $j \in [b]$ and $i_b + 1$ is even. It follows from the choice of $b$ that {\sc Simplify} does not start a new segment at $p_{i_b+2}$.   By the working of {\sc Simplify}$(\eps,\hat{\delta})$, there is a grid point $p$ in $G_{p_{i_b}}$ (defined with $\delta = \hat{\delta}$) such that $S_{i_b+2}[p] \not= \emptyset$.  By the reasoning in the proof of Lemma~\ref{lem:quality}, for any point $x \in S_{i_b+2}[p]$, $d_F(px,\xi[p_{i_b},p_{i_b+2}]) \leq (1+\eps)\hat{\delta}$. 
 On the other hand, by Lemma~\ref{lem: delta-min}, $d_F(px,\xi[p_{i_b},p_{i_b+2}]) \geq \frac{1}{2}d(p_{i_b+1}, p_{i_b}p_{i_b+2})$.  Hence, $(1+\eps)\hat{\delta} \geq 
\frac{1}{2}d(p_{i_b+1}, p_{i_b}p_{i_b+2}) \geq  
\frac{1}{2}\min\bigl\{d(p_j,p_{j-1}p_{j+1}) : j \in [2,2k-2],\, \text{$j$ is even}\bigr\}$.

It remains to argue that $\hat{\delta} \leq \frac{1}{2}\min\bigl\{d(p_j,p_{j-1}p_{j+1}) : j \in [2,2k-2],\, \text{$j$ is even}\bigr\}$.  Let $\delta = \frac{1}{2}\min\bigl\{d(p_j,p_{j-1}p_{j+1}) : j \in [2,2k-2], j\text{ is even}\bigr\}$.  It suffices to show that calling {\sc Simplify}$(\eps,\delta)$ on $\xi$ returns a curve of size at most $2k-2$.   Let $p_{a_1}, p_{a_2},\ldots$ be the vertices in order along $\xi$ at which {\sc Simplify}$(\eps,\delta)$ starts a new segment.  Let $c$ be the even number in $[2,2k-2]$ such that $d(p_c,p_{c-1}p_{c+1}) = \min\bigl\{d(p_j,p_{j-1}p_{j+1}) : j \in [2,2k-2],\, \text{$j$ is even}\bigr\}$.  

Suppose that there is an index $j$ that satisfies $a_j \leq c-1$ and $a_{j+1} > a_j+2$.  Since $\xi[p_{a_j},p_{a_{j+1}-1}]$ is simplified to a line segment, the prefix $\xi[p_1,p_{a_{j+1}-1}]$ is simplified to a curve of size at most $a_{j+1}-2$, i.e., at least one vertex less.  We are done because the size of the whole simplified curve must be at most $2k-2$.  

The remaining possibility is that $a_{j+1} = a_j + 2$ for all $a_j \in [c-1]$. Since $a_1 = 1$ and $c$ is even, we have $a_1 = 1, a_2 = 3, a_3 = 5, \ldots, a_j = c-1$.  It follows that {\sc Simplify} starts a new segment at $p_{c-1}$. 
%Afterward, {\sc Simplify} makes $P$ the set grid points of $G_{p_{c-1}}$ covered by $B_{p_{c-1}}$, and then {\sc Simplify} calls $\seg$ to process $p_c$. 
Let $r=d(p_c, p_{c-1}p_{c+1})$.  

We claim that there is a line segment $xy$ such that $d_F(xy,\xi[p_{c-1}, p_{c+1}])\leq r/2$.  Let $z$ be the nearest point in $p_{c-1}p_{c+1}$ to $p_c$.   So $zp_c$ has length $r$.  Let $z'$ be the midpoint of $zp_c$.  Translate $p_{c-1}p_{c+1}$ by the vector $z'-z$ to obtain a segment $xy$.  Observe that $z' \in xy$ and $d(x,p_{c-1}) = d(y,p_{c+1}) = d_F(z',p_c) = r/2$.  By linear interpolations between $xz'$ and $p_{c-1}p_c$ and between $z'y$ and $p_cp_{c+1}$, we have $d_F(xy,\xi[p_{c-1},p_{c+1}]) \leq r/2$, establishing our claim.

By the choice of $c$, $\frac{1}{2}r = \frac{1}{2}d(p_c,p_{c-1}c_{c+1}) = \delta$.  Then, our claim implies that $xy$ stabs balls of radii $\delta$ centered at $p_{c-1}, p_c, p_{c+1}$ in order.  There is a grid point $p'$ in $G_{p_{c-1}}$ (defined using $\delta$) within a distance $\eps\delta/2$ from $x$.  By a linear interpolation between $p'y$ and $xy$, we know that $p'y$ stabs $\conv(G_{p_{c-1}}), \conv(G_{p_c}), \conv(G_{p_{c+1}})$ in order.  Therefore, $S_{c+1}[p'] \not= \emptyset$ by Lemma~\ref{lem:stab}(i), implying that {\sc Simplify} does not start a new segment at $p_{c+1}$.
%and Lemma~\ref{lem: non_null_S}, 
Hence, {\sc Simplify} replaces a subcurve $\xi[p_{c-1},p_e]$ for some $e \geq c+1$ by a segment and produces   
%Hence, calling {\sc Simplify}$(\eps,\delta)$ on $\xi$ 
a curve of size at most $2k-2$.  This proves that $\delta \geq \hat{\delta}$.
\end{proof}

Recall the curves $\tau_i$ for $i \geq 2k-1$ defined in the comments in lines~\ref{alg:tau-1}, \ref{alg:tau-2}, and \ref{alg:3}.  The next result follows immediately from the working of {\sc Reduce}.

\begin{lemma}
\label{lem:epoch}
For $i \geq 2k-1$, $\sigma_i$ computed by {\sc Reduce} can also be produced by calling {\sc Simplify}$(\eps,\delta_i)$ on $\tau_i$.
\end{lemma}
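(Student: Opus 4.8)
The plan is to prove the statement by induction on $i$, starting from the base case $i = 2k-1$. Since the claim concerns only the output curve $\sigma_i$ but {\sc Reduce} updates it incrementally using the arrays $S_i[\cdot]$ and the grid‑point set $P$, I would strengthen the induction hypothesis to assert that after {\sc Reduce} processes $v_i$, its entire internal state—the value $\delta_i$, the set $P$, the array $S_i$, and the decomposition of $\sigma_i$ into a committed prefix plus a trailing (possibly degenerate) segment—coincides exactly with the state that {\sc Simplify}$(\eps,\delta_i)$ would be in immediately after reading the last vertex of $\tau_i$. The lemma then follows at once, because the output curve of {\sc Simplify} is precisely the concatenation of the segments output so far with the buffer $\tilde{\sigma}$, as specified in Algorithm~\ref{alg:simplify}.

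For the base case $i = 2k-1$, line~\ref{alg:1-end} sets $\sigma_{2k-1}$ and $(P,S_{2k-1})$ to the values returned by {\sc Compress}$(\tau[v_1,v_{2k-1}],\eps,k,\delta_{2k-2})$, with $\delta_{2k-1}=\eps^{-t}\delta_{2k-2}$. By the definition of {\sc Compress} in Algorithm~\ref{alg:repeatsimplify}, these are exactly the output curve and the post‑processing state of {\sc Simplify}$(\eps,\delta_{2k-1})$ run on $\tau_{2k-1}=\tau[v_1,v_{2k-1}]$, so the strengthened hypothesis holds. For the inductive step I would split into the three branches of the {\bf while} loop. In the first branch (some $S_i[p]\neq\mathrm{null}$) we have $\delta_i=\delta_{i-1}$ and $\tau_i=\tau_{i-1}\circ(v_i)$, and the update $S_i[p]\gets\conv(G_{v_i})\cap F(S_{i-1}[p],p)$ with $G_{v_i}$ taken at $\delta=\delta_i$ is verbatim the step {\sc Simplify} performs on reading one further vertex, while replacing the trailing segment by $pq$ mirrors {\sc Simplify} setting $\tilde{\sigma}\gets(p,q)$. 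In the second branch (all $S_i[p]$ empty but $|\sigma_{i-1}|<2k-2$) again $\delta_i=\delta_{i-1}$ and $\tau_i=\tau_{i-1}\circ(v_i)$; emptiness of every $S_i[p]$ is exactly the event that triggers {\sc Simplify} to commit its buffer and open a new degenerate segment at $v_i$, matching $\sigma_i\gets\sigma_{i-1}\circ(v_i)$ together with the resets $P\gets$ grid points of $G_{v_i}$ and $S_i[p]\gets p$. In the third branch, line~\ref{alg:3} sets $\tau_i=\sigma_{i-1}\circ(v_i)$ and obtains $(\sigma_i,\delta_i,P,S_i)$ directly from {\sc Compress}, which, as in the base case, runs {\sc Simplify}$(\eps,\delta_i)$ on $\tau_i$ from scratch; the conclusion and the state invariant are therefore immediate from the return values.

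The only point requiring care—and the step I expect to be the main obstacle—is the first two branches, where I must justify that {\sc Reduce}'s incremental array updates genuinely reproduce the behavior of {\sc Simplify} on the longer curve $\tau_i$. This rests on two facts: the induction hypothesis already guarantees that $(P,S_{i-1})$ equals {\sc Simplify}'s state after $\tau_{i-1}$, and within an epoch—between two consecutive calls to {\sc Compress}—the error tolerance is frozen, i.e.\ $\delta_i=\delta_{i-1}$, so that every $\conv(G_{v_a})$ appearing in the recurrence of Lemma~\ref{lem:stab} is defined with the same radius in both computations. Consequently the recurrence $S_a[p]=\conv(G_{v_a})\cap F(S_{a-1}[p],p)$ evolves identically in {\sc Reduce} and in {\sc Simplify}, the branch selected by {\sc Reduce} coincides with the action {\sc Simplify} would take, and the induction closes.
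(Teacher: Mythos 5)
Your proof is correct and follows essentially the same reasoning as the paper, which treats the lemma as immediate from the working of {\sc Reduce}: every call to {\sc Compress} restarts {\sc Simplify} from scratch on $\tau_i$, and between such calls the error tolerance is frozen so that {\sc Reduce}'s incremental updates of $(P,S_i,\tilde{\sigma})$ are verbatim the steps {\sc Simplify} would perform. Your strengthened state-invariant induction is simply a more formal rendering of the paper's observation, which anchors at the last {\sc Compress} call and notes that $\delta_a = \delta_{a+1} = \cdots = \delta_i$ thereafter.
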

\cancel{
\begin{proof}
Let $a$ be the largest index in $[2k-1,i]$ such that {\sc Reduce} calls {\sc Compress} when processing $v_{a}$.  Then, the computations performed by this call of {\sc Compress} as well as the subsequent executions of lines~\ref{alg:2-start}--\ref{alg:2-end} of {\sc Reduce} for processing $v_{a+1},\ldots,v_i$
%.  The following result is clear because lines~\ref{alg:2-start}--\ref{alg:2-end} of {\sc Reduce} 
determine line segments in $\sigma_i$ in the same manner as in {\sc Simplify}.  Note that $\delta_a = \delta_{a+1} = \cdots = \delta_i$.
\end{proof}
}

We can also show that $\tau_i$ is a faithful approximation of $\tau[v_1,v_i]$.

\begin{lemma}\label{lem: deal_with_beta}
Assume that $\eps \in (0,1/3]$.  For $i \geq 2k-1$, $d_F(\tau_i,\tau[v_1,v_i]) \leq 2\eps\delta_i$.
%2\hat{\delta_i/\beta$, provided that $\beta \geq 2/(1-\varepsilon)$.  
\end{lemma}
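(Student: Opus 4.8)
The plan is to prove the bound by induction on $i$, starting from the base case $i = 2k-1$. In the base case, $\tau_{2k-1} = \tau[v_1,v_{2k-1}]$ by line~\ref{alg:tau-1}, so the two curves coincide and $d_F(\tau_{2k-1},\tau[v_1,v_{2k-1}]) = 0 \leq 2\eps\delta_{2k-1}$, which settles the base case trivially.

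For the inductive step, fix $i \geq 2k$ and assume $d_F(\tau_{i-1},\tau[v_1,v_{i-1}]) \leq 2\eps\delta_{i-1}$. The first fact I would record is an \emph{append} observation: for any two curves $A$, $B$ and any point $p$, appending $p$ to both yields $d_F(A \circ p, B \circ p) \leq d_F(A,B)$. This holds because a Fr\'{e}chet matching between $A$ and $B$ pairs their endpoints, and the two appended segments can then be matched by linear interpolation, whose distance at parameter $s \in [0,1]$ equals $(1-s)\,d(a_{\mathrm{end}},b_{\mathrm{end}}) \leq (1-s)\,d_F(A,B)$ and so never exceeds $d_F(A,B)$. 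Since both $\tau_i$ and $\tau[v_1,v_i]$ are formed by appending the common vertex $v_i$, this reduces each case to bounding the Fr\'{e}chet distance between the two curves with $v_i$ deleted.

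There are two cases according to how {\sc Reduce} forms $\tau_i$. In the first case (line~\ref{alg:tau-2}), $\tau_i = \tau_{i-1} \circ (v_i)$ and $\delta_i = \delta_{i-1}$; the append fact together with the induction hypothesis gives $d_F(\tau_i,\tau[v_1,v_i]) \leq d_F(\tau_{i-1},\tau[v_1,v_{i-1}]) \leq 2\eps\delta_{i-1} = 2\eps\delta_i$. In the second case (line~\ref{alg:3}), {\sc Compress} is invoked, so $\tau_i = \sigma_{i-1} \circ (v_i)$ and $\delta_i = \eps^{-t}\delta_{i-1}$ for some integer $t \geq 1$. Here the append fact gives $d_F(\tau_i,\tau[v_1,v_i]) \leq d_F(\sigma_{i-1}, \tau[v_1,v_{i-1}])$, which I would bound by the triangle inequality through $\tau_{i-1}$: by Lemma~\ref{lem:epoch} together with Lemma~\ref{lem:quality}, $d_F(\sigma_{i-1},\tau_{i-1}) \leq (1+\eps)\delta_{i-1}$, and the induction hypothesis gives $d_F(\tau_{i-1},\tau[v_1,v_{i-1}]) \leq 2\eps\delta_{i-1}$, so $d_F(\sigma_{i-1},\tau[v_1,v_{i-1}]) \leq (1+3\eps)\delta_{i-1}$.

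The final step, and the only delicate one, is to verify $(1+3\eps)\delta_{i-1} \leq 2\eps\delta_i = 2\eps^{1-t}\delta_{i-1}$, i.e. $1+3\eps \leq 2\eps^{1-t}$. This is exactly where the hypothesis $\eps \leq 1/3$ enters: the tightest instance is $t=1$, where the requirement is $1+3\eps \leq 2$, equivalent to $\eps \leq 1/3$; for $t \geq 2$ the right-hand side is at least $2/\eps \geq 2 \geq 1+3\eps$, so the inequality is slack. This closes the induction and yields $d_F(\tau_i,\tau[v_1,v_i]) \leq 2\eps\delta_i$. I expect the main (modest) obstacle to be the bookkeeping of which of $\sigma_{i-1}$ or $\tau_{i-1}$ serves as the predecessor curve in each branch and of how $\delta_i$ relates to $\delta_{i-1}$; all the geometric content is carried by the append fact and the two cited lemmas.
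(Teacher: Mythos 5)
Your proof is correct and follows essentially the same route as the paper's: the same induction on $i$, the same base case, the same two cases keyed to whether {\sc Compress} was invoked, the same triangle inequality through $\tau_{i-1}$ using Lemma~\ref{lem:epoch} with the $(1+\eps)\delta_{i-1}$ quality bound, and the same final arithmetic exploiting $\delta_i \geq \eps^{-1}\delta_{i-1}$ and $\eps \leq 1/3$. The only cosmetic difference is that you package the paper's ``linear interpolation between $v_{i-1}v_i$ and the last edge of $\tau_i$'' step as an explicit append fact $d_F(A \circ p, B \circ p) \leq d_F(A,B)$, which is a clean way to state the same argument.
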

\begin{proof}
We prove the lemma by induction on $i$.  As $\tau_{2k-1} = \tau[v_1,v_{2k-1}]$, $d_F(\tau_{2k-1},\tau[v_1,v_{2k-1}])$ is zero.  Assume that the lemma is true for some $i-1 \geq 2k-1$.  There are two cases.
		
If $\delta_i = \delta_{i-1}$, then $\tau_i = \tau_{i-1} \circ (v_i)$.  Since $d_F(\tau_{i-1},\tau[v_1,v_{i-1}]) \leq 2\eps\delta_{i-1}$, the last vertex of $\tau_{i-1}$ is within a distance $2\eps\delta_{i-1}$ from $v_{i-1}$.   A linear interpolation between $v_{i-1}v_i$ and the last edge of $\tau_i$ shows that $d_F(\tau_i,\tau[v_1,v_i]) \leq d_F(\tau_{i-1},\tau[v_1,v_{i-1}]) \leq 2\eps\delta_{i-1} = 2\eps\delta_i$.

If $\delta_i > \delta_{i-1}$, then $\tau_i = \sigma_{i-1} \circ (v_i)$.  {\sc Compress} ensures that $\delta_i \geq \eps^{-1}\delta_{i-1}$. 
%Inductively, $d_F(\tau_{i-1},\tau[v_1,v_{i-1}]) \leq 2\eps\delta_{i-1}$. 
By Lemma~\ref{lem:epoch} and Theorem~\ref{thm: delta_main}, $d_F(\tau_{i-1},\sigma_{i-1}) \leq (1+\varepsilon)\delta_{i-1}$.  Therefore, $d_F(\sigma_{i-1},\tau[v_1,v_{i-1}]) \leq d_F(\tau_{i-1},\sigma_{i-1}) + d_F(\tau_{i-1},\tau[v_1,v_{i-1}]) \leq (1+3\eps)\delta_{i-1} \leq \eps(1+3\eps)\delta_i \leq 2\eps \delta_i$.  A linear interpolation between $v_{i-1}v_i$ and the last edge of $\tau_i$ 
gives $d_F(\tau_i,\tau[v_1,v_{i}]) \leq d_F(\sigma_{i-1},\tau[v_1,v_{i-1}]) \leq 2\eps\delta_i$.
\end{proof}

Recall that $\sigma^*_i$ is the curve of size $k$ at minimum Fr\'{e}chet distance from $\tau[v_1,v_i]$ and that $\delta_i^* = d_F(\sigma_i^*,\tau[v_1,v_i])$. Also, recall that for $i \geq 2k-1$, $r_i \geq 1$ and $s_i \geq 0$  are integers that satisfy the following inequalities:
\begin{equation}
\frac{1}{\eps^{s_i}}\delta_{\min} 
\leq 
\frac{(1+\varepsilon)^{r_i-1}}{\varepsilon^{s_i}}\delta_{\min}
\leq \delta_i^*
\leq 
\frac{(1+\varepsilon)^{r_i}}{\varepsilon^{s_i}}\delta_{\min}
\leq 
\frac{1}{\eps^{s_i+1}}\delta_{\min}.
\label{eq:def}
\end{equation}
By Lemma~\ref{lem: delta-min}, $\delta_{\min} \leq \delta_{2k-1}^*$.  Also, $\delta_{a}^* \leq \delta_b^*$ for all $a < b$ because a Fr\'{e}chet matching between $\sigma^*_b$ and $\tau[v_1,v_b]$ also matches $\tau[v_1,v_a]$ to a curve of size at most $k$.  Therefore, $\delta_{\min} \leq \delta^*_i$ for $i \geq 2k-1$, which implies that $r_i$ and $s_i$ are well defined for $i \geq 2k-1$.  Note that $r_i$ is an integer between 1 and $\lfloor \log_{1+\eps} (1/\eps)\rfloor$.
%independent of the value of $s_i$. 

%Next, we show that $\tau_i$ is a good approximation of $\tau[v_1,v_i]$ with respect to $\delta_i$.
%bound the Fr\'{e}chet distance between $\tau_i$ and $\tau[v_1,v_i]$.

%The last pieces of the puzzle are the values of $\alpha$ and $\beta$.    For $i \geq 2k-1$, let $\sigma_i^*$ denote the optimal solution of the $k$-simplification problem for $\tau[v_1,v_i]$, and let $\delta_i^* = d_F(\tau[v_1,v_i],\sigma_i^)$.  Let $\alpha_r = (1+\varepsilon)^r/(1-4\varepsilon)$ for any integer $r \geq 1$.  
	 
%We will show that there exists an integer $1 \leq r \leq \lceil \log_{1+\varepsilon} (1/\varepsilon)\rceil$ such that, if we set $\alpha = \alpha_r$ and $\beta = 1/\varepsilon$, then {\sc Reduce} maintains a curve $\sigma_i$ that satisfies $|\sigma_i| \leq 2k-2$ and $d_F(\sigma_i,\tau[v_1,v_i]) \leq \delta_i \leq (1+6\varepsilon)\delta^*_i$ for all $i$.  We do not know a quick way to tell the right choice of $r$ though.  Hence, we have  $\lceil \log_{1+\varepsilon} (1/\varepsilon)\rceil$ independent runs of {\sc Reduce} with $\alpha = \alpha_r$ and $\beta = 1/\varepsilon$ for every $1 \leq r \leq \lceil \log_{1+\varepsilon} (1/\varepsilon)\rceil$.  For every $i \geq 2k-1$, the run of {\sc Reduce} on $v_i$ that achieves the smallest $\delta_i$ gives a provably good approximate simplified curve $\sigma_i$.

\begin{lemma}\label{lem: deal_with_alpha}
For all $i \geq 2k-1$ and $\eps \in \bigl(0,\frac{1}{17})$, {\sc Reduce}$(r_i,\varepsilon,k)$ computes $\delta_i\le (1+8\varepsilon)\delta_i^*$.
\end{lemma}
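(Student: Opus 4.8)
The plan is to reduce the claim to a single invariant on the error estimates. Write $D = (1+\eps)^{r_i+1}(1-4\eps)^{-1}\delta_{\min}$, so that the initial tolerance in {\sc Reduce}$(r_i,\eps,k)$ is $\delta_{2k-2}=\eps D$ and, since {\sc Compress} only ever multiplies the current tolerance by a power of $\eps^{-1}$, every $\delta_j$ in the run has the form $\eps^{\,1-T_j}D$ for a non-negative integer $T_j$; the value $\eps^{-s_i}D$ is attained exactly at $T=s_i+1$. I claim it suffices to prove the invariant $\delta_j\le \eps^{-s_i}D$ for all $j\in[2k-1,i]$. Indeed, taking $j=i$ and combining $\delta_i\le \eps^{-s_i}D=(1+\eps)^{r_i+1}(1-4\eps)^{-1}\eps^{-s_i}\delta_{\min}$ with the lower bound $\delta_i^*\ge (1+\eps)^{r_i-1}\eps^{-s_i}\delta_{\min}$ from \eqref{eq:def} gives $\delta_i/\delta_i^*\le (1+\eps)^2/(1-4\eps)$, and an elementary calculation shows $(1+\eps)^2/(1-4\eps)\le 1+8\eps$ precisely when $33\eps^2\le 2\eps$, which holds for $\eps<\tfrac1{17}$.

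To prove the invariant I would induct on $j$, and since $\delta_j$ changes only at compress steps it is enough to control those. The engine of the argument is that $\eps^{-s_i}D$ is a \emph{sufficient} tolerance that behaves as a fixed point: \textbf{(a)} if $\delta_{j-1}=\eps^{-s_i}D$ then no compress occurs at step $j$; and \textbf{(b)} if $\delta_{j-1}<\eps^{-s_i}D$ (equivalently $T_{j-1}\le s_i$, i.e.\ $\delta_{j-1}\le \eps^{\,1-s_i}D$) and a compress does occur, then $\delta_j\le \eps^{-s_i}D$ still. Together with the fact that the tolerance is non-decreasing, (a) and (b) keep the run inside the invariant: a compress can fire only from a value at most one ``$\eps$-level'' below $\eps^{-s_i}D$, it never overshoots $\eps^{-s_i}D$, and once $\eps^{-s_i}D$ is reached it is never left. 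The base case $j=2k-1$ has $\tau_{2k-1}=\tau[v_1,v_{2k-1}]$ (no proxy error) and is immediate from (b).

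For \textbf{(a)} I would use that, by Lemma~\ref{lem:epoch} and Theorem~\ref{thm: delta_main}, the maintained curve equals the output of {\sc Simplify}$(\eps,\delta_{j-1})$ on $\tau_j$, so a compress forces $\kappa(\tau_j,\delta_{j-1})>k$. With $\delta_{j-1}=\eps^{-s_i}D$, Lemma~\ref{lem: deal_with_beta} gives $d_F(\tau_j,\tau[v_1,v_j])\le 2\eps\,\eps^{-s_i}D$, so an optimal size-$k$ curve for $\tau[v_1,v_j]$ (at distance $\delta_j^*\le\delta_i^*$) lies within $\delta_i^*+2\eps\,\eps^{-s_i}D\le \eps^{-s_i}D$ of $\tau_j$, the last step being $\delta_i^*\le(1-2\eps)\eps^{-s_i}D$, which follows from the upper bound on $\delta_i^*$ in \eqref{eq:def}. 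Hence $\kappa(\tau_j,\eps^{-s_i}D)\le k$, a contradiction. For \textbf{(b)} I would bound the threshold $\theta_j=\tfrac12\min\{d(x_\ell,x_{\ell-1}x_{\ell+1}):\ell\text{ even}\}$ computed by {\sc Compress} on $\tau_j=\sigma_{j-1}\circ(v_j)$. Writing $\delta^*(\gamma)=\min\{d_F(\sigma',\gamma):|\sigma'|\le k\}$, Lemma~\ref{lem:shortcut} gives $\theta_j\le(1+\eps)\hat\delta_j$ with $\hat\delta_j\le \delta^*(\tau_j)\le \delta_i^*+(1+3\eps)\delta_{j-1}$ (the proxy estimate coming from the proof of Lemma~\ref{lem: deal_with_beta}); feeding in $\delta_{j-1}\le \eps^{\,1-s_i}D$ and $(1+\eps)\delta_i^*\le(1-4\eps)\eps^{-s_i}D$ yields $\theta_j\le\big(1-3\eps+4\eps^2+3\eps^3\big)\eps^{-s_i}D\le \eps^{-s_i}D$ for $\eps<\tfrac1{17}$. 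Since {\sc Compress} picks the smallest power $\eps^{-t}$ with $\eps^{-t}\delta_{j-1}\ge\theta_j$, and $\eps^{-s_i}D$ is itself such a power of $\delta_{j-1}$ that dominates $\theta_j$, minimality gives $\delta_j\le \eps^{-s_i}D$.

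The main obstacle is the potential $\eps^{-1}$ blow-up of a single {\sc Compress}: minimality alone only guarantees $\delta_j<\eps^{-1}\theta_j$, and $\theta_j$ is of order $\delta_i^*$, so one compress in isolation could inflate the estimate by a factor $\eps^{-1}$ over $\delta_i^*$. The point of the tuned parameter $r=r_i$ is that the geometric grid $\{\eps^{\,1-T}D\}$ then contains the value $\eps^{-s_i}D$, which by \eqref{eq:def} sandwiches $\delta_i^*$ to within a $(1+O(\eps))$ factor; the fixed-point facts (a) and (b) are exactly what convert ``minimality on this aligned grid'' into a non-overshooting invariant, turning the per-step $\eps^{-1}$ slack into a harmless rounding that never crosses the aligned level. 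The delicate points to get right are the self-consistent use of the proxy-quality bound (Lemma~\ref{lem: deal_with_beta}) at the two different tolerances arising in (a) and (b), and verifying that every constant closes under the single hypothesis $\eps<\tfrac1{17}$.
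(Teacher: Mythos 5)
Your proposal is correct, and it rests on exactly the same ingredients as the paper's proof: the threshold bound $\theta_j \le (1+\eps)\hat\delta_j \le (1+\eps)\bigl(\delta_i^* + (1+3\eps)\delta_{j-1}\bigr)$ via Lemma~\ref{lem:shortcut} and Theorem~\ref{thm: delta_main}, the proxy-error estimate from Lemma~\ref{lem: deal_with_beta} together with Lemma~\ref{lem:epoch}, and the observation that all tolerances live on the geometric grid $\{\eps^{1-T}D\}$ so that minimality of $t$ in {\sc Compress} prevents overshooting the aligned level $\eps^{-s_i}D$. What differs is the bookkeeping. The paper pins down the last index $a$ with $\delta_a \le \eps(1-4\eps)^{-1}\Delta$, proves that the compress at step $a+1$ lands \emph{exactly} on $(1-4\eps)^{-1}\Delta$ (both $t \le h$ and, using the maximality of $a$, $t \ge h$), and then shows no further compress ever fires, so $\delta_i = (1-4\eps)^{-1}\Delta$. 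You instead run a forward induction on the invariant $\delta_j \le \eps^{-s_i}D$, split into your facts (a) (at the top level no compress fires --- this is the paper's ``for every $b \in [a+2,i]$'' argument) and (b) (a compress from below never overshoots --- this is the paper's ``$t$ cannot exceed $h$''). Your organization dispenses with the exactness claim entirely, since an upper-bound invariant suffices for the final arithmetic $(1+\eps)^2(1-4\eps)^{-1} \le 1+8\eps$ for $\eps < \frac{1}{17}$ (indeed $\eps \le \frac{2}{33}$), and it handles repeated or multi-level compresses uniformly rather than through the maximality of $a$; the paper's version, in exchange, yields the sharper statement that $\delta_i$ equals $(1-4\eps)^{-1}\Delta$ once any compress has occurred. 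One small point of care in your write-up: in fact (a) the compress decision at step $j$ is made relative to $\tau_{j-1}\circ(v_j)$ (not the post-compress $\tau_j = \sigma_{j-1}\circ(v_j)$), so the bound $d_F(\tau_{j-1}\circ(v_j),\tau[v_1,v_j]) \le 2\eps\delta_{j-1}$ you need is Lemma~\ref{lem: deal_with_beta} at index $j-1$ plus one linear interpolation on the last edge, exactly as in the first case of that lemma's proof; this is a notational slip, not a gap.
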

\begin{proof}
%By Lemma~\ref{lem: delta-min}, $\delta_{\text{min}} \leq \delta_{2k-1}^*$, which  is at most $\delta_i^*$ for all $i\ge 2k-1$.  
%Let $s \geq 0$ be an integer such that $\varepsilon^{-s}\delta_{\min} \leq \delta_i^* \leq \varepsilon^{-s-1}\delta_{\min}$.   
%Starting from $\varepsilon^{-s}\delta_{\min}$, we can repeatedly multiply the factor $1+\varepsilon$ until we reach or exceed $\delta_i^*$.  Therefore, there exist an integer $r \geq 1$ such that 
%\begin{equation}
%(1+\varepsilon)^{r-1} \varepsilon^{-s}\delta_{\min} \leq \delta_i^* \leq (1+\varepsilon)^{r}\varepsilon^{-s}\delta_{\min}.  
%\label{eq:alpha}
%\end{equation}
%We must find the right $r$ at or before the inequality $(1+\varepsilon)^{r} \varepsilon^{-s}\delta_{\min} \geq \varepsilon^{-s-1}\delta_{\min}$ holds.  Therefore, $r$ cannot exceed $\lceil \log_{1+\varepsilon} (1/\varepsilon)\rceil$.  
%We show that $\hat{\delta}_i \leq (1+\eps)^{r_i}\varepsilon^{-s_i}\delta_{\min}$ for all $i \geq 2k-1$.m  Assume to the contrary $\hat{\delta}_i > (1+\eps)^{r_i}\varepsilon^{-s_i}\delta_{\min}$ for some $i$.  Without loss of generality, we assume that $i$ is the minimum such index.  
We prove that {\sc Reduce}$(r_i, \eps, k)$ computes $\delta_i \leq \eps^{-s_i}(1+\eps)^{r_i+1}(1-4\eps)^{-1}\delta_{\min}$.
By~\eqref{eq:def}, this is at most $(1+\eps)^2(1-4\eps)^{-1}\delta^*_i \leq (1+8\eps)\delta_i^*$ for $\eps \leq 1/17$.

Consider the case of $i = 2k-1$.  We examine the call {\sc Reduce}$(r_{2k-1},\eps,k)$.  
%By~\eqref{eq:def}, $\eps^{-s_{2k-1}}(1+\eps)^{r_{2k-1}-1}\delta_{\min} \leq \delta_{2k-1}^* \leq \eps^{-s_{2k-1}}(1+\eps)^{r_{2k-1}}\delta_{\min}$.
By Theorem~\ref{thm: delta_main}, calling {\sc Simplify}$(\eps,\delta^*_{2k-1})$ on $\tau_{2k-1} = \tau[v_1,v_{2k-1}]$ produces a curve of size at most $2k-2$.  Then, Lemma~\ref{lem:shortcut} and~\eqref{eq:def} imply that %$\eps^{-s_{2k-1}}(1+\eps)^{r_{2k-1}-1}\delta_{\min} \leq \delta^*_{2k-1} \leq 
$\frac{1}{2}\min\bigl\{d(v_j,v_{j-1}v_{j+1}) : j \in [2,2k-2], \, \text{$j$ is even}\bigr\} \leq 
(1+\eps)\delta^*_{2k-1} \leq \eps^{-s_{2k-1}}(1+\eps)^{r_{2k-1}+1}\delta_{\min}$. 
Since the input parameter $r$ of {\sc Reduce} is equal to $r_{2k-1}$, line~\ref{alg:1-start} of {\sc Reduce} sets $\delta_{2k-2} = \eps(1+\eps)^{r_{2k-1}+1}(1-4\eps)^{-1}\delta_{\min}$.  
%For $\eps \leq 1/17$, we have $\eps^{-1} > (1+\eps)^2(1-4\eps)^{-1}$, and so $\eps^{1-s_{2k-1}}(1+\eps)^{r_{2k-1}+1}(1-4\eps)^{-1}\delta_{\min} < \eps^{-s_{2k-1}}(1+\eps)^{r_{2k-1}-1}\delta_{\min} \leq \frac{1}{2}\min\bigl\{d(v_j,v_{j-1}v_{j+1}) : j \in [2,2k-2], \, \text{$j$ is even}\bigr\}$.  
As a result, line~\ref{alg:shortcut} of {\sc Compress} covers $\eps^{-i}\delta_{2k-2}$ for $i \geq 1$, which are $\eps^{-j}(1+\eps)^{r_{2k-1}+1}(1-4\eps)^{-1}\delta_{\min}$ for $j \geq 0$.  Therefore, line~\ref{alg:shortcut} of {\sc Compress} must set $\delta_{2k-1}$ to a value no more than $\eps^{-s_{2k-1}}(1+\eps)^{r_{2k-1}+1}(1-4\eps)^{-1}\delta_{\min}$ because this quantity is greater than $\frac{1}{2}\min\bigl\{d(v_j,v_{j-1}v_{j+1}) : j \in [2,2k-2], \, \text{$j$ is even}\bigr\}$.  The base case is thus taken care of.

Consider an index $i > 2k-1$.  We examine the call {\sc Reduce}$(r_i,\eps,k)$.  Define:
\begin{equation}
\Delta = \eps^{-s_i}(1+\eps)^{r_i+1} \delta_{\min}.  
\label{eq:def0}
\end{equation}
Our goal is to prove that $\delta_i \leq (1-4\eps)^{-1}\Delta$.
We rewrite the middle inequalities in \eqref{eq:def} as
\begin{equation}
(1+\eps)^{-2}\Delta \leq \delta_i^* \leq (1+\eps)^{-1}\Delta.
\label{eq:def2}
\end{equation}
Define:
\begin{equation}
a = \max\bigl\{j \in [2k-2,i] : \delta_j \leq \eps(1-4\eps)^{-1}\Delta\bigr\}.
\label{eq:def3}
\end{equation}
The index $a$ is well defined because one can verify that $\delta_{2k-2} = \eps(1+\eps)^{r_i+1}(1-4\eps)^{-1}\delta_{\min} \leq \eps(1-4\eps)^{-1}\Delta$.
%We claim that $a$ is well defined.  Otherwise, $\delta_{2k-2} > \eps(1-4\eps)^{-1}\Delta$.  Since the input parameter $r$ of {\sc Reduce} is equal to $r_i$, line~\ref{alg:1-start} of {\sc Reduce} gives $\eps(1+\eps)^{r_i+1}(1-4\eps)^{-1}\delta_{\min} = \delta_{2k-2} > \eps(1-4\eps)^{-1}\Delta = (1+\eps)^{r_i+1}\eps^{1-s_i}(1-4\eps)^{-1}\delta_{\min}$. But it is impossible that $\eps(1+\eps)^{r_i+1}(1-4\eps)^{-1}\delta_{\min} >(1+\eps)^{r_i+1}\eps^{1-s_i}(1-4\eps)^{-1}\delta_{\min}$ for any $s_i \geq 0$.  This proves our claim.

%If $a = i$, we have $\delta_i \leq \eps(1-4\eps)^{-1}\Delta$, so we are done.  Suppose that $a < i$.  Then,
By \eqref{eq:def3}, \eqref{eq:def0}, line~\ref{alg:1-start} of {\sc Reduce}, and line~\ref{alg:shortcut} of {\sc Compress}, $\delta_a = \eps^{h}(1-4\eps)^{-1}\Delta$ for some integer $h \geq 1$.  If $a=i$, then $\delta_i = \delta_a \leq \eps(1-4\eps)^{-1}\Delta$, so we are done.  Suppose that $a < i$.  Note that $\delta_{a+1} \not= \delta_a$ by \eqref{eq:def3}, which implies that $\delta_{a+1} > \delta_a$.  So $\tau_{a+1} = \sigma_{a} \circ (v_{a+1})$, and {\sc Compress}$(\tau_{a+1},\eps,k,\delta_{a})$ is called to produce $\sigma_{a+1}$ and $\delta_{a+1}$.  We have
\begin{eqnarray}
d_F(\sigma^*_{a+1},\tau_{a+1}) & \leq &
d_F(\sigma^*_{a+1},\tau[v_1,v_{a+1}]) + d_F(\tau_{a+1},\tau[v_1,v_{a+1}]) \nonumber \\
& = & \delta_{a+1}^* + d_F(\tau_{a+1},\tau[v_1,v_{a+1}]) \nonumber \\
&\leq & \delta_i^* + d_F(\tau_{a+1},\tau[v_1,v_{a+1}]). \label{eq:0}
\end{eqnarray}
%We claim that $d_F(\tau_{a+1},\tau[v_1,v_{a+1}]) \leq (1+3\eps)\delta_a$.
%If $\tau_{a+1} = \tau_a \circ (v_{a+1})$, Lemma~\ref{lem: deal_with_beta} implies that $d_F(\tau_{a+1},\tau[v_1,v_{a+1}]) \leq d_F(\tau_a,\tau[v_1,v_a]) \leq 2\eps\delta_a$.  The other case is that 

If $a+1 = 2k-1$, then $\tau_{a+1} = \tau_{2k-1} = \tau[v_1,v_{2k-1}]$, so $d_F(\tau_{a+1},\tau[v_1,v_{a+1}]) = 0$.  Suppose that $a+1 > 2k-1$.  By Lemma~\ref{lem:epoch} and Theorem~\ref{thm: delta_main}, $d_F(\sigma_{a},\tau_{a}) \leq (1+\eps)\delta_{a}$.  By Lemma~\ref{lem: deal_with_beta}, $d_F(\tau_{a},\tau[v_1,v_{a}]) \leq 2\eps\delta_{a}$.  So $d_F(\tau_{a+1},\tau[v_1,v_{a+1}]) \leq d_F(\sigma_{a},\tau[v_1,v_{a}]) \leq d_F(\sigma_{a},\tau_{a}) + d_F(\tau_{a},\tau[v_1,v_{a}]) \leq (1+3\eps)\delta_{a}$.  Since $\delta_{a} \leq \eps(1-4\eps)^{-1}\Delta$, 
%we 
%get $d_F(\tau_{a+1},\tau[v_1,v_{a+1}]) \leq \eps(1+3\eps)(1-4\eps)^{-1}\Delta$.  
we conclude that
\begin{equation}
d_F(\tau_{a+1},\tau[v_1,v_{a+1}]) \leq \eps(1+3\eps)(1-4\eps)^{-1}\Delta.  
\label{eq:-1}
\end{equation}
Plugging \eqref{eq:-1} into \eqref{eq:0} and using \eqref{eq:def2}, we obtain the following inequality for $\eps < 1/17$:
\begin{equation}
d_F(\sigma_{a+1}^*,\tau_{a+1}) \leq \left(\frac{1}{1+\eps} + \frac{\eps + 3\eps^2}{1-4\eps}\right)\Delta \leq \frac{\Delta}{(1+\eps)(1-4\eps)}.
\label{eq:1}
\end{equation}
%This shows that $\frac{1}{2}\min\{d(x_j,x_{j-1}x_{j+1}) : \text{interior vertex $x_j$ of $\tau_{a+1}$}\} \leq \Delta/(1-4\eps)$ by Lemma~\ref{lem: delta-min}.  

Consider the call of {\sc Simplify} in {\sc Compress}$(\tau_{a+1},\eps,k,\delta_{a})$.  By~\eqref{eq:1} and Theorem~\ref{thm: delta_main}, calling {\sc Simplify} on $\tau_{a+1}$ with an error tolerance of $\frac{1}{(1+\eps)(1-4\eps)}\Delta$ will produce a curve of size at most $2k-2$.  Then, by Lemma~\ref{lem:shortcut}, $\frac{1}{2}\min\bigl\{d(v_j,v_{j-1}v_{j+1}) : j \in [2,2k-2], \, \text{$j$ is even}\bigr\} \leq (1+\eps) \cdot \frac{1}{(1+\eps)(1-4\eps)}\Delta = (1-4\eps)^{-1}\Delta$.  Line~\ref{alg:shortcut} of {\sc Compress} ensures that $\delta_{a+1} = \eps^{-t}\delta_{a} = \eps^{h-t}(1-4\eps)^{-1}\Delta$ for the smallest integer $t \geq 1$ such that $\delta_{a+1} \geq \frac{1}{2}\min\bigl\{d(v_j,v_{j-1}v_{j+1}) : j \in [2,2k-2], \, \text{$j$ is even}\bigr\}$.  Hence, $t$ cannot be greater than $h$. But $t$ cannot be less than $h$ because $\delta_{a+1} > \eps(1-4\eps)^{-1}\Delta$ by \eqref{eq:def3}.  So $\delta_{a+1} = (1-4\eps)^{-1}\Delta$.
For every $b \in [a+2,i]$,
\begin{eqnarray*}
d_F(\sigma^*_b,\tau_{a+1} \circ \tau[v_{a+2},v_b]) & \leq &
d_F(\sigma^*_b,\tau[v_1,v_b]) + d_F(\tau_{a+1} \circ \tau[v_{a+2},v_b],\tau[v_1,v_b])  \\
& \leq & \delta_b^* + d_F(\tau_{a+1},\tau[v_1,v_{a+1}])  \\
& \leq & \delta_i^* + \eps(1+3\eps)(1-4\eps)^{-1}\Delta 
\quad\quad\quad~~~(\because \eqref{eq:-1}) \\
& < & (1-4\eps)^{-1}\Delta  \quad\quad\quad\quad\quad~~~~ (\because \eqref{eq:def2} \,\, \text{and} \,\, \eps \leq 1/17) \\
& = & \delta_{a+1}.
\end{eqnarray*}
Then, when processing $v_{a+2}$, Lemma~\ref{lem:epoch} and Theorem~\ref{thm: delta_main} imply that {\sc Reduce} executes lines~\ref{alg:2-start}--\ref{alg:2-end} with $\delta_{a+2} = \delta_{a+1} = (1-4\eps)^{-1}\Delta$ to produce $\sigma_{a+2}$ of size at most $2k-2$.  
%Therefore, {\sc Compress} is not called and $\delta_{a+2} = (1-4\eps)^{-1}\Delta$.  
Applying this reasoning inductively gives
%$\tau_i = \tau_{a+1} \circ \tau[v_{a+2},v_i]$, and 
$\delta_i = (1-4\eps)^{-1}\Delta$.
\end{proof}

Our main result on streaming $k$-simplification are as follows.
 	
\begin{theorem}\label{thm: k_main}
Let $\tau$ be a polygonal curve in $\mathbb{R}^d$ that arrives in a data stream.  Let $\alpha  = 2(d-1)\lfloor d/2 \rfloor^2 + d$.
%and let $\beta = 2(d-1)\lfloor d/2 \rfloor^3 + d$. 
There is a streaming algorithm that, for any integer $k \geq 2$ and any $\eps \in (0,\frac{1}{17})$, maintains a curve $\sigma$ such that $|\sigma| \leq 2k-2$ and $d_F(\tau, \sigma)\le (1+\varepsilon) \cdot \min\{d_F(\tau,\sigma') : |\sigma'| \leq k\}$.  It uses $O\bigl((k\eps^{-1}+\varepsilon^{-(\alpha+1)})\log \frac{1}{\varepsilon}\bigr)$ working storage and processes each vertex of $\tau$ in $O(k\eps^{-(\alpha+1)}\log^2\frac{1}{\eps})$ time for $d \in \{2,3\}$ and $O(k\eps^{-(\alpha+1)}\log\frac{1}{\eps})$ time for $d \geq 4$.
\end{theorem}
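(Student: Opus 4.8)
The plan is to run, in parallel, $L = \lfloor \log_{1+\eps}\frac{1}{\eps}\rfloor = O(\eps^{-1}\log\frac{1}{\eps})$ independent copies of {\sc Reduce}, namely one copy {\sc Reduce}$(r,\eps,k)$ for each integer $r \in [1,L]$, feeding every arriving vertex $v_i$ to all copies. After processing $v_i$, I output the curve $\sigma_i$ held by the copy whose current error estimate $\delta_i$ is smallest. For $i \leq 2k-2$ the prefix $\tau[v_1,v_i]$ itself serves as the output, so both claimed properties are immediate; thus assume $i \geq 2k-1$. Every copy keeps its curve at size at most $2k-2$ at all times, because {\sc Reduce} only appends a vertex when the current size is below $2k-2$ and otherwise invokes {\sc Compress}, which returns a curve of size at most $2k-2$ by Lemma~\ref{lem:shortcut}. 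Hence $|\sigma| \leq 2k-2$ no matter which copy is selected.

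For the quality bound, I would first show that \emph{every} copy satisfies $d_F(\sigma_i,\tau[v_1,v_i]) \leq (1+3\eps)\delta_i$, independently of its parameter $r$. By Lemma~\ref{lem:epoch} and Theorem~\ref{thm: delta_main}, $d_F(\sigma_i,\tau_i) \leq (1+\eps)\delta_i$, and by Lemma~\ref{lem: deal_with_beta} (whose hypothesis $\eps \leq 1/3$ is met since $\eps < 1/17$), $d_F(\tau_i,\tau[v_1,v_i]) \leq 2\eps\delta_i$; the triangle inequality then gives the stated bound. Next, since $r_i \in [1,L]$, one copy runs with the correct parameter $r = r_i$, and Lemma~\ref{lem: deal_with_alpha} guarantees that this copy computes $\delta_i \leq (1+8\eps)\delta_i^*$. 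Consequently the selected copy, having the minimum $\delta_i$, also satisfies $\delta_i \leq (1+8\eps)\delta_i^*$, so its output obeys $d_F(\sigma_i,\tau[v_1,v_i]) \leq (1+3\eps)(1+8\eps)\delta_i^*$. Running the whole scheme with $\eps$ scaled down by a fixed constant replaces $(1+3\eps)(1+8\eps)$ by $(1+\eps)$, which does not affect the asymptotic resource bounds and yields the claimed approximation.

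The resource bounds are obtained by multiplying the per-copy costs by $L = O(\eps^{-1}\log\frac{1}{\eps})$. Each copy stores its simplified curve of at most $2k-2$ vertices, costing $O(k)$, together with the set $P$ and the arrays $S_i[\cdot]$; by the complexity estimates established in the proof of Theorem~\ref{thm: delta_main}, $|P| = O(\eps^{-d})$ and each $S_i[p]$ has complexity $O(\eps^{-2(d-1)\lfloor d/2\rfloor^2})$, for a per-copy total of $O(\eps^{-\alpha})$ since $\alpha = 2(d-1)\lfloor d/2\rfloor^2 + d$. Summing over the $L$ copies gives the working storage $O((k\eps^{-1}+\eps^{-(\alpha+1)})\log\frac{1}{\eps})$. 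For the processing time, the dominant case in each copy is an invocation of {\sc Compress}, which runs {\sc Simplify} on a curve of $2k-1$ vertices and performs $O(k)$ bookkeeping for the threshold of line~\ref{alg:shortcut}; by Theorem~\ref{thm: delta_main} this costs $O(k\eps^{-\alpha}\log\frac{1}{\eps})$ for $d \in \{2,3\}$ and $O(k\eps^{-\alpha})$ for $d \geq 4$. Multiplying by $L$ yields the worst-case per-vertex times $O(k\eps^{-(\alpha+1)}\log^2\frac{1}{\eps})$ and $O(k\eps^{-(\alpha+1)}\log\frac{1}{\eps})$, respectively.

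I expect the crux to be the decoupling argument in the second paragraph: because selection is by minimum $\delta_i$ rather than by the (unknown) correct parameter $r_i$, one must be sure that a copy with a small $\delta_i$ but an ``incorrect'' $r$ still yields a curve whose Fr\'echet error is controlled by its own $\delta_i$. The uniform per-copy bound $d_F(\sigma_i,\tau[v_1,v_i]) \leq (1+3\eps)\delta_i$, valid for all $r$, is precisely what licenses this selection; everything else reduces to the already-proved Lemmas~\ref{lem:epoch}, \ref{lem: deal_with_beta}, and~\ref{lem: deal_with_alpha} together with the storage and time estimates of Theorem~\ref{thm: delta_main}.
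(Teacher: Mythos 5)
Your proposal is correct and follows essentially the same route as the paper: parallel runs of {\sc Reduce} for every $r$ between $1$ and $\lfloor\log_{1+\eps}\frac{1}{\eps}\rfloor$, selection of the run with minimum $\delta_i$, the per-run guarantees via Lemmas~\ref{lem:epoch}, \ref{lem: deal_with_beta}, and~\ref{lem: deal_with_alpha}, and a constant rescaling of $\eps$ (the paper uses $\eps/10$) to absorb the accumulated approximation factors into $(1+\eps)$, with identical accounting of storage and per-vertex time. Your explicit uniform bound $d_F(\sigma_i,\tau[v_1,v_i]) \leq (1+3\eps)\delta_i$ for every run regardless of $r$, which is what legitimizes selecting by minimum $\delta_i$, is left implicit in the paper's proof but is exactly the intended argument.
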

\begin{proof}
Let $\delta^* = \min\{d_F(\tau,\sigma') : |\sigma'| \leq k \}$.
By 
%Lemma~\ref{lem:epoch}, Corollary~\ref{cor: static_delta_simplification}, and 
Lemma~\ref{lem: deal_with_alpha}, if we launch $\lfloor \log_{1+\eps/10} (10/\eps)\rfloor = O(\frac{1}{\eps}\log\frac{1}{\eps})$ runs of {\sc Reduce} using $\eps/10$ instead of $\eps$, there is a run that outputs a curve $\sigma$ such that $d_F(\tau,\sigma) \leq (1+\eps/10)(1+8\eps/10)\delta^* \leq (1+\eps)\delta^*$.  

Each run of {\sc Reduce} uses $O(k)$ space to maintain the output curve and uses another $O(\eps^{-\alpha})$ working storage by Theorem~\ref{thm: delta_main}.  This gives a total of $O\bigl((k\eps^{-1}+\varepsilon^{-(\alpha+1)})\log \frac{1}{\varepsilon}\bigr)$ working storage over all runs.  The vertex processing time is the highest when {\sc Compress} is called to simplify a curve of size $2k-1$.  By Theorem~\ref{thm: delta_main}, this call takes $O(k\eps^{-\alpha}\log\frac{1}{\eps})$ time for $d \in \{2,3\}$ and $O(k\eps^{-\alpha})$ time for $d \geq 4$.  Summing over all $O(\frac{1}{\eps}\log\frac{1}{\eps})$ runs gives the total processing time.
\end{proof}

\section{Conclusion}

We present streaming algorithms for the $\delta$-simplification and $k$-simplification problems.  Both use little working storage and process the next vertex in the stream efficiently.  Moreover, they offer provable guarantees on the size of the output curve and the Fr\'{e}chet distance between the input and output curves.  There are some natural research questions.  In this work, the vertices of the output curve may not be a subset of those of the input curve.  What guarantees can be obtained if this requirement is imposed?  Can other curve proximity problems be solved efficiently in the streaming setting?

\bibliographystyle{plain}
\bibliography{ref}

\begin{thebibliography}{10}

\bibitem{abam2007streaming}
M.A. Abam, M.~de~Berg, P.~Hachenberger, and A.~Zarei.
\newblock Streaming algorithms for line simplification.
\newblock In {\em Proceedings of the Annual Symposium on Computational
  Geometry}, pages 175--183, 2007.

\bibitem{abamDCG}
M.A. Abam, M.~de~Berg, P.~Hachenberger, and A.~Zarei.
\newblock Streaming algorithms for line simplification.
\newblock {\em Discrete and Computational Geometry}, 43:497--515, 2010.

\bibitem{agarwal2005near}
P.K. Agarwal, S.~Har-Peled, N.H. Mustafa, and Y.~Wang.
\newblock Near-linear time approximation algorithms for curve simplification.
\newblock {\em Algorithmica}, 42(3):203--219, 2005.

\bibitem{bereg2008simplifying}
S.~Bereg, M.~Jiang, W.~Wang, B.~Yang, and B.~Zhu.
\newblock Simplifying 3{D} polygonal chains under the discrete {F}r{\'e}chet
  distance.
\newblock In {\em Proceedings of the Latin American Symposium on Theoretical
  Informatics}, pages 630--641, 2008.

\bibitem{bringmann2019polyline}
K.~Bringmann and B.R. Chaudhury.
\newblock Polyline simplification has cubic complexity.
\newblock In {\em Proceedings of the International Symposium on Computational
  Geometry}, pages 18:1--18:16, 2019.

\bibitem{CWT2006}
H.~Cao, O.~Wolfson, and G.~Trajcevski.
\newblock Spatio-temporal data reduction with deterministic error bounds.
\newblock {\em The VLDB Journal}, 15(3):211--228, 2006.

\bibitem{C1993}
B.~Chazelle.
\newblock An optimal convex hull algorithm in any fixed dimension.
\newblock {\em Discrete \& Computational Geometry}, 10:377--409, 1993.

\bibitem{CXF2012}
M.~Chen, M.~Xu, and P.~Fr\"{a}nti.
\newblock A fast $o(n)$ multiresolution polygonal approximation algorithm for
  gps trajectory simplification.
\newblock {\em IEEE Transactions on Image Processing}, 21(5):2770--2785, 2012.

\bibitem{cheng2023curve}
S.-W. Cheng and H.~Huang.
\newblock Curve simplification and clustering under {F}r{\'e}chet distance.
\newblock In {\em Proceedings of the ACM-SIAM Symposium on Discrete
  Algorithms}, pages 1414--1432, 2023.

\bibitem{cheng2023solving}
S.-W. Cheng and H.~Huang.
\newblock Solving {F}r\'echet distance problems by algebraic geometric methods.
\newblock In {\em Proceedings of the ACM-SIAM Symposium on Discrete
  Algorithms}, pages 4502--4513, 2024.

\bibitem{driemel2016clustering}
A.~Driemel, A.~Krivo{\v{s}}ija, and C.~Sohler.
\newblock Clustering time series under the {F}r{\'e}chet distance.
\newblock In {\em Proceedings of the ACM-SIAM Symposium on Discrete
  Algorithms}, pages 766--785, 2016.

\bibitem{driemel2019sublinear}
A.~Driemel, I.~Psarros, and M.~Schmidt.
\newblock Sublinear data structures for short {F}r\'{e}chet queries, 2019.
\newblock arXiv:1907.04420.

\bibitem{FF2023}
A.~Filtser and O.~Filtser.
\newblock Static and streaming data structures for {F}r\'{e}chet distance
  queries.
\newblock {\em ACM Transactions on Algorithms}, 19(4):39:1--39:36, 2023.

\bibitem{GFX2024}
Y.~Gao, Z.~Fang, J.~Xu, S.~Gong, C.~Shen, and L.~Chen.
\newblock An efficient and distributed framework for real-time trajectory
  stream clustering.
\newblock {\em IEEE Transactions on Knowledge and Data Engineering},
  36(5):1857--1873, 2024.

\bibitem{Godau1991ANM}
M.~Godau.
\newblock A natural metric for curves - computing the distance for polygonal
  chains and approximation algorithms.
\newblock In {\em Proceedings of the Annual Symposium on Theoretical Aspects of
  Computer Science}, pages 127--136, 1991.

\bibitem{guibas1993approximating}
L.J. Guibas, J.E. Hershberger, J.S.B. Mitchell, and J.S. Snoeyink.
\newblock Approximating polygons and subdivisions with minimum-link paths.
\newblock {\em International Journal of Computational Geometry \&
  Applications}, 3(4):383--415, 1993.

\bibitem{LJMZH2019}
X.~Lin, J.~Jiang, S.~Ma, Y.~Zuo, and C.~Hu.
\newblock One-pass trajectory simplification using the synchronous {E}uclidean
  distance.
\newblock {\em The VLDB Journal}, 28:897--921, 2018.

\bibitem{LMJHW2021}
X.~Lin, S.~Ma, J.~Jiang, Y.~Hou, and T.~Wo.
\newblock Error bounded line simplification algorithms for trajectory
  compression: an experimental evaluation.
\newblock {\em ACM Transactions on Database Systems}, 46(3):11:1--11:44, 2021.

\bibitem{LMZWH2017}
X.~Lin, S.~Ma, H.~Zhang, T.~Wo, and J.~Huai.
\newblock One-pass error bounded trajectory simplification.
\newblock In {\em Proceedings of the VLDB Endowment}, volume~10, pages
  841--852, 2017.

\bibitem{LZSSKJ2015}
J.~Liu, K.~Zhao, P.~Sommer, S.~Shang, B.~Kusy, and R.~Jurdak.
\newblock Bounded quadrant system: error-bounded trajectory compression on the
  go.
\newblock In {\em Proceedings of the IEEE International Conference on Data
  Engineering}, pages 987--998, 2015.

\bibitem{MHPLPR2011}
J.~Muckell, J.-H. Hwang, V.~Patil, C.T. Lawson, F.~Ping, and S.~Ravi.
\newblock {SQUISH:} an online approach for {GPS} trajectory compression.
\newblock In {\em Proceedings of the International Conference on Computing for
  Geospatial Research \& Applications}, pages 13:1--13:8, 2011.

\bibitem{MOHLR2014}
J.~Muckell, P.W. Olsen, J.-H. Hwang, C.T. Lawson, and S.~Ravi.
\newblock Compression of trajectory data: a comprehensive evaluation and new
  approach.
\newblock {\em GeoInformatica}, 18(3):435--460, 2014.

\bibitem{M2005}
S.~Muthukrishnan.
\newblock {\em Data Streams: Algorithms and Applications}.
\newblock now Publishers Inc., 2005.

\bibitem{PPS2006}
M.~Potamias, K.~Patroumpas, and T.~Sellis.
\newblock Sampling trajectory streams with spatiotemporal criteria.
\newblock In {\em Proceedings of the IEEE International Conference on
  Scientific and Statistical Database Management}, pages 275--284, 2006.

\bibitem{van2019global}
M.~van~de Kerkhof, I.~Kostitsyna, M.~L{\"o}ffler, M.~Mirzanezhad, and C.~Wenk.
\newblock Global curve simplification.
\newblock In {\em Proceedings of the European Symposium on Algorithms}, pages
  67:1--67:14, 2019.

\bibitem{VO2023}
T.~van~der Horst and T.~Ophelders.
\newblock Computing minimum complexity 1{D} curve simplifications under the
  {F}r\'{e}chet distance.
\newblock In {\em The 39th European Workshop on Computational Geometry}, 2023.

\bibitem{van2018optimal}
M.~van Kreveld, M.~L{\"o}ffler, and L.~Wiratma.
\newblock On optimal polyline simplification using the {H}ausdorff and
  {F}r{\'e}chet distance.
\newblock In {\em Proceedings of the International Symposium on Computational
  Geometry}, pages 56:1--56:14, 2018.

\bibitem{WLCZ2021}
Z.~Wang, C.~Long, G.~Cong, and Q.~Zhang.
\newblock Error-bounded online trajectory simplification with multi-agent
  reinforcement learning.
\newblock In {\em Proceedings of the ACM SIGKDD Conference on Knowledge
  Discovery \& Data Mining}, pages 1758--1768, 2021.

\bibitem{ZDY2018}
D.~Zhang, M.~Ding, D.~Yang, Y.~Liu, J.~Fan, and H.T. Shen.
\newblock Trajectory simplification: an experimental study and quality
  analysis.
\newblock In {\em Proceedings of the VLDB Endowment}, volume~11, pages
  934--946, 2018.

\end{thebibliography}

\end{document}